\documentclass{lmcs}
\pdfoutput=1
\usepackage[utf8]{inputenc}

\usepackage{lastpage}
\lmcsdoi{20}{3}{27}
\lmcsheading{}{\pageref{LastPage}}{}{}%
{Apr.~17,~2023}{Sep.~30,~2024}{}

\pdfoutput=1
\usepackage{macros}

\keywords{Kleene algebra with tests; closed languages; relation algebra; guarded strings}

\begin{document}

\title[Completeness Theorems for Kleene algebra with tests and top]%
{Completeness Theorems\texorpdfstring{\\}{} for Kleene Algebra with tests and top}%
\titlecomment{%
  This paper is an extended version of the paper with a similar title which appeared in Proc.\ CONCUR'22~\cite{pw:concur22:katop}; we summarise the additions at the end of the introduction.
}
\thanks{%
  This work has been supported by the ERC (CoVeCe, grant No 678157), by the LABEX MILYON (ANR-10-LABX-0070), within the program ANR-11-IDEX-0007, and by the project ‘Mode(l)s of Verification and Monitorability’ (MoVeMent) (grant No 217987) of the Icelandic Research Fund.
}

\author[D.~Pous]{Damien Pous\lmcsorcid{0000-0002-1220-4399}}[a]
\author[J.~Wagemaker]{Jana Wagemaker\lmcsorcid{0000-0002-8616-3905}}[b]

\address{Plume team, CNRS, LIP, ENS de Lyon}
\email{Damien.Pous@ens-lyon.fr}

\address{iCIS, Department of Computer Science, Radboud University}
\email{jana.wagemaker@ru.nl}


\begin{abstract}
  We prove two completeness results for Kleene algebra with tests and a top element, with respect to guarded string languages and binary relations.
  While the equational theories of those two classes of models coincide over the signature of Kleene algebra, this is no longer the case when we consider an additional constant ``top'' for the full element. Indeed, the full relation satisfies more laws than the full language, and we show that those additional laws can all be derived from a single additional axiom.
  We recover that the two equational theories coincide if we slightly generalise the notion of relational model, allowing sub-algebras of relations where top is a greatest element but not necessarily the full relation.

  We use models of closed languages and reductions in order to prove our completeness results, which are relative to any axiomatisation of the algebra of regular events.

  For one of our constructions, we extend the concept of finite monoid recognisability to guarded-string languages; this device makes it possible to obtain a \pspace algorithm for the equational theory of binary relations.
\end{abstract}

\maketitle

\section{Introduction}
\label{sec:intro}

The axiomatic treatment of regular expressions and languages was developed extensively by Conway~\cite{conway71}, after earlier work of Kleene~\cite{Kle56}. Conway asked a difficult question: how to axiomatise the equations between regular expressions that hold under their standard interpretation as formal languages? Redko had proved that every purely equational axiomatisation must be infinite~\cite{redko1964algebra}. Conway proposed such an infinite axiomatisation, which Krob proved to be complete twenty years later---1991~\cite{Krob91a}. Conway had also proposed finite quasi-equational axiomatisations, one of which Kozen proved to be complete also in 1991~\cite{Kozen91}---this axiomatisation is now commonly called \emph{Kleene algebra}. By an additional remark of Boffa~\cite{Boffa90}, this latter completeness result can also be obtained as a consequence of Krob's completeness result. In the end, all finite quasi-equational axiomatisations proposed by Conway, as well as a few other ones, are actually complete~\cite{Krob91a,Boffa95}.

In symbols, writing $\plang e$ for the language of a regular expression $e$ and $\KA\proves e=f$ when the equation $e=f$ is derivable in any of the aforementioned axiomatisations, we have that for all regular expressions $e,f$,
\begin{align*}
  \KA\proves e=f \iff \plang e=\plang f
\end{align*}
The above equivalence extends with two more clauses. When an equation is derivable, 
it must hold in all models of the chosen axiomatisation.
These include in particular language models (\LANG) and relational models (\REL), for which we actually have an equivalence: writing $\X\models e=f$ when the equation $e=f$ holds in all members of a class of models $\X$, we actually have:
\begin{gather*}
  \KA\proves e=f
  \iff
  \REL\models e=f
  \iff
  \LANG\models e=f
  \iff
  \plang e = \plang f
\end{gather*}
Completeness w.r.t. \LANG is immediate given the previous equivalence: the language interpretation of a regular expression lies in \LANG. This is less obvious for \REL: completeness comes from a nice trick due to Pratt showing that every member of \LANG embeds into a member of \REL~\cite[third page]{pratt80:cayley}.

As an immediate consequence of the above equivalence, the equational theory of \REL (or \LANG) is decidable---more precisely, \pspace-complete.

\medskip

The above-mentioned results apply to the \emph{regular} operations and constants: composition, union, Kleene star, identity, emptiness.
A natural question is whether they extend to other operations or constants, such as intersection, converse, fullness.
The case of converse was dealt with by Ésik et al.: the equational theories of \REL and \LANG differ in the presence of converse but both can be axiomatised~\cite{BES95,EB95}, and they remain \pspace-complete~\cite{bp:jlamp16:kac}.
The case of intersection (with or without converse or the various constants) is significantly more difficult, and remains partly open, see~\cite{AMN11,bp:lics15:paka,Nakamura17,dp:concur18:kl}.
In this paper we focus on the addition of a constant $\top$, interpreted as the full language in \LANG and as the full relation in \REL.

The usefulness of adding such a constant was demonstrated recently in the context of Kleene algebras with tests (KAT)~\cite{kozens96:kat:completeness:decidability,cohenks96:kat:complexity,kozen:97:kat}, to model \emph{incorrectness logic}~\cite{ZhangAG22}. Indeed, while KAT alone makes it possible to model Hoare triples for partial correctness~\cite{kozen00:kat:hoare}, the addition of a full element makes it possible to compare the (co)domains of relations, and thus to encode \emph{incorrectness triples}~\cite[Section~5.3]{OHearn20}. KAT with a top element was also used earlier, as an intermediate structure to characterise a semantics for abnormal termination~\cite[Definition~12]{Mamouras17}.

The theory of KAT was developed extensively by Kozen et al., and it is very similar to that of Kleene algebra. For instance, the above equivalences extend to
\begin{gather*}
  \KAT\proves e=f
  \iff
  \REL\models e=f
  \iff
  \GSL\models e=f
  \iff
  [e] = [f]
\end{gather*}
There, \GSL is the generalisation of language models (\LANG) to \emph{guarded string} language models, and $[e]$ denotes the guarded string language interpretation of a regular expression with tests $e$.
Here also, we get decidability in \pspace, using an appropriate generalisation of finite automata on words to finite automata on guarded strings~\cite{cohenks96:kat:complexity,K17a}.

Dealing with Kleene algebra with tests has important applications in program verification: they make it possible to represent and reason about the big-step semantics of while programs, algebraically. This was used for instance to analyse compiler optimisations~\cite{kozenp00:kat:compiler:opts}. The decidability result was also implemented in proof assistants such as Coq and Isabelle/HOL, in order to automate some reasoning steps about binary relations and Hoare logic on while programs~\cite{pous:itp13:ra,KraussN12}.

\medskip

Like in~\cite{ZhangAG22} the problem we consider in this paper is that of adding a constant $\top$.
%
As expected, one should consider an axiom expressing that $\top$ is a greatest element:
\begin{equation}
  \label{ax:T}
  \tag{T}
  x \leq \top
\end{equation}
(Where $x\leq y$ is a shorthand for $x+y=y$.)
Together with the Kleene algebra axioms, axiom~(\ref{ax:T}) yields a complete axiomatisation w.r.t.\ language models: we sketched a proof for the case without tests in~\cite[Example~3.4]{prw:ramics21:mkah}, which we make fully explicit here in Section~\ref{sec:lang} (Theorem~\ref{thm:lang}). This proof gives us as a byproduct that the equational theory of Kleene algebras (with tests and) with a greatest element remains \pspace-complete.

Unfortunately, the previous axiom is not enough to deal with relational models. In fact, in the presence of $\top$, the equational theories of \LANG (or \GSL) and \REL differ.
Indeed, there are laws such as $\top x\top y\top = \top y\top x\top$~\cite[page~13]{pous:stacs18}, or $\top x\top x = \top x$~\cite[page~14]{ZhangAG22}, which are valid in \REL, but not in \LANG.

In the present paper, we show that it suffices to further add the following axiom in order to obtain a complete axiomatisation for \REL (Theorem~\ref{thm:rel}):
\addtocounter{equation}{1}
\begin{equation}
  \label{ax:F}
  \tag{F}
  x \leq x\cdot \top \cdot x
\end{equation}
This inequation is mentioned in~\cite[page~14]{ZhangAG22}; it holds in relational models, but not in language ones.
Thanks to (\ref{ax:T}), axiom (\ref{ax:F})
may be seen as a consequence of Ésik et al.'s axiom $x\leq x\cdot x^\circ\cdot x$ for dealing with converse $(\cdot^\circ)$ in relational models~\cite{EB95,BES95}.
How to use axiom~(\ref{ax:F}) in an equational proof is not so intuitive: it does not give rise to a natural notion of normal form, and it must often be used in conjunction with (\ref{ax:T}) in order to compensate for the fact that it duplicates subterms. For instance here is how we can prove the first of the aforementioned laws:
\begin{align*}
  \tag{by~(\ref{ax:F})}
  \top\pdot x\pdot\top\pdot y\pdot\top
  &\leq \top\pdot x\pdot\top\pdot y\pdot\top \pdot ~\top~ \pdot \top\pdot x\pdot\underline{\top\pdot y\pdot\top} \\
  \tag{by~(\ref{ax:T})}
  &\leq \top\pdot x\pdot\top\pdot y\pdot\underline{\top \pdot \top \pdot \top}\pdot x\pdot\top\\
  \tag{by~(\ref{ax:T})}
  &\leq \underline{\top\pdot x\pdot\top}\pdot y\pdot\top\pdot x\pdot\top\\
  \tag{by~(\ref{ax:T})}
  &\leq \top\pdot y\pdot\top\pdot x\pdot\top
\end{align*}
(We wrote compositions by juxtaposition, skipped the associativity steps, and underlined the subterms to be simplified by axiom~(\ref{ax:T})---the converse inequation is derived symmetrically.)
Our completeness proof actually goes via a factorisation property (Proposition \ref{prop:hom:cf}) intuitively asserting that one can always proceed in this way to reason about star-free expressions: expand the expressions using (\ref{ax:F}) a number of times, then remove spurious subterms using~(\ref{ax:T}). Combining such a technique together with Kleene algebra reasoning for star is the second challenge we address in the present work.

To get a grasp on the difficulties, the reader may try to find a proof of the following valid law of \REL, using \KA and axiom~(\ref{ax:F}):
\addtocounter{equation}{1}
\begin{equation}
  \label{eq:ex}
  \tag{$\star$}
  (aaa)^* \leq (aaa)^*\top (aa)^* + (aa)^*a\top (aaa)^*
\end{equation}
(Note that there is an easy proof using \KA and axiom~\eqref{ax:T}, which however breaks with the following variant: $b(aaa)^*b \leq b((aaa)^*b\top b(aa)^* + (aa)^*ab\top b(aaa)^*)b$.)
We give a solution to this exercise in Section~\ref{ssec:ex}.

Finally, we show that the difference between the equational theories of language and relational models can be blurred if we slightly generalise the notion of relational model, allowing $\top$ to be any greatest relation rather than the full one\footnote{Considering subalgebras where only certain relations are kept, since otherwise the only greatest relation is the full one.} (Corollary~\ref{cor:relp}).

\medskip

We prove our two main theorems using the concept of \emph{closed language model} for Kleene algebra with hypotheses~\cite{dkpp:fossacs19:kah}, and the reduction technique made explicit in~\cite{prw:ramics21:mkah,KappeB0WZ20}\footnote{Such a technique is somehow implicit in Kozen and Smith's completeness proof for KAT~\cite{kozens96:kat:completeness:decidability} and Ésik et al.'s completeness proof for Kleene algebra with converse~\cite{BES95,EB95}.}.
Intuitively, we establish reductions from KAT with (\ref{ax:T}) and KAT with (\ref{ax:T}, \ref{ax:F}) to plain KAT, so that we can deduce completeness and decidability of the former theories from completeness and decidability of the latter one.

While the first reduction is relatively straightforward---this is a syntactical linear reduction, the second one is not. We exploit the aforementioned factorisation result (Proposition~\ref{prop:hom:cf}) and the theory of language recognition by finite monoids~\cite{eilenberg1974automata,sakarovitch_2009} in order to show that regular languages are preserved by a certain closure operation, and that this preservation property can be justified algebraically (Proposition~\ref{prop:s:props}). Doing so for Kleene algebra with tests first requires us to extend such ideas to deal with guarded string languages (Section~\ref{sssec:gmonoids}).
Moreover, in order to establish the correspondence between the closed languages used there and relational models, we resort to a graph theoretical characterisation of the equational theory of \REL~(Theorem~\ref{thm:rel:graphs}, whose main ingredients date back to the works of Freyd and Scedrov~\cite[page 208]{FreydScedrov90} and Andréka and Bredikhin~\cite[Theorem~1]{AB95}).

\paragraph*{Differences with the version in Proc.\ CONCUR'22}

The three main differences with~\cite{pw:concur22:katop} are the following:
1/ we deal with Kleene algebra with tests rather than plain Kleene algebra; 2/ we use a technique based on finite monoids rather than on finite automata to deal with axiom~(\ref{ax:F}) in Section~\ref{ssec:rel:compl:clang}; 3/ we provide a \pspace algorithm for the equational theory, a problem which we had left open.

For 1/ we need to move from word languages to guarded string languages. This change is pervasive but the development scales smoothly without the need for major conceptual changes. For instance, when it comes to the graph-theoretical characterisation (Theorem~\ref{thm:rel:graphs}), it suffices to add labels to graph vertices. In fact, once the definitions are properly extended, the proofs remain almost unchanged.

Contribution 2/ is entirely new. It makes it possible to avoid the slightly cumbersome automata construction we were using in~\cite[Section~4.2]{pw:concur22:katop} (which would have been slightly painful to extend to guarded string automata). It is this new construction based on monoids which lead us to the \pspace algorithm we present in Section~\ref{ssec:alg} (Contribution 3/). While this idea is conceptually simple once we know the tools from monoid-based recognition of regular languages, some care is required here since we deal with guarded string languages: we have to develop the premises of a theory of monoid-based recognition of guarded string languages. This is why we first explain the idea on plain regular languages (Section~\ref{sssec:interlude}).

Our results readily apply to Kleene algebra with top but without tests (cf. Remark~\ref{rem:singleatom} below). Nevertheless the reader not interested in tests and guarded strings may find it easier to read~\cite{pw:concur22:katop} except for its section~4.2, and then jump to Sections~\ref{sssec:interlude}, \ref{ssec:ex} and~\ref{ssec:alg} from the present paper.

\paragraph*{Outline}

We setup and recall basic notation for regular expressions, formal languages and universal algebra in Section~\ref{sec:prelim}. Then we deal with guarded string language models in Section~\ref{sec:lang}, and relational models in Section~\ref{sec:rel}. While the language case was already sketched in~\cite[Example~3.4]{prw:ramics21:mkah} (without tests), we find it useful to treat it explicitly here, before dealing with the more involved case of relations: it illustrates the reduction method in a simpler setting, and we build on the reduction for languages to establish the reduction for relations.
We finally prove completeness with respect to generalised relational models in Section~\ref{sec:relp}.

\section{Preliminaries}
\label{sec:prelim}

Given a set $X$, we write $X^*$ for the set of \emph{words} over $X$: finite sequences of elements of $X$. We let $u,v$ range over words, we write $\epsilon$ for the empty word, $uv$ for the concatenation of two words $u,v$, and $u^i$ for the concatenation of $i$ copies of a word $u$.
We let $e,f$ range over \emph{regular expressions over $X$}, generated by the following grammar:
\begin{align*}
  e,f &::= e+f \mid e\cdot f \mid e^*\mid 0 \mid 1 \mid x\in X
\end{align*}
We sometimes omit the dots in regular expressions, writing, e.g., $ab^*$ for $a\cdot b^*$.
A \emph{language} is a set of words.
As usual, we associate a language $\plang e$ to every regular expression $e$, the \emph{language of $e$}.

We fix a finite set $\Sigma$ of \emph{letters}, ranged over using $a,b$, and a finite set $\Atom$ of \emph{atoms}, ranged over using $\alpha,\beta,\gamma$.
We write $\Sigmat$ for the set $\Sigma$ extended with a new element $\top$ called \emph{top}.

We call \emph{expressions} the regular expressions over $\SigmatA$.
We call \emph{plain regular expressions} the regular expressions over $\Sigma$.
We shall sometimes see words over $\SigmatA$ as expressions. E.g., the word $\alpha a \beta\top\gamma$ can be seen as the expression $\alpha\cdot a\cdot\beta\cdot\top\cdot\gamma$.

We consider signatures $S\eqdef\set{{+}_2,{\cdot}_2,{\cdot^*}_1,0_0,1_0}\cup\set{\alpha_0\mid \alpha\in\Atom}$ and $\St\eqdef S\cup\set{\top_0}$.
In those signatures, there is a constant symbol for every atom $\alpha\in\Atom$.

Expressions form the free $\St$-algebra over $\Sigma$.
Given an $\St$-algebra $A$ and a valuation $\sigma\colon\Sigma\to A$, we write $\hat\sigma$ for the unique homomorphism extending $\sigma$ to expressions.
Note that $\hat\sigma(\top)=\top\!_A$ and $\hat\sigma(\alpha)=\alpha\!_A$ by definition: top and atoms are constants, not variables.

Given a class $\X$ of $\St$-algebras and two expressions $e,f$, we write $\X\models e=f$ if for all members $A$ of $\X$ and all valuations $\sigma\colon\Sigma\to A$, we have $\hat\sigma(e)=\hat\sigma(f)$.

An \emph{equation} is a pair of expressions $e,f$, written $e=f$. We write $e\leq f$, an \emph{inequation}, as a shorthand for the equation $e+f=f$.
An \emph{axiomatisation} is a set of equations (or implications between equations). Given such a set $\E$, we write $\E\proves e=f$ when the equation $e=f$ is derivable from $\E$ using the rules of equational reasoning (where letters from $\Sigma$ appearing in the equations of $\E$ can be substituted by arbitrary terms).

\medskip

We let \KA stand for any axiomatisation over plain regular expressions which is sound and complete w.r.t. the regular language interpretation, i.e., such that for all plain regular expressions $e,f$, we have\footnote{Actually, we require slightly more if the axiomatisation contains implications: those implications should be valid in the models of languages and binary relations.}
\begin{equation*}
  \KA\proves e=f \iff \plang e=\plang f
\end{equation*}
As explained in the introduction, valid candidates for \KA include Conway's infinite but purely equational axiomatisation~\cite[page~116]{conway71} (proved complete by Krob~\cite{Krob91a}), Kozen's Kleene algebras~\cite{Kozen91}, left-handed Kleene algebras~\cite{KozenS12,ddp:lpar18:lefthanded}, and Boffa's algebras~\cite{Boffa95}.

Also note that the above requirement is equivalent to the following one, since $L\subseteq K$ iff $L\cup K=K$ for all languages $L,K$:
\begin{equation*}
  \KA\proves e\leq f \iff \plang e\subseteq \plang f
\end{equation*}
Let \KAT, \emph{Kleene algebra with tests}, be the union of \KA and the following equations:
\begin{align}
  \label{ax:atoms}
  \tag{A}
  \sum_{\alpha\in\Atom}\alpha=1 &&
  \alpha\cdot\beta = 0\quad(\forall \alpha\neq\beta)
\end{align}
Note that we can deduce $\KAT\proves\alpha\cdot\alpha = \alpha$ for all atoms $\alpha$, by neutrality of $1$ and distributivity.

A \emph{guarded string (over an alphabet $X$)} is a word over $\XA$ starting with an atom, alternating between atoms and elements in $X$, and ending with an atom. The \emph{length} of a guarded string is the number of $X$-elements in it. For instance, $\alpha$, $\alpha x \beta$, $\alpha x \beta y \gamma$ are guarded strings of respective lengths 0, 1, and 2, when $x,y\in X$. We write $\GS_X$ for the set of guarded strings over $X$, which is in bijection with $(\Atom\times X)^*\times \Atom$.

The \emph{coalesced product} is a partial binary operation on guarded strings: if $u=u'\alpha$ and $v=\beta v'$ are two guarded strings, then their coalesced product $u\diamond v$ is defined if $\alpha=\beta$, in which case $u\diamond v\eqdef u'\alpha v'$.

A \emph{guarded string language} is a set of guarded strings.
The \emph{guarded string language of an expression $e$} is defined as
\begin{align*}
[e]\eqdef \gs\plang e
\end{align*}
where $\gs$ is the following function from languages over $\SigmatA$ to guarded string languages over $\Sigmat$:
\begin{align*}
  \gs\colon &\pow((\SigmatA)^*)\to \pow(\GS_\Sigmat)\\
  &L \mapsto \set{\alpha_0 a_0 \dots a_{n-1}\alpha_n\mid \exists i_0,\dots,i_n\in\NN,~\alpha^{i_0}_0 a_0 \dots a_{n-1}\alpha^{i_n}_n\in L}
\end{align*}
For instance, if $\Atom$ consists of three distinct atoms $\alpha,\beta,\gamma$, then
\begin{align*}
  [(\alpha a \beta + \beta b \gamma)^*] = \set{\alpha,\beta,\gamma,\alpha a\beta, \beta b\gamma, \alpha a\beta b\gamma}
\end{align*}

In the absence of top, \KAT is sound and complete w.r.t. the guarded string language interpretation: for all expressions without top,
we have
\begin{align}
  \label{eq:KAT}
  \tag{$\dagger$}
  \KAT\proves e=f &\iff [e]=[f] \\
  \label{eq:KAT'}
  \tag{$\ddagger$}
  \KAT\proves e\leq f &\iff [e]\subseteq [f]
\end{align}
(This is essentially~\cite[Theorem~8]{kozens96:kat:completeness:decidability}, even though we work with an abstract version of \KAT here, cf Remark~\ref{rem:std:kat} below.)
For expressions without top, it is also known that \KAT is sound and complete with respect to both relational and guarded string language models~\cite[Theorem~7]{kozens96:kat:completeness:decidability}, which we define in the following sections. The point of this work is to deal with the constant top.

\begin{rem}
  \label{rem:singleatom}
  By choosing a singleton set $\set{\ast}$ for $\Atom$, we recover the case of plain Kleene algebra with top we covered in~\cite{pw:concur22:katop}.
  Indeed, in that case, the first axioms in~\eqref{ax:atoms} reads as $\ast=1$ so that atoms become redundant in the syntax of expressions, guarded strings over $X$ are in one-to-one correspondence with words over $X$, and accordingly, guarded string languages are just standard word languages.
\end{rem}

\begin{rem}
  \label{rem:std:kat}
  In the literature, KAT is usually presented as a two-sorted system: one sort for \emph{programs} forming a Kleene algebra, and one sort for \emph{tests} forming a Boolean algebra which embeds as a lattice into the former sort. Given a finite set $T$ of test variables, every element of the free Boolean algebra over $T$ can be represented as a disjunction of atoms in $\Atom\eqdef 2^T$. This idea makes it possible to normalise standard KAT expressions in such a way that all tests are atoms, giving rise to the syntax and axioms we use in the present paper.
  This idea is already there in the completeness proof of KAT~\cite{kozens96:kat:completeness:decidability}; axioms~(\ref{ax:atoms}) appear explicitly in works about NetKAT~\cite[Figure~6]{AndersonFGJKSW14:netkat}; we gave a formal reduction between the two presentations in~\cite[Section~4.2]{prw:ramics21:mkah}.
  We prefer this setup because it is single-sorted and slightly more abstract (e.g., we could imagine models where the set of atoms is not of the form $2^T$.)
  Note that when there are no tests in standard KAT (i.e., $T=\emptyset$), we fall back into the special case discussed in Remark~\ref{rem:singleatom}: $2^\emptyset$ is a singleton.
\end{rem}

\section{Guarded string language models}
\label{sec:lang}

Let $X$ be an alphabet and let $L,K$ range over guarded string languages on the alphabet $X$.
These form an $\St$-algebra with the operations defined as follows:
\begin{align*}
  L+K &\eqdef L\cup K&
  0 &\eqdef \emptyset&
  \top &\eqdef \GS_X\\
  L\cdot K &\eqdef \set{u\diamond v\mid u\in L~\land~v\in K}&
                                                              1 &\eqdef \set{\alpha\mid \alpha\in\Atom}&
  \alpha &\eqdef \set{\alpha}\\
  L^* &\eqdef \bigcup_{n\in\NN}L^n & L^0&\eqdef 1 & L^{i+1}&\eqdef L\cdot L^i
\end{align*}
(That is, $+$ is set-theoretic union, $0$ and $\top$ are the empty and full languages, respectively, $\cdot$ is guarded string language concatenation, via coalesced product, $1$ contains all guarded strings consisting of a single atom, and $\cdot^*$ is obtained via iteration.)
We write \GSL for the class of all $\St$-algebras of the above shape.

\begin{fact}
  The guarded string language interpretation of expressions, $[\cdot]=\gs\plang\cdot$, is the unique $S$-algebra homomorphism satisfying $[a]=\set{\alpha a \beta\mid\alpha,\beta\in\Atom}$ for all $a\in\Sigmat$.
  This is not an $\St$-algebra homomorphism, since $[\top]=\set{\alpha \top \beta\mid\alpha,\beta\in\Atom}\subsetneq\GS_\Sigmat=\top$.
\end{fact}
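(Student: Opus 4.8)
The plan is to read the claim through the universal property of free algebras. Since the signature $S$ treats $\top$ as a generator rather than a constant---it is $0,1$ and the atoms $\alpha\in\Atom$ that are the nullary symbols of $S$---the expressions, which the paper presents as the free $\St$-algebra over $\Sigma$, are equally the free $S$-algebra over $\Sigmat$ (moving $\top$ from a constant of $\St$ to a generator). Consequently any $S$-homomorphism from the expressions into the $S$-algebra \GSL of guarded string languages is completely determined by its values on the generators $a\in\Sigmat$, so prescribing $[a]=\set{\alpha a\beta\mid\alpha,\beta\in\Atom}$ pins down at most one such homomorphism, which settles uniqueness. It then remains to check that $[\cdot]=\gs\plang\cdot$ really is an $S$-homomorphism realising these values.

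First I would factor $[\cdot]$ as a composite of two $S$-homomorphisms. The language map $\plang\cdot$ is the standard language homomorphism from the expressions to languages over $\SigmatA$ (each generator $a$ to $\set{a}$, each atom constant $\alpha$ to $\set{\alpha}$, $1$ to $\set{\epsilon}$, and the operations to union, concatenation and Kleene star), so it suffices to show that $\gs$ itself is an $S$-homomorphism from languages over $\SigmatA$ to guarded string languages over $\Sigmat$. The degenerate cases are routine from the definition of $\gs$: it preserves arbitrary unions (membership in $\gs(L)$ is an existential over $L$), $\gs(\emptyset)=\emptyset$, $\gs(\set{\epsilon})=\set{\alpha\mid\alpha\in\Atom}=1$, and $\gs(\set{\alpha})=\set{\alpha}$ matches the atom constant; moreover $\gs(\set{a})=\set{\alpha a\beta\mid\alpha,\beta\in\Atom}$ on the generators, which is exactly the required value of $[a]$.

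The heart of the argument, and the step I expect to be the main obstacle, is compatibility with products, $\gs(L\cdot K)=\gs(L)\cdot\gs(K)$, where the right-hand product is computed via the coalesced product. The clean way to organise this is to record, for a guarded string $g=\alpha_0 a_0\cdots a_{n-1}\alpha_n$, its set of \emph{expansions} $\set{\alpha_0^{i_0}a_0\cdots a_{n-1}\alpha_n^{i_n}\mid i_0,\dots,i_n\in\NN}$, so that $g\in\gs(L)$ iff some expansion of $g$ lies in $L$. The key observation is that whenever $g=g_1\diamond g_2$ is a coalesced product sharing the junction atom $\alpha$, the expansions of $g$ are exactly the concatenations of an expansion of $g_1$ with an expansion of $g_2$: the junction run $\alpha^i$ decomposes as $\alpha^{j}\cdot\alpha^{k}$ with $j+k=i$, and as $j,k$ range over $\NN$ this realises every expansion of $g$ while matching a trailing run of a $g_1$-expansion against a leading run of a $g_2$-expansion. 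Both inclusions of the product identity then follow: a coalesced decomposition of $g$ together with witnessing expansions in $L$ and $K$ yields a word of $L\cdot K$ expanding $g$; conversely, any factorisation $w=w_1 w_2$ of an expansion of $g$ cuts $g$ at a unique junction atom (splitting the corresponding run, with the degenerate cases of an empty run between adjacent letters, or at the two boundaries, handled uniformly), producing the decomposition. Compatibility with Kleene star is then immediate from union-preservation and the iterated product identity, using $\gs(\set{\epsilon})=1$ for the zeroth power.

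Finally, the negative part is a one-line computation: an $\St$-homomorphism would have to send $\top$ to the constant $\top=\GS_\Sigmat$, whereas $[\top]=\gs(\set{\top})=\set{\alpha\top\beta\mid\alpha,\beta\in\Atom}$ is a proper subset of $\GS_\Sigmat$ (it contains neither the length-$0$ guarded strings $\alpha$ nor any guarded string whose single letter lies in $\Sigma$), so $[\cdot]$ does not preserve the constant $\top$.
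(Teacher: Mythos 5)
Your proof is correct: the paper states this as a Fact without any proof, and your argument --- uniqueness from viewing expressions as the free $S$-algebra over $\Sigmat$, existence by factoring $[\cdot]$ through $\plang\cdot$ and checking that $\gs$ is an $S$-homomorphism, with the concatenation case (matching expansions across the coalesced junction atom, including the degenerate empty-run cuts) being the only step that requires real care --- is precisely the routine verification the paper leaves implicit. The negative clause about $[\top]\subsetneq\GS_\Sigmat$ is also handled correctly.
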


Let \KATT, \emph{KAT with a top element}, denote the union of the axioms from \KAT and axiom~(\ref{ax:T}). We prove in this section that \KATT is sound and complete for \GSL.
Following the strategy from~\cite{dkpp:fossacs19:kah,prw:ramics21:mkah}, the first step consists of defining the closure operation below, according to the axiom~(\ref{ax:T}) we add to \KAT:
\begin{defi}[Language closure $\CT$]
  \label{def:ct}
  Given two guarded strings $u,v$ over $\Sigmat$, we write $u\rT v$ if $u=l\diamond w\diamond r$ and  $v=l\top r$ for some guarded strings $l,w,r$.
  Given a guarded string language $L$ over $\Sigmat$, we call \emph{$T$-closure of $L$} the following guarded string language
  \begin{align*}
    \CT(L)\eqdef\set{u\mid u\rT^* v\text{ for some }v\in L}
  \end{align*}
\end{defi}
$\CT$ is indeed a closure operator, and $\CT(L)$ may alternatively be described as the set of guarded strings obtained by replacing occurrences of $\top$ in a guarded string of $L$ with arbitrary guarded strings (over $\Sigmat$).

For instance, we have
$\alpha a\alpha\top\beta\lT\alpha a\alpha b\gamma c\beta$ (by choosing $w=\alpha b\gamma c\beta$),
and
$\alpha\top\alpha\lT\alpha$ (by choosing $w=\alpha$).
Also observe that in a guarded string with shape $u\alpha\top\beta v$, we cannot replace $\top$ with a guarded string of length zero unless $\alpha=\beta$.

\begin{lem}
  \label{lem:ct:hom}
  $\CT$ is an $\St$-algebra homomorphism.
\end{lem}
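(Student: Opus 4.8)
The plan is to check that $\CT$ commutes with every operation and constant of the signature $\St$: that is, $\CT(0)=0$, $\CT(1)=1$, $\CT(\set{\alpha})=\set{\alpha}$ for each atom $\alpha$, $\CT(\top)=\top$, $\CT(L+K)=\CT(L)+\CT(K)$, $\CT(L\cdot K)=\CT(L)\cdot\CT(K)$, and $\CT(L^*)=\CT(L)^*$. First I would record two cheap facts read off the definition. (i) $\CT$ preserves arbitrary unions: $u\in\CT(\bigcup_i L_i)$ iff $u\rT^* v$ for some $v\in\bigcup_i L_i$ iff $u\in\bigcup_i\CT(L_i)$; this immediately handles $+$ and sends $\emptyset$ to $\emptyset$, i.e.\ $\CT(0)=0$. (ii) A single step $u\rT v$ produces $v=l\top r$, which literally contains an occurrence of $\top$; hence any guarded string reachable by at least one $\rT$-step carries a $\top$. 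So if $u\rT^* w$ with $w$ containing no $\top$, then $u=w$. Taking $w$ a single atom gives $\CT(\set{\alpha})=\set{\alpha}$, and union-preservation then yields $\CT(1)=\CT(\bigcup_\alpha\set{\alpha})=1$. For $\top=\GS_\Sigmat$ I would simply use that $\CT(L)\subseteq\GS_\Sigmat$ always while $\CT$ is extensive (being a closure operator), so $\CT(\GS_\Sigmat)=\GS_\Sigmat$.

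The heart of the proof is the product case, which rests on one structural observation about a step $u\rT v$ with $u=l\diamond w\diamond r$ and $v=l\top r$: it preserves the first and last atoms of the string, and, more precisely, every atom of $v$ comes from a uniquely determined surviving atom of $u$, since only the atoms lying strictly inside $w$ are absorbed into the new $\top$. For the inclusion $\supseteq$, given $x\in\CT(L)$ and $y\in\CT(K)$ with $x\diamond y$ defined, endpoint preservation guarantees that if $x\rT^* u\in L$ and $y\rT^* v\in K$ then $u$ and $v$ still share the junction atom, so $u\diamond v\in L\cdot K$; moreover each $\rT$ step performed inside $x$ (resp.\ inside $y$) lifts through $\diamond$, whence $x\diamond y\rT^* u\diamond y\rT^* u\diamond v$ and thus $x\diamond y\in\CT(L\cdot K)$.

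For the harder inclusion $\subseteq$, I would take $z\in\CT(L\cdot K)$, witnessed by a chain $z=z_0\rT\dots\rT z_n=u\diamond v$ with $u\in L$ and $v\in K$, and trace the junction atom of $u\diamond v$ \emph{backwards} through the chain. By atom-survival it corresponds to a unique atom occurrence in each $z_i$, which splits $z_i=p_i\diamond q_i$ at that occurrence (with $p_n=u$, $q_n=v$). Because the contracted factor $w$ at each step never contains the tracked atom—that atom survives all the way to $z_n$—each step acts on only one side of the split, restricting either to $p_i\rT p_{i+1}$ with $q_i=q_{i+1}$, or to the symmetric situation. Collecting the steps on each side gives $p_0\rT^* u$ and $q_0\rT^* v$, hence $p_0\in\CT(L)$, $q_0\in\CT(K)$, and $z=p_0\diamond q_0\in\CT(L)\cdot\CT(K)$.

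The star case is then routine: from $L^*=\bigcup_n L^n$, union-preservation gives $\CT(L^*)=\bigcup_n\CT(L^n)$, and an induction on $n$ using $\CT(1)=1$ and the product case yields $\CT(L^n)=\CT(L)^n$, so $\CT(L^*)=\bigcup_n\CT(L)^n=\CT(L)^*$. I expect the only genuine obstacle to be making the backward tracing in the $\subseteq$ direction precise—formalising that the junction atom persists through the whole chain and that every contraction stays on a single side of it; once that bookkeeping is set up, everything else is either a one-line computation or a direct consequence of union-preservation.
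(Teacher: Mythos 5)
Your proof is correct and follows essentially the same route as the paper: the constants, unions, and star are dispatched as easy consequences of the definition, and the whole weight falls on composition, which the paper likewise reduces to the observation that a single $\rT$-step replaces one occurrence of $\top$ and therefore acts entirely on one side of the junction, yielding the decomposition $u=v'\diamond w'$ with exactly one of $v'\rT v$ or $w'\rT w$. Your backward tracing of the junction atom is just a more explicit rendering of the induction on the rewrite chain that the paper leaves implicit.
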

\begin{proof}
  The only interesting case is that of composition, which amounts to showing that for all guarded strings $u,v,w$,
  \begin{align*}
    u \rT v\diamond w \text{ \quad iff\quad there are $v',w'$ s.t. }
    u=v'\diamond w' \text{ and }
    \begin{cases}
      \text{either $v'\rT v$ and $w'=w$,}\\
      \text{or $v'=v$ and $w'\rT w$}\\
    \end{cases}
  \end{align*}
  This follows from the fact that our rewriting relation $\rT$ replaces single letters.
\end{proof}

\begin{defi}[Expression closure $r$]
  \label{def:r}
  Let $r$ be the unique $S$-algebra homomorphism on expressions such that
  $r(a)=a$ for all letters $a\in\Sigma$, and $r(\top)=\fullexprt$ (where $\fullexprt$ is an expression  for the full guarded string language $\GS_\Sigmat$---e.g., $(a+b+\dots+\top)^*$).
\end{defi}

\begin{prop}
  \label{prop:r:props}
  For all expressions $e$, we have
  \begin{enumerate}[(i)]
  \item $[r(e)]=\CT[e]$, and
  \item $\KATT\proves e=r(e)$.
  \end{enumerate}
\end{prop}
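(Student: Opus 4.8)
The plan is to reduce both statements to a check on the generators of expressions regarded as the free $S$-algebra over $\Sigmat$: in the signature $S$ the atoms, $0$ and $1$ are constants and so are automatically respected by any $S$-homomorphism and by any $S$-congruence, which leaves only the letters $a\in\Sigma$ and the element $\top$ to inspect.

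For part~(i) I would observe that both $e\mapsto[r(e)]$ and $e\mapsto\CT[e]$ are $S$-algebra homomorphisms from expressions to \GSL: the former is the composite of the $S$-homomorphism $r$ with the $S$-homomorphism $[\cdot]$ (Fact), and the latter is the composite of $[\cdot]$ with $\CT$, which is an $S$-homomorphism since by Lemma~\ref{lem:ct:hom} it is even an $\St$-homomorphism. By freeness it then suffices to check that the two agree on each generator in $\Sigmat$. On a letter $a$ we have $[r(a)]=[a]$, and $\CT[a]=[a]$ because $[a]$ contains no occurrence of $\top$ while every step of $\rT$ creates one, so the only chain $u\rT^*v$ landing in a top-free language is the empty one and $\CT$ acts as the identity there. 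On $\top$ we have $[r(\top)]=[\fullexprt]=\GS_\Sigmat$ by the defining property of $\fullexprt$, while $\CT[\top]=\CT(\set{\alpha\top\beta\mid\alpha,\beta\in\Atom})=\GS_\Sigmat$, since expanding the single $\top$ of $\alpha\top\beta$ by an arbitrary guarded string and letting $\alpha,\beta$ range over $\Atom$ produces every guarded string.

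For part~(ii) I would argue by induction on $e$. Since $r$ is a homomorphism and \KATT-provability is a congruence on expressions, the cases for $+$, $\cdot$ and $\cdot^*$ follow immediately from the induction hypotheses, and for the constants $0,1,\alpha$ and for the letters $r$ is the identity so there is nothing to prove. The only substantial case is $e=\top$, where $r(\top)=\fullexprt$ and I must establish $\KATT\proves\top=\fullexprt$; for the natural witness $\fullexprt=(\sum_{c\in\Sigmat}c)^*$ this is short. The inequation $\fullexprt\leq\top$ is an instance of axiom~(\ref{ax:T}), and for the converse I use that $\top$ is one of the summands of $\sum_{c\in\Sigmat}c$, so that $\top\leq\sum_{c\in\Sigmat}c\leq(\sum_{c\in\Sigmat}c)^*=\fullexprt$ by the \KA\ law $x\leq x^*$.

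I expect the only delicate point to be the identity $\CT[\top]=\GS_\Sigmat$ in part~(i): one must verify that the closure recovers guarded strings of \emph{every} length, in particular the length-zero (single-atom) strings, which arise through the admissible rewrite $\alpha\rT\alpha\top\alpha$ obtained by choosing the length-zero witness $w=\alpha$. Everything else is a routine application of the universal property of the free $S$-algebra together with elementary Kleene-algebra reasoning, with part~(ii) made easy precisely because the chosen $\fullexprt$ literally exhibits $\top$ as a summand.
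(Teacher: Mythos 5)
Your proof is correct and follows essentially the same route as the paper: part (i) by observing that both $[r(\cdot)]$ and $\CT[\cdot]$ are $S$-algebra homomorphisms and checking agreement on the generators $\Sigmat$, and part (ii) by induction with $e=\top$ as the only non-trivial case. The one divergence is in deriving $\KATT\proves\top\leq r(\top)$: you give a direct syntactic derivation, $\top\leq\sum_{c\in\Sigmat}c\leq\bigl(\sum_{c\in\Sigmat}c\bigr)^*$, which exploits the concrete witness chosen for $\fullexprt$, whereas the paper invokes completeness of \KAT together with the inclusion $[\top]\subseteq\GS_\Sigmat=[r(\top)]$, an argument that works for any expression denoting the full guarded string language. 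Both are sound; yours is more elementary but tied to that particular choice of $\fullexprt$.
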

\begin{proof}\hfill
  \begin{enumerate}[(i)]
  \item $[r(\cdot)]$ and $\CT[\cdot]$ are $S$-algebra homomorphisms, and they agree on $\Sigmat$.
  \item We proceed by induction on $e$; the only interesting case is when $e=\top$, for which we have $\KATT\proves r(\top)\leq\top$ by axiom~(\ref{ax:T}), and $\KATT\proves \top\leq r(\top)$ by completeness of \KAT~(\ref{eq:KAT'}), since $[\top]\subseteq\GS_\Sigmat=[r(\top)]$. \qedhere
  \end{enumerate}
\end{proof}

\begin{thm}
  \label{thm:lang}
  For all expressions $e,f$, we have
  \begin{align*}
  \GSL\models e=f \iff
  \CT[e]=\CT[f] \iff
  \KATT\proves e=f
  \end{align*}
\end{thm}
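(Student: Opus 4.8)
The plan is to prove the two displayed equivalences as a single cycle of implications through the three statements,
\[
\CT[e]=\CT[f] \implies \KATT\proves e=f \implies \GSL\models e=f \implies \CT[e]=\CT[f],
\]
and then read off that all three are equivalent. The implication $\KATT\proves e=f \implies \GSL\models e=f$ is soundness, and I would handle it by checking that every axiom of \KATT holds in every member of \GSL, so that equational reasoning preserves validity. The \KA-axioms and the atom axioms~\eqref{ax:atoms} hold in \GSL by the standard theory of guarded string languages (this is precisely the soundness half of the top-free completeness result recalled just before the theorem), and axiom~\eqref{ax:T} holds because $\top$ is interpreted as $\GS_X$, the greatest guarded string language.

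For the implication $\GSL\models e=f \implies \CT[e]=\CT[f]$ I would exhibit a single member of \GSL together with a valuation whose induced interpretation of any expression is exactly its closed language. Take $A$ to be the algebra of guarded string languages over the alphabet $\Sigmat$, and the valuation $\sigma(a)\eqdef[a]$ for $a\in\Sigma$, with induced $\St$-homomorphism $\hat\sigma$. I claim $\hat\sigma(e)=\CT[e]$ for all $e$. Both sides are $S$-algebra homomorphisms on expressions: $\hat\sigma$ by construction, and $\CT[\cdot]=\CT\circ[\cdot]$ because $[\cdot]$ is an $S$-homomorphism and $\CT$ is one by Lemma~\ref{lem:ct:hom}. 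Since expressions form the free $S$-algebra over $\Sigmat$ (forgetting that $\top$ is a constant), it suffices to check agreement on $\Sigmat$: on a letter $a\in\Sigma$ we have $\hat\sigma(a)=[a]=\CT[a]$, since the strings of $[a]$ contain no $\top$ and are thus fixed by $\CT$; and on $\top$ we have $\hat\sigma(\top)=\top_A=\GS_\Sigmat=\CT[\top]$, since closing $\set{\alpha\top\beta\mid\alpha,\beta\in\Atom}$ under replacement of $\top$ produces every guarded string. Hence $\GSL\models e=f$ forces $\hat\sigma(e)=\hat\sigma(f)$, that is $\CT[e]=\CT[f]$.

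The completeness implication $\CT[e]=\CT[f]\implies\KATT\proves e=f$ carries the real content, and I would route it through the reduction $r$. By Proposition~\ref{prop:r:props}(i) the hypothesis becomes $[r(e)]=[r(f)]$. The expressions $r(e),r(f)$ still contain $\top$, but the guarded string interpretation treats $\top$ exactly like an ordinary letter, so, regarding $\top$ as a fresh letter of the enlarged alphabet $\Sigmat$, $r(e)$ and $r(f)$ become top-free expressions to which the completeness of \KAT recalled in~\eqref{eq:KAT} applies, yielding $\KAT\proves r(e)=r(f)$. This is the step I expect to be the main obstacle: one must be careful that~\eqref{eq:KAT} is available over any finite alphabet, so that it can be invoked with $\top$ demoted to a letter, and that the resulting \KAT-derivation lifts verbatim to a \KATT-derivation---which it does, since it uses only \KA- and~\eqref{ax:atoms}-reasoning, all of which is available in \KATT where $\top$ is a legitimate term. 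Granting this, $\KATT\proves r(e)=r(f)$, and combining with $\KATT\proves e=r(e)$ and $\KATT\proves f=r(f)$ from Proposition~\ref{prop:r:props}(ii) gives $\KATT\proves e=r(e)=r(f)=f$ by transitivity, closing the cycle.
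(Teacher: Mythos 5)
Your proof is correct and follows essentially the same route as the paper: the same cycle of implications, using Lemma~\ref{lem:ct:hom} to realise $\CT[\cdot]$ as an interpretation into a member of \GSL, Proposition~\ref{prop:r:props} together with completeness of \KAT over the alphabet $\Sigmat$ (with $\top$ demoted to a letter) for the reduction, and soundness of the \KATT axioms to close the loop. Your explicit attention to the fact that~\eqref{eq:KAT} must be invoked over the enlarged alphabet is exactly the subtlety the paper flags in Remark~\ref{rem:bug}.
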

\begin{proof}
  We have
  \begin{align*}
    &\GSL\models e=f\\
    \tag{$\CT[\cdot]$ is an interpetation into a member of \GSL, by Lemma~\ref{lem:ct:hom}}
    \Rightarrow~ & \CT[e]=\CT[f]\\
    \tag{Proposition~\ref{prop:r:props}(i)}
    \Leftrightarrow~ & [r(e)]=[r(f)]\\
    \tag{completeness of \KAT~(\ref{eq:KAT})}
    \Leftrightarrow~ & \KAT\proves r(e)=r(f)\\
    \tag{transitivity and Proposition~\ref{prop:r:props}(ii)}
    \Rightarrow~ & \KATT\proves e=f\\
    \tag{soundness of \KATT axioms w.r.t. \GSL}
    \Rightarrow~ & \GSL\models e=f
  \end{align*}
  (In the last step, soundness w.r.t. \GSL comes from our assumption about \KA, and a trivial verification for axioms~\eqref{ax:atoms} and~\eqref{ax:T}.)
\end{proof}
The first equivalence in the statement of the above theorem could be obtained in a more direct way, without resorting to completeness of some axiomatisation.
Moreover the right-to-left implication of the second equivalence is an instance of a general property of closed language models~\cite[Theorem~2]{dkpp:fossacs19:kah}---duly generalised to the guarded string case. The reduction $r$ is used only for the left-to-right implication of this second equivalence.

According to the above proof, we could complete the statement with ``\dots $\iff [r(e)]=[r(f)]$''.
Doing so gives us a \pspace algorithm: compute the expressions $r(e)$ and $r(f)$, and compare them for guarded string language equivalence~\cite{cohenks96:kat:complexity}.
\begin{rem}
  \label{rem:bug}
  Note that it is crucial that $r(\top)$ be defined as an expression $\fullexprt$ for the full guarded string language over $\Sigmat$ rather than over $\Sigma$ alone. Otherwise, we would equate $\fullexpr$ and $\top$, while those are different in \GSL (e.g., for a counterexample when $\Sigma=\set{a,b}$, interpret both $a$ and $b$ as the empty language on some non-empty alphabet).
\end{rem}

\section{Relational models}
\label{sec:rel}

Given a set $X$, a \emph{relation on $X$} is a set of pairs of elements from $X$. We let $R,S$ range over such relations, whose set is written $\rel X$, and we write $\rin R x y$ for $\tuple{x,y}\in R$.
Given a function $p\colon X\to\Atom$,
relations on $X$ form an $\St$-algebra with the operations defined as follows:
\begin{align*}
  R+S &\eqdef R\cup S\\
  R\cdot S &\eqdef \set{\tuple{x,z}\mid \exists y\in X,~\rin R x y~\land~\rin S y z}\\
  R^* &\eqdef \set{\tuple{x_0,x_n}\mid \exists n\in\NN,x_1,\dots, x_{n-1},~\forall i<n,~\rin R {x_i} {x_{i+1}}}\\
  0 &\eqdef \emptyset\\
  1 &\eqdef \set{\tuple{x,x}\mid x\in X}\\
  \top &\eqdef X\times X\\
  \alpha &\eqdef \set{\tuple{x,x}\mid p(x)=\alpha}
\end{align*}
In words, $+$ is set-theoretic union, $\cdot$ is relational composition, $\cdot^*$ is reflexive transitive closure, $0$, $1$ and $\top$ are the empty, identity and full relations, respectively.
The function $p$ is only used to define the constants $\alpha$. The idea is that $p$ describes a partition of $X$, using atoms to name the equivalence classes. The relation $\alpha$ consists of the sub-identity relation selecting precisely the elements whose equivalence class is named $\alpha$.

We write \REL for the class of all $\St$-algebras of the above shape.

\begin{rem}
  Note that this definition covers the standard way of interpreting Kleene algebra with tests expressions into relations on a set $X$. Recall Remark~\ref{rem:std:kat}. If we start from a set $T$ of test variables, then an interpretation $v\colon T\to \pow(X)$ of test variables into predicates is the same as a function $p\colon X\to\Atom$ as above when setting $\Atom\eqdef 2^T$. Furthermore, the standard interpretation under $v$ of an atom $\alpha$ (seen as conjunction of literals in $T$) coincides with the one given in the above definition.

  In particular, while we set atoms as constants in our signatures, the ability to let $p$ vary in members of \REL gives them back their original status of variables.
\end{rem}

Let \KATF, \emph{KAT with a full element}, denote the union of the axioms from \KATT and axiom (\ref{ax:F}). Let us emphasise that despite the abbreviation, \KATF extends \KATT and thus contains axiom~(\ref{ax:T}). We prove in this section that \KATF is sound and complete for \REL.
The proof consists of two parts. First we characterise the equational theory of \REL in terms of closed guarded string languages (Section~\ref{ssec:rel:clang}, Proposition~\ref{prop:rel}), then we use reductions to show completeness of \KATF w.r.t.\ this closed language interpretation and obtain our main result (Section~\ref{ssec:rel:compl:clang}, Theorem~\ref{thm:rel}).

\subsection{Characterisation via closed guarded string languages}
\label{ssec:rel:clang}

From now on, all guarded strings and associated languages are over $\Sigmat$.
We start by extending the previous closure function (Definition~\ref{def:ct}), in order to take into account the new axiom (\ref{ax:F}):
\begin{defi}[Language closure $\CF$]
  \label{def:cf}
  Given two guarded strings $u,v$, we write $u\rF v$ if either $u\rT v$, or
  $u$ is obtained by replacing a subword of the shape $w\top w$ in $v$, with $w$ (for some guarded string $w$).
  Given a guarded string language $L$, we call \emph{$F$-closure of $L$} the guarded string language
  \begin{align*}
    \CF(L)\eqdef\set{u\mid u\rF^* v\text{ for some }v\in L}
  \end{align*}
\end{defi}
$\CF$ is a closure operator, but unlike $\CT$ in the previous section,
$\CF$ is not a homomorphism. For instance, $\CF(\set{\alpha a \alpha }\cdot \set{\alpha \top \alpha a \alpha })$ contains $\alpha a \alpha $ while $\CF(\set{\alpha a \alpha })\cdot\CF(\set{\alpha \top \alpha a\alpha })$ does not. Moreover, an elementary description of $\CF$ requires more work than for $\CT$ in the previous section.

Let $E$ be the following function on guarded string languages
\begin{align*}
  E(L) \eqdef \set{w \mid \exists n,~(w\top)^nw\in L}
\end{align*}
We shall prove that $\CF=E\circ \CT$, and that $\CF$ can be characterised in terms of certain graph homomorphisms (Proposition~\ref{prop:hom:cf} below).

\begin{defi}[Graph, graph homomorphism]
  \label{def:graph}
  A \emph{graph} is a tuple $\tuple{V,E,l,\iota,o}$, where $V$ is a set
  of \emph{vertices}, $E\subseteq V\times \Sigma \times V$ is a set of
  labelled edges, $l\colon V\to\Atom$ is a \emph{node-labelling function},
  and $\iota,o\in V$ are two distinguished vertices,
  respectively called \emph{input} and \emph{output}.

  A \emph{graph homomorphism} from the graph $G$ to the graph $H$ is a
  function from vertices of $G$ to vertices of $H$ that preserves node-labelling,
  labelled edges, input, and output. We write $H\lhd G$ when there
  exists a homomorphism from $G$ to $H$.
\end{defi}
The relation $\lhd$ on graphs is a preorder. We depict graphs as usual, using an unlabelled ingoing (resp. outgoing) arrow to indicate the input (resp. output); we use dotted red arrows to depict graph homomorphisms. For instance, we depict two finite connected graphs below, and a homomorphism between them:
\begin{align*}
  \begin{tikzpicture}[xscale=1.2]
    \begin{scope}
      \vertexa 0 {0,0} \alpha;
      \vertexa 1 {1,.8} \beta;
      \vertexa 2 {2,0} \gamma;
      \vertexa {0'} {.3,-.3} \alpha;
      \inst 0; \fnst 2;
      \edge 0 1 a;
      \edger{0'}1 a;
      \edge 1 2 b;
      \edgeb{0'} 2 c;
    \end{scope}
    \begin{scope}[yshift=-1.6cm]
      \vertexb 3 {0,0} \alpha;
      \vertexb 4 {1,.8} \beta;
      \vertexb 5 {2,0} \gamma;
      \inst 3; \fnst 5;
      \edge 3 4 a;
      \edge 4 5 b;
      \edgeb 3 5 c;
    \end{scope}
    \homo 0  3;
    \homo{0'}3;
    \homo 1  4;
    \homo 2  5;
  \end{tikzpicture}
\end{align*}

\begin{defi}[Graph of a guarded string]
  \label{def:graph:word}
  We associate to each guarded string $u$ the graph $g(u)$ defined as follows:
  \begin{itemize}
  \item the vertices are the natural numbers smaller or equal to the length $n$ of $u$;
  \item the $i$th vertex is labelled with the $i$th atom occurring in $u$;
  \item for $a\in\Sigma$ there is an $a$-labelled edge from $i$ to $i+1$ if the $i$-th letter of $u$ is $a$;
  \item the input is $0$ and the output is $n$.
  \end{itemize}
\end{defi}

Graphs of guarded strings are rather simple: graphs as depicted above do not arise as graphs of guarded strings.
For guarded strings not containing $\top$, they are just directed paths from the input to the output. For guarded strings containing $\top$, they are collections of (possibly empty) directed paths where the input is the starting-point of some path and the output is the end-point of some path.
For example, the graphs of $\alpha a\beta b\gamma$ and $\alpha d\beta \top \alpha e\beta\top\gamma$ are depicted below:
\begin{align*}
  \begin{tikzpicture}[xscale=.8]
    \vertexa 0 {0,0} \alpha;
    \vertexa 1 {1,0} \beta;
    \vertexa 2 {2,0} \gamma;
    \inst 0; \fnst 2;
    \edge 0 1 a;
    \edge 1 2 b;
  \end{tikzpicture}
  &&
  \begin{tikzpicture}[xscale=.8]
    \vertexa 0 {0,0} \alpha;
    \vertexa 1 {1,0} \beta;
    \vertexa 2 {2,0} \alpha;
    \vertexa 3 {3,0} \beta;
    \vertexa 4 {4,0} \gamma;
    \inst 0; \fnst 4;
    \edge 0 1 d;
    \edge 2 3 e;
  \end{tikzpicture}
\end{align*}
Nevertheless, homomorphisms between graphs of guarded strings may be non-trivial. E.g., we have $g(\alpha a\beta b\gamma)\lhd g(\alpha a\beta\top\alpha a\beta b\gamma\top \beta b\gamma)$ and $g(\gamma\top \alpha a\alpha\top\beta b\beta\top\gamma)\lhd g(\gamma\top \beta b\beta\top \alpha a\alpha\top\gamma)$, as witnessed below:
\begin{align*}
  \begin{tikzpicture}[xscale=.8]
    \vertexa{s0}{0,1}\alpha;
    \vertexa{s1}{1,1}\beta;
    \vertexa{s2}{2,1}\alpha;
    \vertexa{s3}{3,1}\beta;
    \vertexa{s4}{4,1}\gamma;
    \vertexa{s5}{5,1}\beta;
    \vertexa{s6}{6,1}\gamma;
    \inst{s0};\fnst{s6};
    \edge{s0}{s1}a;
    \edge{s2}{s3}a;
    \edge{s3}{s4}b;
    \edge{s5}{s6}b;
    \vertexb{t0}{2,0}\alpha;
    \vertexb{t1}{3,0}\beta;
    \vertexb{t2}{4,0}\gamma;
    \inst{t0}; \fnst{t2};
    \edgeb{t0}{t1}a;
    \edgeb{t1}{t2}b;
    \homo{s0}{t0};
    \homo{s1}{t1};
    \homo{s2}{t0};
    \homo{s3}{t1};
    \homo{s4}{t2};
    \homo{s5}{t1};
    \homo{s6}{t2};
  \end{tikzpicture}
  &&
  \begin{tikzpicture}[xscale=.8]
    \vertexa{s0}{0,1}\gamma;
    \vertexa{s1}{1,1}\beta;
    \vertexa{s2}{2,1}\beta;
    \vertexa{s3}{3,1}\alpha;
    \vertexa{s4}{4,1}\alpha;
    \vertexa{s5}{5,1}\gamma;
    \inst{s0};\fnst{s5};
    \edge{s1}{s2}b;
    \edge{s3}{s4}a;
    \vertexb{t0}{0,0}\gamma;
    \vertexb{t1}{1,0}\alpha;
    \vertexb{t2}{2,0}\alpha;
    \vertexb{t3}{3,0}\beta;
    \vertexb{t4}{4,0}\beta;
    \vertexb{t5}{5,0}\gamma;
    \inst{t0};\fnst{t5};
    \edgeb{t1}{t2}a;
    \edgeb{t3}{t4}b;
    \homo{s0}{t0};
    \homo{s1}{t3};
    \homo{s2}{t4};
    \homo{s3}{t1};
    \homo{s4}{t2};
    \homo{s5}{t5};
  \end{tikzpicture}
\end{align*}
In the sequel, we shall represent homomorphisms between graphs of guarded strings in a slightly more compact way, starting directly from the natural writing of the guarded strings, and using horizontal lines and shaded parallelograms to emphasise distinguished subwords and mappings between them. For instance, the above homomorphisms can be generalised to
$g(u\diamond v)\lhd g(u\top u\diamond v\top v)$ and $g(\alpha\top u\top v\top\beta)\lhd g(\alpha\top v\top u\top\beta)$ for arbitrary guarded strings $u,v$ and atoms $\alpha,\beta$, which we can represent as follows:
\begin{align*}
  \begin{tikzpicture}[xscale=.9]
    \node(s0) at (0.2,1) {};
    \node(s12) at (1.7,1) {$\top$};
    \node(s) at (3,1) {};
    \node(s45) at (4.3,1) {$\top$};
    \node(s6) at (5.8,1) {};
    \draw (s0) to node[fill=white]{$u$} (s12.west);
    \draw (s12.east) to node[fill=white]{$u$} (s.center);
    \draw (s.center) to node[fill=white]{$v$} (s45.west);
    \draw (s45.east) to node[fill=white]{$v$} (s6);
    \node(t0) at (2.1,0) {};
    \node(t) at (3,0) {};
    \node(t6) at (3.9,0) {};
    \draw (t0.west) to node[fill=white]{$u$} (t.center);
    \draw (t.center) to node[fill=white]{$v$} (t6.east);
    \shomo{s0.east}{t0.west}{s12.west}{t.center};
    \shomo{s12.east}{t0.west}{s45.west}{t6.east};
    \shomo{s45.east}{t.center}{s6.west}{t6.east};
    \homo{s.center}{t.center};
  \end{tikzpicture}
  &&
  \begin{tikzpicture}[xscale=2]
    \node(T1) at (0,1) {$~\top~$};
    \node(T2) at (1,1) {$~\top~$};
    \node(T3) at (2,1) {$~\top~$};
    \node(A) at (T1.west) {$\alpha$};
    \node(B) at (T3.east) {$\beta$};
    \draw (T1.east) to node[fill=white]{$v$} (T2.west);
    \draw (T2.east) to node[fill=white]{$u$} (T3.west);
    \node(T'1) at (0,0) {$~\top~$};
    \node(T'2) at (1,0) {$~\top~$};
    \node(T'3) at (2,0) {$~\top~$};
    \node(A') at (T'1.west) {$\alpha$};
    \node(B') at (T'3.east) {$\beta$};
    \draw (T'1.east) to node[fill=white]{$u$} (T'2.west);
    \draw (T'2.east) to node[fill=white]{$v$} (T'3.west);
    \homo{A}{A'};
    \shomo{T1.east}{T'2.east}{T2.west}{T'3.west};
    \shomo{T2.east}{T'1.east}{T3.west}{T'2.west};
    \homo{B}{B'};
  \end{tikzpicture}
\end{align*}
Our main interest in graphs and homomorphisms comes from the following characterisation of the equational theory of \REL.
Without atoms, this characterisation appeared first in~\cite[Theorem~6]{bp:lics15:paka}, for the syntax of Kleene allegories. Its (trivial) extension to Kleene allegories with top then appeared in~\cite[Theorem~16]{pous:stacs18}.
\begin{thm}
  \label{thm:rel:graphs}
  For all expressions $e,f$, we have:
  \begin{align*}
    \REL\models e\leq f \iff \forall u\in [e],\ \exists v\in [f],\ g(u)\lhd g(v)
  \end{align*}
\end{thm}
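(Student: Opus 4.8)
The plan is to build a dictionary between relational models and graphs, show that the relational interpretation of an expression factors through its guarded-string language, and then run the two directions by composing graph homomorphisms (for $\Leftarrow$) and by turning the graph of a guarded string into a model (for $\Rightarrow$).

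First I would prove a \emph{factorisation lemma}: in every relational model on a set $X$ with partition $p$, and every valuation $\sigma\colon\Sigma\to\rel X$, one has $\hat\sigma(e)=\bigcup_{u\in[e]}\hat\sigma(u)$, reading each guarded string $u$ as an expression. To do this I would check that the map $e\mapsto\bigcup_{u\in[e]}\hat\sigma(u)$ is an $\St$-algebra homomorphism agreeing with $\hat\sigma$ on $\Sigma$, and conclude by freeness of the expression algebra. Since $[\cdot]$ is an $S$-algebra homomorphism, the cases $+,0,1,\cdot^*$ and the atom constants are immediate; the composition case relies on the coalesced product $u\diamond v$ matching relational composition, which holds because $\hat\sigma(\alpha)$ is an idempotent sub-identity and $\hat\sigma(\alpha)\cdot\hat\sigma(\beta)=\emptyset$ for $\alpha\neq\beta$, so a mismatch of border atoms contributes nothing; and the constant $\top$ is checked by hand, using that $p$ is total to get $\bigcup_{\alpha,\beta}\hat\sigma(\alpha\top\beta)=X\times X$, the interpretation of $\top$.

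Next I would record the key correspondence between membership and homomorphisms. Reading a model $\sigma$ together with a chosen pair $\tuple{x,y}$ as the graph $G\eqdef\tuple{X,E_\sigma,p,x,y}$ with labelled edges $E_\sigma\eqdef\set{\tuple{z,a,z'}\mid a\in\Sigma,\ \rin{\sigma(a)}{z}{z'}}$, I claim that for every guarded string $u$, $\rin{\hat\sigma(u)}{x}{y}$ holds iff there is a graph homomorphism from $g(u)$ to $G$. Indeed, expanding the relational composition $\hat\sigma(u)$ along $u=\alpha_0a_0\cdots a_{n-1}\alpha_n$ yields exactly a sequence $z_0=x,\dots,z_n=y$ with $p(z_i)=\alpha_i$ and, whenever $a_i\in\Sigma$, $\rin{\sigma(a_i)}{z_i}{z_{i+1}}$; this is the same data as a label-, edge-, input- and output-preserving map $i\mapsto z_i$, the $\top$-positions imposing no edge constraint in $g(u)$ and being interpreted as $X\times X$ in $G$. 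Combining this with the factorisation lemma gives the working form I will use: $\rin{\hat\sigma(e)}{x}{y}$ iff some $u\in[e]$ admits a homomorphism $g(u)\to G$.

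Both directions are then short. For $\Leftarrow$, given the right-hand side, take any model and any $\tuple{x,y}\in\hat\sigma(e)$: choose $u\in[e]$ with a homomorphism $g(u)\to G$ and $v\in[f]$ with $g(u)\lhd g(v)$, i.e.\ a homomorphism $g(v)\to g(u)$, then compose to obtain $g(v)\to G$ and conclude $\rin{\hat\sigma(f)}{x}{y}$; hence $\hat\sigma(e)\subseteq\hat\sigma(f)$, which is $\REL\models e\leq f$. For $\Rightarrow$, fix $u\in[e]$ and make $g(u)$ itself into a member of \REL, taking $X$ to be its vertices, $p\eqdef l$, and $\sigma(a)$ the set of its $a$-labelled edges; then $G=g(u)$, and the identity homomorphism witnesses $\rin{\hat\sigma(e)}{0}{n}$. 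The hypothesis $\hat\sigma(e)\subseteq\hat\sigma(f)$ then gives $\rin{\hat\sigma(f)}{0}{n}$, hence some $v\in[f]$ admitting a homomorphism $g(v)\to g(u)$, that is $g(u)\lhd g(v)$. The only genuine work lies in the factorisation lemma---specifically its $\top$ and composition cases---and in getting the atom/node-label bookkeeping right in the correspondence; once the classical Freyd--Scedrov / Andr\'eka--Bredikhin argument is equipped with node labels for atoms, the rest is routine.
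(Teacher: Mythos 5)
Your proposal is correct and takes essentially the same route as the paper's own proof: your factorisation lemma is the paper's Lemma~\ref{lem:expr:words}, your membership/homomorphism correspondence is the node-labelled Andr\'eka--Bredikhin argument of Lemma~\ref{lem:bredikhin}, and the two directions are run identically (composing homomorphisms for $\Leftarrow$, turning $g(u)$ into a relational model for $\Rightarrow$). No gaps.
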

\begin{proof}
  Cf. above references. That we need the theorem only in a small fragment here (without intersection and converse) does not seem to enable substantial simplifications. In particular, we still need to consider arbitrary graphs, and a variant of~\cite[Lemma~3]{AB95} with top.
  That we deal with guarded string languages does not bring any difficulty, once we have the idea to label the graph vertices by atoms.
  We give a proof in Appendix~\ref{app:graphs} for the sake of completeness.
\end{proof}

\begin{rem}
  For guarded strings $u,v$ without top, we have $g(u)\lhd g(v)$ iff $u=v$.
  Therefore, for expressions $e,f$ without top (whose languages only contain guarded strings without top), the above theorem reduces to $\REL\models e\leq f \iff [e]\subseteq [f]$, a standard variant of one of the equivalences recalled in the introduction~\cite[Theorem~6]{kozens96:kat:completeness:decidability}.
\end{rem}

Thanks to Theorem~\ref{thm:rel:graphs}, it suffices to relate homomorphisms between graphs of guarded strings to the notion of $\CF$-closure. We do so in the following lemma.
\begin{lem}
  \label{lem:hom:cf}
  For all guarded strings $u,v$, the following are equivalent:
  \begin{enumerate}[(i)]
  \item\label{item:one} $u\rF^* v$,
  \item\label{item:three} $g(u)\lhd g(v)$,
  \item\label{item:two} $u\in E(\CT\set v)$.
  \end{enumerate}
\end{lem}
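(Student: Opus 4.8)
The plan is to prove Lemma~\ref{lem:hom:cf} by establishing the cycle of implications
\begin{align*}
  \eqref{item:one} \Rightarrow \eqref{item:three} \Rightarrow \eqref{item:two} \Rightarrow \eqref{item:one}.
\end{align*}
The virtue of this arrangement is that each arrow is the most natural one in its own terms: the rewriting-to-homomorphism direction is a compositional step-by-step argument, the homomorphism-to-closure direction is a combinatorial analysis of what a homomorphism out of a simple path-graph can look like, and the closure-to-rewriting direction is essentially unwinding the definitions of $E$ and $\CT$ and exhibiting an explicit rewriting sequence.

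\textbf{The first implication $\eqref{item:one} \Rightarrow \eqref{item:three}$.}
First I would show that a single rewriting step $u\rF v$ (reading in the direction that witnesses $u\rF v$, so that $v$ is the ``larger'' expression and $u$ the ``smaller'') yields a homomorphism $g(u)\lhd g(v)$, and then close under reflexive–transitive composition using the fact that $\lhd$ is a preorder (so homomorphisms compose). There are two kinds of single steps to treat. For a $\rT$-step, where $u=l\diamond w\diamond r$ and $v=l\top r$, one builds a homomorphism from $g(u)$ to $g(v)$: the subpath of $g(u)$ corresponding to the $w$-portion between the two copies of the shared atom gets folded onto the single $\top$-labelled gap of $g(v)$, while the $l$- and $r$-portions map identically; this is exactly the first compact picture drawn in the text ($g(u\diamond v)\lhd g(u\top u\diamond v\top v)$ is the paradigmatic case). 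For the other kind of $\rF$-step, where $u$ is obtained from $v$ by replacing a subword $w\top w$ with $w$, one maps the single surviving copy of $w$ in $g(u)$ diagonally onto both copies of $w$ in $g(v)$, collapsing the intervening $\top$-gap; here one must check that the node labels agree (the two endpoints that get identified carry the same atom because they were copies of the same $w$). In both cases the input and output are preserved because the rewriting happens strictly inside the string.

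\textbf{The second implication $\eqref{item:three} \Rightarrow \eqref{item:two}$ — the main obstacle.}
This is the step I expect to be hardest, because it requires reverse-engineering the structure of an arbitrary homomorphism $h\colon g(u)\to g(v)$. The key observation is that $g(u)$ is a disjoint union of directed paths (the $\top$-free segments of $u$), and a homomorphism must send each $\Sigma$-labelled edge of $g(u)$ to an identically-labelled edge of $g(v)$, hence each maximal $\top$-free segment of $u$ must map to a contiguous identically-labelled directed subpath of $g(v)$. Since the edges of $g(v)$ are themselves organised into $\top$-free segments, the image of $u$ lays its segments down along the segments of $v$, and the ``$\top$-jumps'' in $u$ correspond to moves within $g(v)$ that are witnessed by $\top$-edges (or by staying put at a vertex). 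I would argue that composing $h$ with the linear order on the vertices of $g(v)$ lets one read off a factorisation of $v$ of the shape $(w\top)^n w$ after first applying $\CT$ to fill in the material that $u$'s segments legitimately cover: concretely, one shows $v\in[(w\top)^n w\text{-pattern}]$ up to $\CT$-moves, which is precisely the statement $u\in E(\CT\set v)$. The delicate point is bookkeeping: one must verify that the images of consecutive segments of $u$ tile $v$ without gaps once $\CT$ is allowed to expand $\top$'s, and that the atom-labels line up at every junction. I would lean on the compact parallelogram notation from the text to keep this tractable, and possibly factor out a sublemma describing exactly which subpaths of a guarded-string graph can be the homomorphic image of a single path.

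\textbf{The third implication $\eqref{item:two} \Rightarrow \eqref{item:one}$.}
This direction is the most mechanical. Unfolding the definition of $E$, membership $u\in E(\CT\set v)$ means there is an $n$ with $(u\top)^n u\in\CT\set v$, i.e.\ $(u\top)^n u\,\rT^*\,v$. I would then produce the rewriting sequence $u\rF^* v$ by first reducing $(u\top)^n u$ down to $u$ using $n$ applications of the $w\top w$-contraction rule with $w=u$ (each contraction $u\top u\to u$ is a legal $\rF$-step), giving $u\rF^* (u\top)^n u$ read in the expanding direction, and then appending the $\rT^*$-sequence that takes $(u\top)^n u$ to $v$ (every $\rT$-step is by definition an $\rF$-step). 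Concatenating yields $u\rF^* v$, closing the cycle. The only care needed is to confirm that the intermediate contraction steps genuinely match the $\rF$ pattern, which they do since $u\top u$ is literally of the form $w\top w$.
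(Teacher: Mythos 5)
Your overall architecture is exactly the paper's: the same cycle of implications, with the same division of labour (single-step analysis for the first arrow, segment-by-segment analysis of the homomorphism for the second, explicit rewriting for the third), and your third implication is correct essentially verbatim. The genuine problem is that you have the homomorphism pointing the wrong way throughout. By Definition~\ref{def:graph}, $g(u)\lhd g(v)$ means there is a homomorphism \emph{from $g(v)$ to $g(u)$}, whereas you construct and analyse maps from $g(u)$ to $g(v)$. This is not a cosmetic slip. For the $\rT$-step the map you describe (the $w$-portion of $g(u)$ ``gets folded onto the single $\top$-labelled gap of $g(v)$'') cannot exist: a graph homomorphism must send each labelled edge to an identically labelled edge, and the $\top$-gap has no edges to receive them. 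The correct witness goes from $g(l\top r)$ to $g(l\diamond w\diamond r)$, sending the $l$- and $r$-parts identically; the absence of an edge across the $\top$ in the source is precisely what makes this legal. Moreover the reversed relation is genuinely different from the intended one: for $u=\alpha a\beta$ and $v=\alpha\top\beta$ we have $u\rT v$ and $u\in E(\CT\set v)$, yet there is no homomorphism from $g(u)$ to $g(v)$, so under your reading the lemma would simply be false.

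The reversal also derails your second implication, which you rightly identify as the crux. You propose to decompose $u$ into its maximal top-free segments and track where they land in $g(v)$; the workable argument decomposes $v=v_0\top v_1\top\cdots\top v_n$ instead, and uses that each top-free $v_i$, being a directed path, must be sent by the homomorphism $g(v)\to g(u)$ onto a literal occurrence of $v_i$ as a contiguous subword of $u$, so that $u=l_iv_ir_i$ for some words $l_i,r_i$ with $l_0=r_n=\epsilon$ by preservation of input and output. From this one reads off directly that $(u\top)^nu\rT^n v$, by replacing the $i$th occurrence of $\top$ in $v$ with $r_{i-1}\top l_i$; this is exactly the statement $u\in E(\CT\set v)$, with $n$ the number of occurrences of $\top$ in $v$, and none of the extra ``tiling up to $\CT$-moves'' bookkeeping you anticipate is needed. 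Once the direction of $\lhd$ is fixed, your first and third implications go through essentially as in the paper.
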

\begin{proof}
  We show \ref{item:one}~$\Rightarrow$~\ref{item:three}~$\Rightarrow$~\ref{item:two}~$\Rightarrow$~\ref{item:one}. For the first implication, since $\lhd$ is a preorder, it suffices to show that $u\rF v$ entails $g(u)\lhd g(v)$. There are two cases to consider.
  \begin{itemize}
  \item either the rewriting rule associated to axiom (\ref{ax:T}) was used, i.e., $u=l\diamond w\diamond r$ and $v=l\top r$ for some guarded strings $l,w,r$. In that case we have the following homomorphism from the graph of $v$ to the graph of $u$:
    \begin{align*}
      \begin{tikzpicture}[xscale=1.2]
        \node (l) at (0,1) {};
        \node (T) at (2,1) {$\top$};
        \node (r) at (4,1) {};
        \draw (l) to node[fill=white]{$l$} (T.west);
        \draw (T.east) to node[fill=white]{$r$} (r);
        \node (l') at (-1,0) {};
        \node (wl) at (1,0) {};
        \node (wr) at (3,0) {};
        \node (r') at (5,0) {};
        \draw (l') to node[fill=white]{$l$} (wl);
        \draw (wl.west) to node[fill=white]{$w$} (wr.east);
        \draw (wr) to node[fill=white]{$r$} (r');
        \shomo{l.east}{l'.east}{T.west}{wl.west};
        \shomo{T.east}{wr.east}{r.west}{r'.west};
      \end{tikzpicture}
    \end{align*}
  \item or the rewriting rule associated to axiom (\ref{ax:F}) was used, i.e., $u=lwr$ and $v=lw\top wr$ for some words $l,r$ and guarded string $w$. In that case we have the following homomorphism from the graph of $v$ to the graph of $u$:
    \begin{align*}
      \begin{tikzpicture}[xscale=1.2]
        \node (l) at (-2,1) {};
        \coordinate (u) at (0,1);
        \node (T) at (2,1) {$\top$};
        \coordinate (v) at (4,1);
        \node (r) at (6,1) {};
        \draw (l) to node[fill=white]{$l$} (u.west);
        \draw (u.east) to node[fill=white]{$w$} (T.west);
        \draw (T.east) to node[fill=white]{$w$} (v.west);
        \draw (v.east) to node[fill=white]{$r$} (r);
        \node (l') at (-1,0) {};
        \coordinate (wl) at (1,0);
        \coordinate (wr) at (3,0);
        \node (r') at (5,0) {};
        \draw (l') to node[fill=white]{$l$} (wl);
        \draw (wl) to node[fill=white]{$w$} (wr);
        \draw (wr) to node[fill=white]{$r$} (r');
        \shomo{l.east}{l'.east}{u.west}{wl.west};
        \shomo{u.east}{wl.east}{T.west}{wr.west};
        \shomo{T.east}{wl.east}{v.west}{wr.west};
        \shomo{v.east}{wr.east}{r.west}{r'.west};
      \end{tikzpicture}
    \end{align*}
  \end{itemize}
  For the second implication, assume $g(u)\lhd g(v)$. Let $n$ be the number of occurrences of $\top$ in $v$, and let $v_0,\dots,v_n$ be the top-free guarded strings such that $v=v_0\top v_1\top \cdots \top v_n$.
  Since they are top-free, those subwords must be mapped to subwords of $u$. For instance, when $n=3$, the homomorphism may look as follows:
  \begin{align*}
    \begin{tikzpicture}[xscale=1.2]
      \node (v) at (0,1) {};
      \node (v') at (8,1) {};
      \node (T1) at (2,1) {$\top$};
      \node (T2) at (4,1) {$\top$};
      \node (T3) at (6,1) {$\top$};
      \draw (v) to node[fill=white]{$v_0$} (T1.west);
      \draw (T1.east) to node[fill=white]{$v_1$} (T2.west);
      \draw (T2.east) to node[fill=white]{$v_2$} (T3.west);
      \draw (T3.east) to node[fill=white]{$v_3$} (v');
      \node (u) at (1,0) {};
      \node (u') at (7,0) {};
      \draw (u) to node[fill=white]{$u$} (u');
      \shomo{v.east}{u.east}{T1.west}{$(T1.west)-(-1,1)$};
      \shomo{T1.east}{$(T1.east)-(-2.4,1)$}{T2.west}{$(T2.west)-(-2.4,1)$};
      \shomo{T2.east}{$(T2.east)-(.5,1)$}{T3.west}{$(T3.west)-(.5,1)$};
      \shomo{T3.east}{$(T3.east)-(1,1)$}{v'.west}{u'.west};
    \end{tikzpicture}
  \end{align*}
  For all $0\leq i\leq n$, let $l_i,r_i$ be the words such that $u=l_iv_ir_i$. We have that $l_0$ and $r_n$ must be the empty word since inputs and outputs must be preserved by homomorphisms.
  We have $(u\top)^nu \rT^n v$: we can obtain $(u\top)^nu$ from $v$ by replacing the $i$th occurrence of $\top$ in $v$ with the word $r_{i-1}\top l_i$, for $0<i\leq n$.
  This suffices to conclude that $u\in E(\CT\set v)$: we have proven \ref{item:three}~$\Rightarrow$~\ref{item:two}.
  As an example, when $n=3$, the situation may be depicted as follows:
  \begin{align*}
    \begin{tikzpicture}[xscale=1.15]
      \node (v) at (0,2) {};
      \node (T1) at (2,2) {$\top$};
      \node (T2) at (4,2) {$\top$};
      \node (T3) at (6,2) {$\top$};
      \node (v') at (8,2) {};
      \draw (v) to node[fill=white]{$v_0$} (T1.west);
      \draw (T1.east) to node[fill=white]{$v_1$} (T2.west);
      \draw (T2.east) to node[fill=white]{$v_2$} (T3.west);
      \draw (T3.east) to node[fill=white]{$v_3$} (v');
      \node (l) at (-1,1) {};
      \node (T'1) at (1.3,1) {$r_0~\top~l_1$};
      \node (T'2) at (4,1) {$r_1~\top~l_2$};
      \node (T'3) at (6.7,1) {$r_2~\top~l_3$};
      \node (r) at (9,1) {};
      \draw (l) to node[fill=white]{$v_0$} (T'1.west);
      \draw (T'1.east) to node[fill=white]{$v_1$} (T'2.west);
      \draw (T'2.east) to node[fill=white]{$v_2$} (T'3.west);
      \draw (T'3.east) to node[fill=white]{$v_3$} (r);
      \shomo{v.east}{l.east}{T1.west}{T'1.west};
      \shomo{T1.east}{T'1.east}{T2.west}{T'2.west};
      \shomo{T2.east}{T'2.east}{T3.west}{T'3.west};
      \shomo{T3.east}{T'3.east}{v'.west}{r.west};
      \draw [decorate,decoration={calligraphic brace}]($(v.east)+(0,.3)$)--($(v'.west)+(0,.3)$)
        node[pos=.5, above=5pt]{$v$};
      \draw [decorate,decoration={calligraphic brace}]($(T'1.west)+(.35,-.3)$)--($(l.east)-(0,.3)$);
      \draw [decorate,decoration={calligraphic brace}]($(T'2.west)+(.35,-.3)$)--($(T'1.east)-(.35,.3)$);
      \draw [decorate,decoration={calligraphic brace}]($(T'3.west)+(.35,-.3)$)--($(T'2.east)-(.35,.3)$);
      \draw [decorate,decoration={calligraphic brace}]($(r.west)+(0,-.3)$)--($(T'3.east)-(.35,.3)$);
      \begin{scope}[yshift=4mm]
        \node (l') at (-1,0) {};
        \node (T''1) at (1.3,0) {$\top$};
        \node (T''2) at (4,0) {$\top$};
        \node (T''3) at (6.7,0) {$\top$};
        \node (r') at (9,0) {};
        \draw (l') to node[fill=white]{$u$} (T''1.west);
        \draw (T''1.east) to node[fill=white]{$u$} (T''2.west);
        \draw (T''2.east) to node[fill=white]{$u$} (T''3.west);
        \draw (T''3.east) to node[fill=white]{$u$} (r');
        \node (u) at (2.8,-1) {};
        \node (u') at (5.2,-1) {};
        \draw (u) to node[fill=white]{$u$} (u');
        \shomo{l'.east}{u.east}{T''1.west}{u'.west};
        \shomo{T''1.east}{u.east}{T''2.west}{u'.west};
        \shomo{T''2.east}{u.east}{T''3.west}{u'.west};
        \shomo{T''3.east}{u.east}{r'.west}{u'.west};
      \end{scope}
    \end{tikzpicture}
  \end{align*}
  For the last implication, assume that $u\in E(\CT\set v)$. There exists $n$ such that $(u\top)^nu\rT^*v$, and thus in particular $(u\top)^nu\rF^*v$. Finally observe that $u\rF^n(u\top)^nu$ using $n$ rewriting steps using (\ref{ax:F}), so that we can conclude by transitivity: $u\rF^n(u\top)^nu\rF^*v$.
\end{proof}

The above lemma has two important immediate consequences.
First we have the announced factorisation of the closure $\CF$, and second, combined with Theorem~\ref{thm:rel:graphs}, we obtain a characterisation of the equational theory of \REL in terms of closed languages:
\begin{prop}
  \label{prop:hom:cf}
  We have $\CF=E\circ \CT$.
\end{prop}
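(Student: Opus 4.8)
The plan is to reduce the functional identity $\CF=E\circ\CT$ entirely to Lemma~\ref{lem:hom:cf}, which already carries out the combinatorial work at the level of individual guarded strings. The only extra ingredient is the observation that all three operations involved distribute over arbitrary unions of guarded string languages; this lets me compare them pointwise on singletons and then reassemble.

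Concretely, I would first record the three distributivity facts, each immediate from the defining set-comprehensions. We have $\CT(L)=\bigcup_{v\in L}\CT\set v$, since membership $u\in\CT(L)$ is witnessed by some single $v\in L$ with $u\rT^* v$; likewise $\CF(L)=\bigcup_{v\in L}\CF\set v$, by the same reasoning applied to $\rF^*$; and $E\bigl(\bigcup_i L_i\bigr)=\bigcup_i E(L_i)$, because the witness $(w\top)^nw$ in the definition of $E$ must land in a single $L_i$. Next I would invoke the equivalence \ref{item:one}~$\Leftrightarrow$~\ref{item:two} of Lemma~\ref{lem:hom:cf}, which for a fixed $v$ reads exactly as $\CF\set v=E(\CT\set v)$: indeed $u\in\CF\set v$ iff $u\rF^* v$ iff $u\in E(\CT\set v)$. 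Chaining these equalities then yields
\[
  \CF(L)=\bigcup_{v\in L}\CF\set v=\bigcup_{v\in L}E(\CT\set v)=E\Bigl(\bigcup_{v\in L}\CT\set v\Bigr)=E(\CT(L)),
\]
which is precisely the claimed identity.

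This argument has essentially no obstacle of its own: all the substance---relating the rewriting relation $\rF^*$ to the two-step factorisation through $\rT$ and the ``folding'' operator $E$---is concentrated in Lemma~\ref{lem:hom:cf} (and, through it, in the graph-homomorphism characterisation underlying Theorem~\ref{thm:rel:graphs}). The one point I would take care to verify explicitly is that the middle operator $E$ genuinely commutes with unions rather than merely being monotone, since this is the step that could silently fail; it holds because the defining condition on $w$ quantifies over a single membership $(w\top)^nw\in L$, so there is no interaction between distinct elements $v$ of $L$.
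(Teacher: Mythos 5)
Your proof is correct and is exactly the argument the paper intends: it derives the identity from the equivalence (i)~$\Leftrightarrow$~(iii) of Lemma~\ref{lem:hom:cf}, which the paper presents as making the proposition an ``immediate consequence'' without spelling out the details. Your explicit check that $\CT$, $\CF$, and $E$ all commute with unions is precisely the routine verification being left implicit there.
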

\begin{prop}
  \label{prop:rel}
  For all expressions $e,f$, we have:
  \begin{align*}
    \REL\models e=f \iff \CF[e]=\CF[f]
  \end{align*}
\end{prop}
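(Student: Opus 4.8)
The plan is to chain together the two results that immediately precede this statement: the graph-theoretic characterisation of $\REL$ (Theorem~\ref{thm:rel:graphs}) and the triple equivalence of Lemma~\ref{lem:hom:cf}. Since $\CF[e]=\CF[f]$ holds iff $\CF[e]\subseteq\CF[f]$ and $\CF[f]\subseteq\CF[e]$, and similarly $\REL\models e=f$ amounts to the two inequations $\REL\models e\leq f$ and $\REL\models f\leq e$, it suffices to prove the one-directional equivalence
\[
  \REL\models e\leq f \iff \CF[e]\subseteq\CF[f],
\]
and then apply it symmetrically. So first I would reduce the equational statement to this inequational form.

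For the inequational equivalence, I would unfold both sides into statements about guarded strings. By Theorem~\ref{thm:rel:graphs}, $\REL\models e\leq f$ is equivalent to: for every $u\in[e]$ there is $v\in[f]$ with $g(u)\lhd g(v)$. On the other side, unfolding the definition of $\CF$ (Definition~\ref{def:cf}), we have $\CF[e]\subseteq\CF[f]$ iff every guarded string in $\CF[e]$ lies in $\CF[f]$; and a guarded string $w$ lies in $\CF[L]$ exactly when $w\rF^* u$ for some $u\in L$. The crucial bridge is Lemma~\ref{lem:hom:cf}, which tells us that $g(u)\lhd g(v)$ is equivalent to $u\rF^* v$. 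Thus the graph-homomorphism condition from Theorem~\ref{thm:rel:graphs} translates directly into a rewriting condition compatible with the definition of $\CF$.

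The core of the argument is then to show the two reachability pictures coincide. For the forward direction, assuming $\REL\models e\leq f$, take any $w\in\CF[e]$, so $w\rF^* u$ for some $u\in[e]$; by hypothesis there is $v\in[f]$ with $g(u)\lhd g(v)$, hence $u\rF^* v$ by Lemma~\ref{lem:hom:cf}, and by transitivity of $\rF^*$ we get $w\rF^* v$ with $v\in[f]$, i.e.\ $w\in\CF[f]$. For the converse, assuming $\CF[e]\subseteq\CF[f]$, take $u\in[e]$; since $u\rF^* u$ trivially, $u\in\CF[e]\subseteq\CF[f]$, so $u\rF^* v$ for some $v\in[f]$, whence $g(u)\lhd g(v)$ by Lemma~\ref{lem:hom:cf}, which is exactly the condition Theorem~\ref{thm:rel:graphs} requires for $\REL\models e\leq f$.

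The step I expect to require the most care is the converse direction, where one must be sure that membership $u\in\CF[e]$ gives a \emph{single} string $v\in[e]$ (here $[f]$) reached by $\rF^*$, matching the existential shape of Theorem~\ref{thm:rel:graphs}; this is handled cleanly by the equivalence \ref{item:one}~$\Leftrightarrow$~\ref{item:three} of Lemma~\ref{lem:hom:cf}, so no genuine obstacle remains—the whole proposition is essentially the composition of two already-established equivalences, and the work is in laying out the transitivity bookkeeping correctly.
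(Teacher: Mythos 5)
Your proposal is correct and follows essentially the same route as the paper: Theorem~\ref{thm:rel:graphs} composed with Lemma~\ref{lem:hom:cf}, then antisymmetry. The only cosmetic difference is that the paper packages your explicit transitivity/reflexivity bookkeeping for $\rF^*$ into the general closure-operator fact that $L\subseteq \CF(K)$ iff $\CF(L)\subseteq\CF(K)$.
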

\begin{proof}
  For all $e,f$, we have:
  \begin{align*}
    \tag{by Theorem~\ref{thm:rel:graphs}}
    \REL\models e\leq f & \iff \forall u\in [e],\ \exists v\in [f],\ g(u)\lhd g(v) \\
    \tag{by Lemma~\ref{lem:hom:cf}}
    & \iff [e]\subseteq\CF[f]
  \end{align*}
  The initial statement follows by antisymmetry and the fact that $\CF$ is a closure (so that for all languages $L,K$, $L\subseteq \CF(K)$ iff $\CF(L)\subseteq \CF(K)$).
\end{proof}

\subsection{Completeness w.r.t.\ closed guarded string languages}
\label{ssec:rel:compl:clang}

It remains to show that \KATF is complete w.r.t.\ the previous closed language interpretation $(\CF[\cdot])$.
We use reductions in order to do so: we find a counterpart to the function $r$ from Section~\ref{sec:lang} (Definition~\ref{def:r}), for the $F$-closure rather than the $T$-closure.
By Proposition~\ref{prop:hom:cf}, and since we already have the function $r$ for $T$-closure, it actually suffices to find a function $s$ that corresponds to the function $E$, i.e., such that for all expressions $e$, $s(e)$ is an expression whose language is $E[e]$ and such that $\KATF\proves e=s(e)$.

\subsubsection{Interlude: monoids and the square root of a language}
\label{sssec:interlude}

Before defining the aforementioned reduction $s$ for $E$, let us consider an exercise about plain regular languages.
Define the \emph{square root of a language $L$} as follows:
\begin{align*}
  \sqrt{L}\eqdef\set{w\mid w^2 \in L}
\end{align*}
How to prove that this operation preserves regularity (i.e., if $L$ is regular, then so is $\sqrt{L}$)?
This is not obvious with automata-based techniques, but there is a simple and elegant solution using the characterisation of regular languages as those recognised by finite monoids~\cite{eilenberg1974automata,sakarovitch_2009}.

Let us recall the corresponding definitions.
A \emph{monoid} is a tuple $\tuple{M,\cdot,1}$ where $M$ is a set and $\cdot$ is an associative binary operation on $M$ with $1$ as neutral element; it is \emph{finite} when $M$ is so.
Words on $\Sigma$ form a monoid, in fact the free monoid over $\Sigma$.
A language $L$ is \emph{recognised by a finite monoid} $M$ if there exists a homomorphism $h$ from words to $M$ and a subset $P$ of $M$ such that $L=\ih(P)$ (i.e., for all words $w$, $w\in L$ iff $h(w)\in P$).

\begin{prop}
  \label{prop:reg:rec}
  A language is regular iff it is recognised by a finite monoid.
\end{prop}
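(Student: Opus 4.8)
The plan is to prove both implications through the bridge of finite automata, relying on the classical equivalence between regular expressions and deterministic finite automata (Kleene's theorem): a language is regular, i.e.\ of the form $\plang e$ for some regular expression $e$, exactly when it is accepted by some deterministic finite automaton over $\Sigma$.

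For the right-to-left implication, suppose $L=\ih(P)$ for a homomorphism $h\colon\Sigma^*\to M$ into a finite monoid $M$ and a subset $P\subseteq M$. I would take $M$ itself as the state space, with the unit $1=h(\epsilon)$ as initial state, $P$ as the set of accepting states, and transitions $m\xrightarrow{a}m\cdot h(a)$ for $a\in\Sigma$. A straightforward induction on words shows that reading $w$ from the initial state $1$ leads to the state $h(w)$, so the automaton accepts $w$ iff $h(w)\in P$ iff $w\in L$. As $M$ is finite, this is a finite automaton, and Kleene's theorem yields a regular expression denoting $L$.

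For the left-to-right implication, suppose $L$ is regular and fix, by Kleene's theorem, a deterministic finite automaton with state set $Q$, transition map $\delta$, initial state $q_0$ and accepting set $F$, accepting $L$. To each word $w$ I associate the function $\delta_w\colon Q\to Q$ sending a state $q$ to the state reached from $q$ after reading $w$. The set $M\eqdef\set{\delta_w\mid w\in\Sigma^*}$ is a submonoid of the monoid $Q^Q$ of all functions on $Q$ under composition, with the identity function as unit; since $Q$ is finite, so is $M$. The assignment $h\colon w\mapsto\delta_w$ is a monoid homomorphism, and setting $P\eqdef\set{f\in M\mid f(q_0)\in F}$ gives $w\in L$ iff $\delta_w(q_0)\in F$ iff $h(w)\in P$, so that $L=\ih(P)$ is recognised by the finite monoid $M$.

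The only genuinely delicate point is the direction of composition in this transition monoid: one must verify that $\delta_{uv}=\delta_v\circ\delta_u$ (reading $u$ then $v$), and hence equip $Q^Q$ with the matching composition order, so that $h$ is a homomorphism rather than an anti-homomorphism. Everything else---associativity, neutrality of the identity and of the empty word, and the inductive verifications of the two accepted languages---is routine, and the whole argument is standard~\cite{eilenberg1974automata,sakarovitch_2009}. This transition-monoid construction is precisely the device that will be adapted to guarded strings in Section~\ref{sssec:gmonoids}.
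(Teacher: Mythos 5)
Your proof is correct, and the monoid-to-automaton direction is exactly the paper's. For the converse direction you take a mildly different route: you determinise first and use the transition monoid of \emph{functions} $Q\to Q$, whereas the paper starts from a \emph{non-deterministic} automaton and uses the monoid of binary \emph{relations} on its state space. Both arguments are standard and equally valid for the proposition as stated, and your remark about the direction of composition ($\delta_{uv}=\delta_v\circ\delta_u$) is the right point to flag. The difference does matter downstream, though: the paper deliberately recalls the relation-monoid version because it is the one reused for guarded strings (Proposition~\ref{prop:kleene1}) and for the complexity analysis in Section~\ref{ssec:alg}, where monoid elements are 01-matrices of dimension linear in the size of the expression, storable in quadratic space. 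Going through a deterministic automaton as you do would in general incur an exponential blowup in the number of states before the monoid is even formed, which would break that space bound (even though the proposition itself, which only asserts an equivalence, is unaffected).
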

This result is entirely standard; we recall a proof below which will be helpful later to understand our construction on guarded string languages, and its complexity.
\begin{proof}\hfill
  \begin{itemize}
  \item Given a finite non-deterministic automaton for a language $L$, the binary relations on its state-space form a finite monoid.
    Consider the unique homomorphism $h$ mapping a letter to its transition relation; this homomorphism maps a word $u$ to the relation containing all pairs of states that can be related by a $u$-labelled path in the automaton. Let $P$ consist of all relations containing at least one pair of an initial state and a final state. We have $L=\ih(P)$ by construction.
  \item Given a finite monoid $\tuple{M,\cdot,1}$, a homomorphism $h$ and a subset $P$ such that $L=\ih(P)$, we construct a finite deterministic automaton for $L$ as follows: states are elements of the monoid; the initial state is the neutral element $1$; the transition function maps a state $x$ and a letter $a$ to the state $x\cdot h(a)$; accepting states are the elements of $P$.
    \qedhere
  \end{itemize}
\end{proof}
Let us also recall the following basic property:
\begin{lem}
  \label{lem:monoid}
  Let $x,y$ be elements of a monoid and let $h$ be a homomorphism from words to that monoid.
  We have the following language inclusion:
  \begin{align*}
    \ih(x)\cdot \ih(y) \subseteq \ih(x\cdot y)
  \end{align*}
\end{lem}
\begin{proof}
  A word in the left-hand side has shape $uv$ for some words $u,v$ such that $h(u)=x$ and $h(v)=y$.
  We then check that $h(uv)=h(u)\cdot h(v)=x\cdot y$, as required.
\end{proof}

To solve the exercise, let us generalise the previous square root function to subsets of arbitrary monoids.
Given a subset $P$ of a monoid, let $\sqrt P$ be the following subset:
\begin{align*}
  \sqrt{P}\eqdef\set{x\mid x^2 \in P}
\end{align*}
\begin{prop}
  \label{prop:square}
  If $L=\ih(P)$, then $\sqrt L=\ih(\sqrt P)$
\end{prop}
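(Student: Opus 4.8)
The plan is to establish the set equality $\sqrt L=\ih(\sqrt P)$ by characterising, for an arbitrary word $w$, when $w$ belongs to each side, and reducing both conditions to one and the same condition on the monoid element $h(w)$. So I would argue by a chain of equivalences rather than by two inclusions.

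First I would unfold the left-hand side. By definition of the square root of a language, $w\in\sqrt L$ iff $w^2\in L$, and since $L=\ih(P)$ this is equivalent to $h(w^2)\in P$. The one substantive step is then to use that $h$ is a monoid homomorphism to rewrite $h(w^2)=h(w\cdot w)=h(w)\cdot h(w)=h(w)^2$, so that membership of $w$ in $\sqrt L$ becomes the condition $h(w)^2\in P$.

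For the right-hand side I would simply unfold definitions: $w\in\ih(\sqrt P)$ iff $h(w)\in\sqrt P$, which by definition of the square root of a subset of a monoid means exactly $h(w)^2\in P$. Both sides thus reduce to the single condition $h(w)^2\in P$, and the claim follows by transitivity of these equivalences.

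I do not expect any genuine obstacle here: the entire argument is a chain of definitional equivalences, whose only non-bookkeeping ingredient is the commutation $h(w^2)=h(w)^2$ granted by $h$ being a homomorphism. This is precisely the observation that makes the monoid-based approach cleaner than an automata-based one, since squaring a word corresponds to squaring its image; consequently $\sqrt L$ is recognised by the \emph{same} finite monoid $M$ via the subset $\sqrt P$, which is directly computable from $P$, witnessing that $\sqrt L$ is regular.
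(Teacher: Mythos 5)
Your proof is correct and follows exactly the same route as the paper's: a chain of definitional equivalences reducing both $w\in\sqrt L$ and $w\in\ih(\sqrt P)$ to the condition $h(w)^2\in P$, with the homomorphism property $h(w^2)=h(w)^2$ as the only substantive step.
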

\begin{proof}
  For all words $w$, we have
  \begin{align*}
    w\in\sqrt L
    \iff~& w^2 \in L\\
    \iff~& h(w^2) \in P\\
    \tag{$h$ is a homomorphism}
    \iff~& h(w)^2 \in P\\
    \iff~& h(w) \in \sqrt P\\
    \tag*\qedhere
    \iff~& w \in \ih(\sqrt P)
  \end{align*}
\end{proof}
Thus, if $L$ is regular, so is $\sqrt L$: we have solved our exercise.
In particular, we can obtain a square root function on regular expressions such that $\plang{\sqrt e}=\sqrt{\plang e}$ for all $e$.

\medskip

Adapting the above idea, we will obtain a function $s$ such that $[s(e)]=E[e]$ for all expressions $e$.
This does not explain how to show $\KATF\proves s(e)\leq e$, however. We first show how to do so with our square root example. There, the counterpart to axiom~\eqref{ax:F} is the inequation $x\leq x^2~(S)$. Accordingly, let \KAS denote the union of \KA and $(S)$.
\begin{prop}
  For all regular expressions $e$, we have $\KAS\proves \sqrt e \leq e$.
\end{prop}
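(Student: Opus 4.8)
The plan is to avoid the naive (and doomed) attempt of summing the inequation $w\leq e$ over all $w\in\sqrt{\plang e}$, which fails because this set is infinite. Instead I would organise the proof around the finite monoid recognising $\plang e$, exactly as in Propositions~\ref{prop:reg:rec} and~\ref{prop:square}. Fix a finite monoid $M$, a homomorphism $h$ from words to $M$, and a subset $P\subseteq M$ with $\plang e=\ih(P)$. For each $x\in M$, the language $\ih(x)$ is recognised by $M$ (taking $\set x$ as the accepting set), hence regular, so I can pick a regular expression $e_x$ with $\plang{e_x}=\ih(x)$; there are only finitely many of these. Since the $\ih(x)$ partition the set of all words, this yields two decompositions I will lean on: $\plang e=\bigcup_{x\in P}\ih(x)$, whence $\KA\proves e=\sum_{x\in P}e_x$; and, using Proposition~\ref{prop:square}, $\sqrt{\plang e}=\ih(\sqrt P)=\bigcup_{x\in\sqrt P}\ih(x)$, whence $\KA\proves \sqrt e=\sum_{x\in\sqrt P}e_x$ (here $\sqrt e$ is the chosen expression denoting $\sqrt{\plang e}$, which is determined up to $\KA$-equality by completeness of $\KA$).

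Given these decompositions, it suffices to show $\KAS\proves e_x\leq e$ for each $x\in\sqrt P$, and then sum over the finitely many such $x$. I would establish this with a three-step chain,
\begin{align*}
  \KAS\proves\quad e_x \;\leq\; e_x\cdot e_x \;\leq\; e_{x\cdot x} \;\leq\; e .
\end{align*}
The first inequation is a direct instance of axiom $(S)$, i.e.\ $x\leq x^2$, applied to the term $e_x$. The second is the algebraic reading of Lemma~\ref{lem:monoid}: since $\plang{e_x\cdot e_x}=\ih(x)\cdot\ih(x)\subseteq\ih(x\cdot x)=\plang{e_{x\cdot x}}$, completeness of $\KA$ gives $\KA\proves e_x\cdot e_x\leq e_{x\cdot x}$. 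For the third, the definition of $\sqrt P$ says precisely that $x\in\sqrt P$ iff $x\cdot x\in P$, so $e_{x\cdot x}$ occurs as one of the summands of $e=\sum_{y\in P}e_y$, giving $\KA\proves e_{x\cdot x}\leq e$. Chaining these yields $\KAS\proves e_x\leq e$, and hence $\KAS\proves\sqrt e=\sum_{x\in\sqrt P}e_x\leq e$.

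The only conceptual obstacle is the one already flagged: a word-by-word argument would require an infinite derivation. The monoid disposes of this cleanly, since grouping words by their image in $M$ replaces the infinite family $\set{w\mid w^2\in\plang e}$ by the finite index set $\sqrt P$, over which a Kleene-algebra sum is legitimate. Everything else is bookkeeping: the two places where I turn a semantic language inclusion into a $\KA$-derivation both appeal to completeness of $\KA$, and the multiplicative step is exactly the content of Lemma~\ref{lem:monoid}. I expect to reuse this same template when building the reduction $s$ for axiom~(\ref{ax:F}) in the guarded-string setting.
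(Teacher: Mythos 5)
Your proof is correct and follows essentially the same route as the paper's: decompose $e$ and $\sqrt e$ via the languages $\ih(x)$ of a finite recognising monoid, apply axiom $(S)$ to each summand $e_x$ with $x\in\sqrt P$, and discharge $e_x\cdot e_x\leq e$ by Lemma~\ref{lem:monoid} and completeness of \KA. The only cosmetic difference is that you factor the last step explicitly through $e_{x\cdot x}$, where the paper folds $\plang{e_p^2}\subseteq\plang{e_{p^2}}\subseteq\plang e$ into a single appeal to completeness.
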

\begin{proof}
  Since $\plang e$ is regular,
  we have $\plang e=\ih(P)=\bigcup_{p\in P}\ih(p)$ for some homomorphism $h$ and subset $P$ of a finite monoid.
  For all elements $x$ in that monoid, let $e_x$ be a regular expression for the language $\ih(x)$ (this language being regular by Proposition~\ref{prop:reg:rec}).

  The previous language equation can be rewritten as $\plang e=\plang{\sum_{p\in P}e_p}$.
  Similarly, we have $\plang{\sqrt e}=\sqrt{\plang e}=\plang{\sum_{p\in \sqrt P}e_p}=\plang{\sum_{p^2\in P}e_p}$.
  By completeness of \KA and then using axiom~$(S)$, we deduce
  \begin{align*}
    \KAS\proves \sqrt e = \sum_{p^2\in P} e_p\leq \sum_{p^2\in P} e_p^2
  \end{align*}
  Therefore, to prove the announced statement, it suffices to show that $\KAS\proves e_p^2\leq e$ whenever $p^2\in P$. This follows by completeness of \KA, since we have
  \begin{align*}
    \plang{e_p^2}
    =\plang{e_p}^2
    =\ih(p)^2
    \subseteq \ih(p^2)
    =\plang{e_{p^2}}
    \subseteq\plang e
  \end{align*}
  (Where we use Lemma~\ref{lem:monoid} for the first inclusion, and $p^2\in P$ for the second one.)
\end{proof}

\subsubsection{Monoids for guarded string languages}
\label{sssec:gmonoids}

On regular expressions with top but without atoms, one can easily adapt the previous idea to obtain a reduction for $E$.
Indeed, if $L=\ih(P)$, then $E(L)=\ih(P')$ for $P'$ defined as follows:
\begin{align*}
  P' = \set{x\mid \exists n, (x\cdot h(\top))^n\cdot x \in P}
\end{align*}
(Note that $P'$ can be computed when the monoid is finite: there are only finitely many powers of $x\cdot h(\top)$.)
This approach gives a simpler alternative to the automata construction we defined in~\cite[Section~4.2]{pw:concur22:katop}.

However, to deal with full expressions and guarded string languages, we first need to extend the theory of recognition by finite monoids to such languages.

To do so, we rely on the presentation of guarded strings as elements of $(\Atom\times\Sigmat)^*\times \Atom$. A word over $\Atom\times\Sigmat$ can be seen as a word over $\SigmatA$ (e.g., $(\alpha,a)(\beta,b)=\alpha a\beta b$), and when $u$ is such a word and $\alpha$ is an atom, $u\alpha$ is a guarded string. All guarded strings can be decomposed in this way (uniquely).

Every regular expression $\e$ over $\Atom\times\Sigmat$ can be seen as an expression $\underline\e$ by replacing each letter $(\alpha,a)$ by the expression $\alpha\cdot a$. We call \emph{clean} the  expressions of the form $\underline\e$. The languages of such expressions are almost guarded string languages: their words only miss the final atom.
\begin{fact}
  \label{fact:clean}
  For all clean expressions $e$ and atoms $\alpha$, we have $[e\cdot \alpha]=\plang{e}\cdot\set\alpha$.
\end{fact}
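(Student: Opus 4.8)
The plan is to unfold the definitions and reduce the claim to the observation that the normalising map $\gs$ does nothing to a language that already consists of genuine guarded strings. By definition of the guarded string interpretation and of concatenation of plain languages over $\SigmatA$, we have
\[
  [e\cdot\alpha]=\gs\plang{e\cdot\alpha}=\gs\bigl(\plang{e}\cdot\set\alpha\bigr),
\]
so it suffices to show that $\gs$ fixes the language $L\eqdef\plang e\cdot\set\alpha$.

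First I would pin down the shape of $L$. Since $e=\underline\e$ is clean, an immediate induction on $\e$ shows that every word of $\plang e$ is a (possibly empty) concatenation of blocks $\beta b$ with $\beta\in\Atom$ and $b\in\Sigmat$; hence it strictly alternates between single atoms and letters, starting with an atom and ending with a letter when non-empty. Appending the final atom $\alpha$ therefore turns every word of $L$ into a bona fide guarded string over $\Sigmat$, one in which every maximal run of atoms has length exactly one. This is the precise sense in which clean languages ``only miss the final atom''.

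It then remains to prove the general fact that $\gs(L)=L$ whenever every word of $L$ is a guarded string. The inclusion $L\subseteq\gs(L)$ is immediate: given a guarded string $\alpha_0 a_0\cdots a_{n-1}\alpha_n\in L$, choosing all exponents $i_0,\dots,i_n$ equal to $1$ exhibits it as its own witness, so it lies in $\gs(L)$. The converse inclusion is where the guarded-string hypothesis does the work, and is the main point: if $\alpha_0 a_0\cdots a_{n-1}\alpha_n\in\gs(L)$, there is a witness $\alpha_0^{i_0}a_0\cdots a_{n-1}\alpha_n^{i_n}\in L$, which by hypothesis is itself a guarded string. I would argue that the guarded-string shape forces every exponent $i_j$ to be $1$: an interior exponent $0$ would make two letters adjacent, a boundary exponent $0$ would leave the word starting or ending with a letter, and any exponent $\geq 2$ would create two consecutive equal atoms---all of which are impossible in a guarded string. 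Hence the witness coincides with $\alpha_0 a_0\cdots a_{n-1}\alpha_n$, which therefore already belongs to $L$.

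I expect the only delicate step to be this exponent-forcing argument, and in particular the case analysis distinguishing the boundary exponents $i_0,i_n$ from the interior ones; everything else is a direct unfolding of the definitions of $[\cdot]$, $\gs$, and cleanness. Combining the two inclusions gives $\gs(L)=L$, and hence $[e\cdot\alpha]=\plang e\cdot\set\alpha$.
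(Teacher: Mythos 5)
Your proof is correct and matches the paper's intent: the paper states this as a Fact without proof, the implicit justification being exactly your observation that words of a clean expression's language "only miss the final atom", so that after appending $\alpha$ the normalisation $\gs$ acts as the identity. Your exponent-forcing argument for $\gs(L)\subseteq L$ is the right (and only slightly delicate) point, and it is handled correctly.
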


\begin{defi}
  \label{def:reco}
  A \emph{recogniser} is a tuple $\tuple{M,h,P}$ where $M$ is a monoid, $h$ is a homomorphism from $(\Atom\times\Sigmat)^*$ to $M$, and $P$ is a subset of $M\times\Atom$;
  it is \emph{finite} when $M$ is so.
  Given such a recogniser, we define the following guarded string language:
  \begin{align*}
    \ih(P)\eqdef \set{u\alpha \mid P(h(u),\alpha)}
  \end{align*}
\end{defi}
We also keep the notation from Section~\ref{sssec:interlude}: when $x$ is an element of the monoid, $\ih(x)=\set{u\mid h(u)=x}$ is a language over $\Atom\times\Sigmat$. In particular, we have
\begin{align*}
  \ih(P)=\bigcup_{P(p,\alpha)}\ih(p)\cdot\set\alpha
\end{align*}

\begin{prop}
  \label{prop:kleene1}
  For all expressions $e$, there exists a finite recogniser $\tuple{M,h,P}$ such that $[e]=\ih(P)$.
\end{prop}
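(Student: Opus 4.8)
The plan is to reduce to the classical finite-monoid recognition of ordinary regular languages (Proposition~\ref{prop:reg:rec}), exploiting the decomposition of guarded strings as elements of $(\Atom\times\Sigmat)^*\times\Atom$ and the fact that a recogniser records the trailing atom separately, in the second component of $P\subseteq M\times\Atom$. The central object I would produce is a family of clean expressions: for each atom $\alpha$, a clean expression $e_\alpha$ over $\Atom\times\Sigmat$ such that
\begin{align*}
  [e]=\bigcup_{\alpha\in\Atom}\plang{e_\alpha}\cdot\set\alpha .
\end{align*}
By Fact~\ref{fact:clean} this is the same as $[e]=\bigcup_{\alpha}[e_\alpha\cdot\alpha]$, i.e.\ the guarded strings of $[e]$ ending in $\alpha$ are exactly those of the form $u\alpha$ with $u\in\plang{e_\alpha}$. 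Intuitively $e_\alpha$ is obtained by keeping, in the clean rendering of $[e]$, the words whose (implicit) final atom is $\alpha$ and then stripping that atom.

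The first step is to justify this clean decomposition, and I expect it to be the main obstacle. Since $[e]=\gs\plang e$ and $\plang e$ is a regular language over $\SigmatA$, I would argue that $\gs$ preserves regularity (it collapses maximal blocks of equal atoms to a single atom and allows reinsertion of such blocks, which is a regularity-preserving rational operation), so that $[e]$, viewed as a subset of the regular set $\GS_\Sigmat$ and then recoded into $(\Atom\times\Sigmat)^*\times\Atom$ by grouping each atom--letter pair, yields, for each fixed trailing atom $\alpha$, a regular language $\set{u\in(\Atom\times\Sigmat)^*\mid u\alpha\in[e]}$ over $\Atom\times\Sigmat$. Equivalently, and perhaps more transparently for \KAT, I would build by induction on $e$ the family $(e_\alpha)_{\alpha\in\Atom}$ of clean expressions directly, checking that clean decompositions are closed under the operations on guarded string languages; here the delicate case is coalesced product, where one must record the initial atom of the right factor and match it against the trailing atom of the left factor, and the Kleene star, which iterates this matching. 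Either way, this is exactly the ``regularity of guarded string languages'' content, packaged so as to separate the trailing atom.

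Given the clean decomposition, the remaining work is routine. For each atom $\alpha$ the language $\plang{e_\alpha}$ is a regular language over $\Atom\times\Sigmat$, so by Proposition~\ref{prop:reg:rec} it is recognised by some finite monoid: $\plang{e_\alpha}=\ih[h_\alpha](Q_\alpha)$ for a homomorphism $h_\alpha\colon(\Atom\times\Sigmat)^*\to M_\alpha$ and $Q_\alpha\subseteq M_\alpha$. Since $\Atom$ is finite, I would take the product monoid $M\eqdef\prod_{\alpha\in\Atom}M_\alpha$, which is again finite, together with the diagonal homomorphism $h\eqdef\tuple{h_\alpha}_\alpha$ and the subset $P\eqdef\set{\tuple{\tuple{x_\beta}_\beta,\alpha}\mid x_\alpha\in Q_\alpha}\subseteq M\times\Atom$. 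Unfolding Definition~\ref{def:reco}, $\tuple{h(u),\alpha}\in P$ holds iff $h_\alpha(u)\in Q_\alpha$ iff $u\in\plang{e_\alpha}$, so that $\ih(P)=\bigcup_{\alpha}\plang{e_\alpha}\cdot\set\alpha=[e]$, as required. This product construction, and the final verification, are entirely mechanical; the real difficulty lies in establishing the clean decomposition, which is why the preceding interlude recalls Proposition~\ref{prop:reg:rec} and its automaton-to-monoid proof.
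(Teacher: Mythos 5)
Your proposal is correct in outline but takes a genuinely different route from the paper. The paper gives a single direct construction: take a nondeterministic finite automaton $\tuple{S,I,\Delta,F}$ for the word language $\plang e$ over $\SigmatA$, let $M$ be the monoid of relations on $S$, set $h(\alpha,a)\eqdef\Delta(\alpha)^*\cdot\Delta(a)$, and let $P(R,\alpha)$ hold iff $R\cdot\Delta(\alpha)^*\cap I{\times}F\neq\emptyset$; the existential quantification over exponents in the definition of $\gs$ is absorbed exactly by the reflexive--transitive closures $\Delta(\alpha)^*$, so $[e]=\ih(P)$ falls out immediately. You instead interpose a ``clean decomposition'' $[e]=\bigcup_\alpha\plang{e_\alpha}\cdot\set\alpha$, then apply classical monoid recognition (Proposition~\ref{prop:reg:rec}) per atom and glue with a product monoid. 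The product-and-projection step at the end is indeed mechanical and correct. The caveat is that the clean decomposition you defer to is not a lighter lemma: it is essentially the whole content of the proposition (regularity of $[e]$ with the trailing atom separated out), and the most economical way to prove it---whether via your transduction argument for $\gs$ or via your induction on $e$, whose coalesced-product and star cases force you to track initial atoms as well, i.e.\ a doubly-indexed family $(e_{\beta,\alpha})$---is in effect the automaton computation the paper performs directly. So your argument is not circular, but it routes the real work through a step you leave sketched. A second, pragmatic difference: the paper's recogniser is deliberately a \emph{transition monoid} of $01$-matrices over $O(n)$ states, which is what Fact~\ref{fact:P'} and the \pspace algorithm of Section~\ref{ssec:alg} rely on; your product-of-abstract-monoids recogniser proves the proposition as stated but would not hand you that concrete representation without further work.
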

\begin{proof}
  First we compute a non-deterministic finite automaton over the alphabet $\SigmatA$, for $\plang e$. Such an automaton is a tuple $\tuple{S,I,\Delta,F}$ where $S$ is a finite set of states, $I,F\subseteq S$ are the initial and accepting states, respectively, and $\Delta$ is the transition relation, seen as a map from $\SigmatA$ to relations on $S$. By construction, we have
  \begin{align*}
    x_1\dots x_n \in \plang e \iff \Delta(x_1)\cdot\dots\cdot \Delta(x_n) \cap I{\times} F \neq \emptyset
  \end{align*}
  (For all words $x_1\dots x_n$ over $\SigmatA$.)
  Recall the function $\gs$ extracting the guarded strings of a language, such that $[e]=\gs\plang e$.
  We deduce that for all guarded strings $\alpha_0a_1\dots a_{n-1}\alpha_n$, we have
  \begin{align*}
    \alpha_0a_1\dots a_{n-1}\alpha_n \in [e] \iff \Delta(\alpha_0)^*\cdot\Delta(a_1)\cdot\dots\cdot\Delta(a_{n-1})\cdot \Delta(\alpha_n)^* \cap I{\times} F \neq \emptyset
  \end{align*}
  Accordingly, we construct a finite recogniser as follows: $M$ is the monoid of relations on $S$; $h$ is the unique homomorphism such that $h(\alpha,a)=\Delta(\alpha)^*\cdot \Delta(a)$ for all atoms $\alpha$ and $a\in\Sigmat$; and $P(R,\alpha)$ holds if $R\cdot \Delta(\alpha)^* \cap I{\times} F\neq \emptyset$.
\end{proof}

The converse also holds: one can associate an expression to every finite recogniser. We need a slightly stronger statement in the sequel:
\begin{prop}
  \label{prop:kleene2}
  For all finite recognisers $\tuple{M,h,P}$, there are clean expressions $(e_x)_{x\in M}$ such that for all $x\in M$, $\plang{e_x} = \ih(x)$. It follows that
  \begin{align*}
    \ih(P) = \biggl[\sum_{P(p,\alpha)} e_p\cdot\alpha\biggr]
  \end{align*}
\end{prop}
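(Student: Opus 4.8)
The plan is to transfer the converse direction of Proposition~\ref{prop:reg:rec} to the product alphabet $\Atom\times\Sigmat$, and then clean up the resulting expressions. First I would fix an element $x\in M$ and observe that the language $\ih(x)=\set{u\mid h(u)=x}$ over $\Atom\times\Sigmat$ is recognised by the finite monoid $M$ in the sense of Section~\ref{sssec:interlude}: taking the singleton accepting set $P=\set x\subseteq M$, we have $\ih(x)=\ih(P)$. By Proposition~\ref{prop:reg:rec}, applied with $\Atom\times\Sigmat$ as the (finite) alphabet, $\ih(x)$ is therefore a regular language over $\Atom\times\Sigmat$, so there is a regular expression $\e_x$ over $\Atom\times\Sigmat$ with $\plang{\e_x}=\ih(x)$ (possibly $\e_x=0$, when $x$ is not in the image of $h$). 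I would then set $e_x\eqdef\underline{\e_x}$, which is clean by definition.

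Next I would check that $\plang{e_x}=\ih(x)$ under the identification of $(\Atom\times\Sigmat)^*$ with its image in $(\SigmatA)^*$. The cleaning operation $\e\mapsto\underline\e$ commutes with $+$, $\cdot$ and $\cdot^*$ and sends a letter $(\alpha,a)$ to $\alpha\cdot a$, whose language is the singleton $\set{\alpha a}$. A straightforward induction on $\e_x$ then shows that $\plang{\underline{\e_x}}$ is exactly the image of $\plang{\e_x}$ under the monoid embedding $(\alpha,a)\mapsto\alpha a$. Since this embedding is injective, $\plang{e_x}$ coincides with $\ih(x)=\plang{\e_x}$ under the stated identification, as required.

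Finally, the displayed consequence follows by a short computation. Since $[\cdot]=\gs\plang\cdot$ is an $S$-algebra homomorphism it is additive, hence $\bigl[\sum_{P(p,\alpha)} e_p\cdot\alpha\bigr]=\bigcup_{P(p,\alpha)}[e_p\cdot\alpha]$. Each $e_p$ is clean, so Fact~\ref{fact:clean} gives $[e_p\cdot\alpha]=\plang{e_p}\cdot\set\alpha=\ih(p)\cdot\set\alpha$, using $\plang{e_p}=\ih(p)$ from the first part. The right-hand side above then equals $\bigcup_{P(p,\alpha)}\ih(p)\cdot\set\alpha$, which is $\ih(P)$ by the description recorded after Definition~\ref{def:reco}.

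The main obstacle here is more a matter of care than of difficulty: I must ensure that the $e_x$ are \emph{clean} rather than arbitrary expressions over $\SigmatA$. This is precisely why I would extract the regular expression $\e_x$ over the product alphabet $\Atom\times\Sigmat$ first---where $h$ and hence $\ih(x)$ naturally live---and only then apply the cleaning operation $\underline{\cdot}$. A regular expression built directly over $\SigmatA$ for the same language need not have the alternating atom/letter shape that cleanness demands, so routing the construction through $\Atom\times\Sigmat$ is what keeps Fact~\ref{fact:clean} applicable in the last step.
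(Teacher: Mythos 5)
Your proposal is correct and follows essentially the same route as the paper: extract a regular expression $\e_x$ over the product alphabet $\Atom\times\Sigmat$ for each $\ih(x)$ (the paper leaves the appeal to monoid recognisability implicit), set $e_x\eqdef\underline{\e_x}$, and then derive the displayed identity from additivity of $[\cdot]$, Fact~\ref{fact:clean}, and the decomposition of $\ih(P)$ recorded after Definition~\ref{def:reco}. Your extra care about why the expressions must be routed through $\Atom\times\Sigmat$ to guarantee cleanness is exactly the point the paper's terser proof relies on.
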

\begin{proof}
  For each $x\in M$, let $\e_x$ be a regular expression over $\Atom\times\Sigmat$ for $\ih(x)$ and set $e_x\eqdef \underline{\e_x}$.
  We deduce via Fact~\ref{fact:clean} that
  \begin{align*}
    \tag*\qedhere
    \ih(P)
    = \bigcup_{P(p,\alpha)}\ih(p)\cdot\set\alpha
    = \bigcup_{P(p,\alpha)}\plang{e_p}\cdot\set\alpha
    = \bigcup_{P(p,\alpha)}[e_p\cdot \alpha]
    = \biggl[\sum_{P(p,\alpha)} e_p\cdot\alpha\biggr]
  \end{align*}
\end{proof}

\subsubsection{Reduction for $E$ and completeness}
\label{sssec:E:red}

Now that the monoid machinery is set up for guarded string languages, we can show that the function $E$ preserves regularity. 
\begin{prop}
  \label{prop:E}
  Let $\tuple{M,h,P}$ be a recogniser and define $P'$ as follows:
  \begin{align*}
    P'\eqdef \set{(x,\alpha) \mid \exists n\in\NN,~P((x\cdot h(\alpha,\top))^n\cdot x,\alpha)}
  \end{align*}
  We have $E(\ih(P))=\ih(P')$.
\end{prop}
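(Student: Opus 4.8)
The statement to prove is $E(\ih(P)) = \ih(P')$, a language equality. I would prove it by mutual set inclusion on guarded strings, but the cleanest route is to unfold both sides directly in terms of the definitions of $E$ and of the recogniser semantics, and then turn the quantifier ``$\exists n$'' in the definition of $E$ into the quantifier ``$\exists n$'' in the definition of $P'$ via the homomorphism property of $h$. Recall that $E(L) = \set{w \mid \exists n,~(w\top)^nw\in L}$, and that a guarded string here is decomposed uniquely as $u\alpha$ with $u$ a word over $\Atom\times\Sigmat$ and $\alpha$ an atom.

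\textbf{Key steps.} First I would fix a guarded string $u\alpha$ (with $u \in (\Atom\times\Sigmat)^*$ and $\alpha\in\Atom$) and chase the following chain of equivalences:
\begin{align*}
  u\alpha \in E(\ih(P))
  &\iff \exists n\in\NN,~(u\alpha\top)^n u\alpha \in \ih(P)\\
  &\iff \exists n\in\NN,~P\bigl(h\bigl((u\,(\alpha,\top))^n u\bigr),\,\alpha\bigr)\\
  &\iff \exists n\in\NN,~P\bigl((h(u)\cdot h(\alpha,\top))^n\cdot h(u),\,\alpha\bigr)\\
  &\iff (h(u),\alpha)\in P'\\
  &\iff u\alpha \in \ih(P').
\end{align*}
The first step is the definition of $E$. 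The crucial second step requires rewriting the guarded string $(u\alpha\top)^n u\alpha$ into the form (word over $\Atom\times\Sigmat$)$\cdot$(final atom), so that the recogniser semantics from Definition~\ref{def:reco} applies: here the repeated block $\alpha\top$ gets read as the pair $(\alpha,\top)$, so $(u\alpha\top)^n u\alpha = \bigl(u\,(\alpha,\top)\bigr)^n u \cdot \alpha$, matching $\ih(P)=\set{v\beta\mid P(h(v),\beta)}$ with $v = (u\,(\alpha,\top))^n u$ and $\beta=\alpha$. The third step is just the homomorphism property $h\bigl((u\,(\alpha,\top))^n u\bigr) = (h(u)\cdot h(\alpha,\top))^n\cdot h(u)$, and the fourth is exactly the definition of $P'$ with $x = h(u)$. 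The last step is the definition of $\ih(P')$.

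\textbf{Main obstacle.} The only genuinely delicate point is the bookkeeping in the second step: I have to be careful that the interleaving of atoms and top in the guarded string $(u\alpha\top)^n u\alpha$ parses correctly into the $(\Atom\times\Sigmat)^*\times\Atom$ decomposition used by the recogniser, i.e.\ that each occurrence of the block ``$\alpha\top$'' is correctly identified with the single letter $(\alpha,\top)$ of the alphabet $\Atom\times\Sigmat$, and that the trailing $\alpha$ is exactly the final-atom component. Once this decomposition is pinned down the rest is a routine application of the homomorphism property of $h$ together with the definitions, with no recourse to axioms or completeness needed here — this proposition is purely semantic. (The algebraic counterpart, $\KATF\proves s(e)\le e$, mirroring the square-root argument of Section~\ref{sssec:interlude}, is handled separately and is where axiom~(\ref{ax:F}) finally enters.)
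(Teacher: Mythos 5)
Your proof is correct and follows essentially the same chain of equivalences as the paper's own proof: unfold $E$, apply the recogniser semantics to $(u\alpha\top)^n u\alpha$ decomposed as $(u\,(\alpha,\top))^n u$ followed by the final atom $\alpha$, use that $h$ is a homomorphism, and conclude by the definition of $P'$. The extra care you take with the parsing of the blocks $\alpha\top$ as single letters $(\alpha,\top)$ is a point the paper leaves implicit, but it is the right thing to check and nothing more is needed.
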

\begin{proof}
  For all guarded strings $u\alpha$, we have
  \begin{align*}
    u\alpha \in E(\ih(P))
    \iff~& \exists n\in\NN,~(u\alpha\top)^nu\alpha \in\ih(P)\\
    \iff~& \exists n\in\NN,~P(h((u\alpha\top)^nu),\alpha)\\
    \tag{$h$ is a homomorphism}
    \iff~& \exists n\in\NN,~P((h(u)\cdot h(\alpha,\top))^n\cdot h(u),\alpha)\\
    \iff~& P'(h(u),\alpha)\\
    \tag*\qedhere
    \iff~& u\alpha \in \ih(P')
  \end{align*}
\end{proof}

\begin{fact}
  \label{fact:P'}
  In the above proposition, when the recogniser is obtained from a transition monoid (Proposition~\ref{prop:kleene1}), we have the following alternative presentation of $P'$:
  \begin{align*}
    P'(X,\alpha) = P((X\cdot T_\alpha)^*\cdot X,\alpha) \quad\text{where}\quad T_\alpha\eqdef h(\alpha,\top)=\Delta(\alpha)^*\cdot\Delta(\top)
  \end{align*}
\end{fact}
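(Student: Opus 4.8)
The plan is to unfold both sides using the concrete description of the transition-monoid recogniser from Proposition~\ref{prop:kleene1}, and to observe that the existential quantifier over $n$ in the definition of $P'$ (Proposition~\ref{prop:E}) is exactly what a Kleene star in the relation monoid records.

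First I would recall the relevant data: here $M$ is the monoid of binary relations on the finite state-space $S$ of an automaton $\tuple{S,I,\Delta,F}$ for $\plang e$, with relational composition as the monoid product; the homomorphism satisfies $h(\alpha,a)=\Delta(\alpha)^*\cdot\Delta(a)$; and $P(R,\alpha)$ holds precisely when $R\cdot\Delta(\alpha)^*\cap I{\times}F\neq\emptyset$. With these in hand, the equation $T_\alpha=h(\alpha,\top)=\Delta(\alpha)^*\cdot\Delta(\top)$ is immediate from the definition of $h$, so the only content lies in the characterisation of $P'$.

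For the main step, unfolding Proposition~\ref{prop:E} gives that $P'(X,\alpha)$ holds iff there is some $n\in\NN$ with $(X\cdot T_\alpha)^n\cdot X\cdot\Delta(\alpha)^*\cap I{\times}F\neq\emptyset$. Since $M$ is a relation algebra, the Kleene star is the union $(X\cdot T_\alpha)^*=\bigcup_{n\in\NN}(X\cdot T_\alpha)^n$, and both relational composition and intersection with the fixed relation $I{\times}F$ distribute over arbitrary unions of relations. Hence nonemptiness for some $n$ is equivalent to nonemptiness of $\bigl(\bigcup_n (X\cdot T_\alpha)^n\bigr)\cdot X\cdot\Delta(\alpha)^*\cap I{\times}F$, that is of $(X\cdot T_\alpha)^*\cdot X\cdot\Delta(\alpha)^*\cap I{\times}F$, which by definition of $P$ is exactly $P((X\cdot T_\alpha)^*\cdot X,\alpha)$.

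The only point needing care---and the reason the fact is stated specifically for recognisers coming from a transition monoid---is that the expression $(X\cdot T_\alpha)^*$ is meaningful only because $M$ carries the extra Kleene-algebra structure of relations; for an abstract finite monoid the star is unavailable and one could only invoke eventual stabilisation of the powers. Beyond this I expect no difficulty: the argument reduces to the elementary fact that a union of relations meets $I{\times}F$ iff one of its members does, which is precisely what converts the $\exists n$ into the star (with the $n=0$ term accounting for the reflexive part, i.e.\ the plain case $P(X,\alpha)$).
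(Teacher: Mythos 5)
Your proof is correct and follows essentially the same route as the paper, which proves the fact in one line by observing that $\exists n,\ Y^n\cdot Z\cap I{\times}F\neq\emptyset$ iff $(\bigcup_n Y^n\cdot Z)\cap I{\times}F\neq\emptyset$ iff $(Y^*\cdot Z)\cap I{\times}F\neq\emptyset$. Your unfolding of $P'$ and the distributivity/union argument is exactly this, just spelled out in more detail.
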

\begin{proof}
  Because $\exists n,~Y^n\cdot Z\cap I{\times}F\neq\emptyset$
  iff $(\bigcup_nY^n\cdot Z)\cap I{\times}F\neq\emptyset$ iff $(Y^*\cdot Z)\cap I{\times}F\neq\emptyset$.
\end{proof}

We can finally define the reduction $s$. Given an expression $e$, first compute a finite recogniser $\tuple{M,h,P}$ such that $[e]=\ih(P)$ (Proposition~\ref{prop:kleene1}), then update $P$ into $P'$ as in Proposition~\ref{prop:E}, and finally extract the expression $s(e)$ from $\tuple{M,h,P'}$ (Proposition~\ref{prop:kleene2}).

\begin{prop}
  \label{prop:s:props}
  For all expressions $e$, we have
  \begin{enumerate}[(i)]
  \item $[s(e)]=E[e]$, and
  \item $\KATF\proves e=s(e)$.
  \end{enumerate}
\end{prop}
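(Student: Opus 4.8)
The plan is to prove both items by chaining together the machinery set up in the preceding propositions, using the recogniser $\tuple{M,h,P}$ and its update $P'$ that define $s(e)$. For item~(i), I would simply unfold the construction of $s$: by Proposition~\ref{prop:kleene1} we have $[e]=\ih(P)$, by Proposition~\ref{prop:E} the update satisfies $E(\ih(P))=\ih(P')$, and by Proposition~\ref{prop:kleene2} the extracted expression $s(e)$ satisfies $[s(e)]=\ih(P')$. Composing these three equalities gives $[s(e)]=\ih(P')=E(\ih(P))=E[e]$, which is exactly what is required. This part is essentially bookkeeping once the earlier results are in place.

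The substance is item~(ii), and here I expect to mirror the square-root argument of the interlude (the proof of $\KAS\proves\sqrt e\leq e$), which is the analogue for the $E$-operation. By Proposition~\ref{prop:kleene2} there are clean expressions $(e_x)_{x\in M}$ with $\plang{e_x}=\ih(x)$, so that $[e]=\bigl[\sum_{P(p,\alpha)}e_p\cdot\alpha\bigr]$ and $[s(e)]=\bigl[\sum_{P'(p,\alpha)}e_p\cdot\alpha\bigr]$. Completeness of \KAT~\eqref{eq:KAT} turns these language identities into derivable equalities, so modulo \KAT it suffices to reason about the two finite sums. For the inequality $s(e)\leq e$, I would fix a pair with $P'(p,\alpha)$; by definition of $P'$ there is some $n$ with $P((p\cdot h(\alpha,\top))^n\cdot p,\alpha)$. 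Writing $q$ for the monoid element $(p\cdot h(\alpha,\top))^n\cdot p$, the summand $e_p\cdot\alpha$ on the left should be bounded above by the summand $e_q\cdot\alpha$ appearing on the right (since $P(q,\alpha)$), which is itself $\leq e$. The bridge between $e_p\cdot\alpha$ and $e_q\cdot\alpha$ is where axiom~\eqref{ax:F} enters: one uses $e_p \leq e_p\cdot\top\cdot e_p\leq \cdots$, interleaving the factors $h(\alpha,\top)$ realised by $\top$, together with Lemma~\ref{lem:monoid} to justify the monoid-level inclusion $\ih(p)\cdot\ih(h(\alpha,\top))\cdots\ih(p)\subseteq\ih(q)$, and then completeness of \KAT to turn that language inclusion into $\KATF\proves e_p\cdot\alpha\leq e_q\cdot\alpha$.

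Concretely I would argue: from axiom~\eqref{ax:F} applied $n$ times (in conjunction with~\eqref{ax:T} to insert the required $\top$'s, exactly as in the introduction's sample derivation of $\top x\top y\top=\top y\top x\top$), we get $\KATF\proves e_p\cdot\alpha\leq (e_p\cdot\top)^n\cdot e_p\cdot\alpha$. Then Lemma~\ref{lem:monoid} gives the language inclusion $\ih(p)\cdot\ih(h(\alpha,\top))\cdots\ih(h(\alpha,\top))\cdot\ih(p)\subseteq\ih(q)$, whence by Fact~\ref{fact:clean} and completeness of \KAT~\eqref{eq:KAT'}, $\KATT\proves (e_p\cdot\top)^n\cdot e_p\cdot\alpha\leq e_q\cdot\alpha$; since $P(q,\alpha)$ the latter is a summand of (an expression equal to) $e$, so $\KATF\proves e_p\cdot\alpha\leq e$. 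Taking the supremum over all pairs with $P'(p,\alpha)$ yields $\KATF\proves s(e)\leq e$. The converse direction $\KATF\proves e\leq s(e)$ is easier, since $E$ is inflationary on the relevant side: from Proposition~\ref{prop:hom:cf} and the fact that $[e]\subseteq E[e]=[s(e)]$ one gets $\KATF\proves e\leq s(e)$ directly via completeness of \KAT~\eqref{eq:KAT'}; combining the two inequalities gives the equality.

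The main obstacle I anticipate is the careful alignment of $\top$ in the expression-level derivation with the monoid factor $h(\alpha,\top)$: the power $n$ is the same in the definition of $P'$ and in the \eqref{ax:F}-expansion, but one must check that the atom $\alpha$ threaded through the coalesced products matches the constant $h(\alpha,\top)=\Delta(\alpha)^*\cdot\Delta(\top)$ used in the recogniser, so that Lemma~\ref{lem:monoid} applies with the correct intermediate elements. Keeping the atoms consistent at the junctions (so that the clean expressions $e_p$ compose as guarded strings rather than being killed by $\alpha\cdot\beta=0$) is the delicate point; the rest is a routine combination of Lemma~\ref{lem:monoid}, completeness of \KAT, and repeated use of~\eqref{ax:F} and~\eqref{ax:T}.
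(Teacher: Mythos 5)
Your proposal matches the paper's proof essentially step for step: item (i) by composing Propositions~\ref{prop:kleene1}, \ref{prop:E} and~\ref{prop:kleene2}; the easy inequation $e\leq s(e)$ from $[e]\subseteq E[e]=[s(e)]$ and completeness of \KAT; and the hard inequation summand-wise, reducing to $\KATF\proves e_p\cdot\alpha\leq e$ for $P'(p,\alpha)$ via $n$ applications of axiom~(\ref{ax:F}) followed by Lemma~\ref{lem:monoid}, Fact~\ref{fact:clean} and \KAT-completeness to reach $e_q\cdot\alpha\leq e$. The only correction is your intermediate expression: applying~(\ref{ax:F}) to $x=e_p\cdot\alpha$ gives $(e_p\cdot\alpha\cdot\top)^n\cdot e_p\cdot\alpha$ directly (no use of~(\ref{ax:T}) is needed), and keeping that $\alpha$ before each $\top$ is exactly what resolves the ``delicate point'' you flag, since $\alpha\top\in\ih(t)$ for $t=h(\alpha,\top)$ and the prefix $(e_p\cdot\alpha\cdot\top)^n\cdot e_p$ remains clean so that Fact~\ref{fact:clean} applies.
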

\begin{proof}
  The first item follows by construction and the three previous propositions.
  For the second one, since $[e]\subseteq E[e]=[s(e)]$, we have $\KAT\proves e\leq s(e)$ by completeness of \KAT~\eqref{eq:KAT'}, so that it suffices to show the other inequation.

  Let $\tuple{M,h,P}$ be the finite recogniser for $[e]$ used to construct $s(e)$ and
  let $(e_x)_{x\in M}$ be the expressions given by Proposition~\ref{prop:kleene2}.

  We have $[e]=[\sum_{P(p,\alpha)}e_p\cdot \alpha]$ and $[s(e)]=[\sum_{P'(p,\alpha)}e_p\cdot \alpha]$.
  By completeness of \KAT~\eqref{eq:KAT}, we deduce
  \begin{align*}
    \KAT\proves s(e)=\sum_{P'(p,\alpha)}e_p\cdot \alpha
  \end{align*}
  Therefore, it suffices to show that $\KATF\proves e_p\cdot \alpha\leq e$ whenever  $P'(p,\alpha)$. Accordingly, let $n,p,\alpha$ be such that $P((p\cdot h(\alpha,\top))^n\cdot p,\alpha)$. Set $t\eqdef h(\alpha,\top)$ and $q\eqdef (p\cdot t)^n\cdot p$; we have $P(q,\alpha)$.
  We derive
  \begin{align*}
    \tag{using axiom~\eqref{ax:F} $n$ times}
    \KATF\proves e_p\cdot \alpha
    &\leq (e_p\cdot \alpha \cdot \top)^n\cdot e_p\cdot \alpha\\
    &\leq e_q\cdot \alpha \\
    &\leq e
  \end{align*}
  For the last two steps, we use \KAT completeness~\eqref{eq:KAT'}: we have
  \begin{align*}
    [(e_p\cdot \alpha \cdot \top)^n\cdot e_p\cdot \alpha]
    \tag{Fact~\ref{fact:clean}: $(e_p\cdot \alpha \cdot \top)^n\cdot e_p$ is clean}
    &= \plang{(e_p\cdot \alpha \cdot \top)^n\cdot e_p}\cdot\set\alpha\\
    \tag{$\plang\cdot$ is a homomorphism}
    &= (\plang{e_p}\cdot \set{\alpha\top})^n\cdot \plang{e_p}\cdot\set\alpha\\
    \tag{definition of $e_p$---Proposition~\ref{prop:kleene2}}
    &= (\ih(p)\cdot \set{\alpha\top})^n\cdot \ih(p)\cdot\set\alpha\\
    \tag{$h(\alpha,\top)=t$}
    &\subseteq (\ih(p)\cdot \ih(t))^n\cdot \ih(p)\cdot\set\alpha\\
    \tag{by Lemma~\ref{lem:monoid}}
    &\subseteq \ih((p\cdot t)^n\cdot p)\cdot\set\alpha\\
    \tag{definition of $q$}
    &= \ih(q)\cdot\set\alpha\\
    \tag{definition of $e_q$---Proposition~\ref{prop:kleene2}}
    &= \plang{e_q}\cdot\set\alpha\\
    \tag{Fact~\ref{fact:clean}}
    &= [e_q\cdot\alpha]\\
    \tag*{($P(q,\alpha)$)~\qedhere}
    &\subseteq [e]
  \end{align*}
\end{proof}
We finally combine all the above results to obtain our main theorem:
\begin{thm}
  \label{thm:rel}
  For all regular expressions with top $e,f$, we have
  \begin{align*}
  \REL\models e=f \iff
  \CF[e]=\CF[f] \iff
  \KATF\proves e=f
  \end{align*}
\end{thm}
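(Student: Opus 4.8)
The plan is to chain together the results already established in this section, exactly as the theorem's three-way equivalence suggests. The first equivalence, $\REL\models e=f \iff \CF[e]=\CF[f]$, is precisely Proposition~\ref{prop:rel}, so nothing new is needed there. The real work lies in the second equivalence, $\CF[e]=\CF[f] \iff \KATF\proves e=f$, and I would prove it by following the same reduction template used for the language case in Theorem~\ref{thm:lang}, but now composing the two closure reductions $r$ (for $\CT$) and $s$ (for $E$), using the factorisation $\CF = E\circ\CT$ from Proposition~\ref{prop:hom:cf}.

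Concretely, I would set up the composite reduction $s\circ r$ and prove its two defining properties. First, on the semantic side, for every expression $e$ we have
\begin{align*}
  [s(r(e))] = E[r(e)] = E(\CT[e]) = \CF[e],
\end{align*}
using Proposition~\ref{prop:s:props}(i), then Proposition~\ref{prop:r:props}(i), then Proposition~\ref{prop:hom:cf}. Second, on the syntactic side, transitivity of $\KATF$-derivability together with Proposition~\ref{prop:r:props}(ii) (which gives $\KATT\proves e=r(e)$, hence a fortiori $\KATF\proves e=r(e)$ since \KATF\ extends \KATT) and Proposition~\ref{prop:s:props}(ii) (which gives $\KATF\proves r(e)=s(r(e))$) yields $\KATF\proves e=s(r(e))$.

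With these two facts in hand the main chain of implications runs just as in the proof of Theorem~\ref{thm:lang}. Starting from $\REL\models e=f$, Proposition~\ref{prop:rel} gives $\CF[e]=\CF[f]$; rewriting via the semantic property above turns this into $[s(r(e))]=[s(r(f))]$; completeness of \KAT~\eqref{eq:KAT} converts the guarded-string-language identity into $\KAT\proves s(r(e))=s(r(f))$, and hence into a \KATF-derivation since \KATF\ contains \KAT; then the syntactic property above, applied at both $e$ and $f$ together with transitivity, upgrades this to $\KATF\proves e=f$. The loop is closed by soundness of the \KATF\ axioms with respect to \REL: the \KA\ axioms and axioms~\eqref{ax:atoms} and~\eqref{ax:T} are sound as before, and axiom~\eqref{ax:F} is sound because in any relational model $R\subseteq R\cdot\top\cdot R$ holds (for any pair $\rin R x y$, pick the witness $y$ in the middle, since $\top$ is the full relation). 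This last soundness step gives $\KATF\proves e=f \Rightarrow \REL\models e=f$, completing the cycle.

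The main obstacle is not in the combination itself, which is essentially bookkeeping, but in ensuring that the reduction $s$ really does what Proposition~\ref{prop:s:props} claims, and that composing it after $r$ is legitimate. In particular I would double-check that $r(e)$ is an ordinary expression to which the recogniser construction of Proposition~\ref{prop:kleene1} applies, so that $s$ is well-defined on $r(e)$, and that the identity $\CF=E\circ\CT$ is being used in the correct order (closing first under $\CT$, then applying $E$). Since all of these have been discharged by the preceding propositions, the proof of the theorem reduces to assembling them carefully; the only genuinely load-bearing ingredients are Propositions~\ref{prop:rel}, \ref{prop:hom:cf}, \ref{prop:r:props}, and \ref{prop:s:props}, whose proofs already contain the hard content (the graph-homomorphism characterisation and the monoid-based preservation of regularity under $E$).
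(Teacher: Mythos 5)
Your proposal is correct and follows essentially the same route as the paper: the same cyclic chain of implications through Propositions~\ref{prop:rel}, \ref{prop:hom:cf}, \ref{prop:r:props} and~\ref{prop:s:props}, using the composite reduction $s\circ r$ and \KAT-completeness, closed off by soundness of the \KATF\ axioms with respect to \REL. The only (harmless) addition is your explicit verification that axiom~\eqref{ax:F} is sound in relational models, which the paper leaves as a routine check.
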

\begin{proof}
  We have
  \begin{align*}
    &\REL\models e=f\\
    \tag{Proposition~\ref{prop:rel}}
    \Leftrightarrow~ & \CF[e]=\CF[f]\\
    \tag{by Proposition~\ref{prop:hom:cf}}
    \Leftrightarrow~ & E(\CT[e])=E(\CT[f])\\
    \tag{by Propositions~\ref{prop:r:props}(i) and~\ref{prop:s:props}(i)}
    \Leftrightarrow~ & [s(r(e))]=[s(r(f))]\\
    \tag{by completeness of \KAT~(\ref{eq:KAT})}
    \Leftrightarrow~ & \KAT\proves s(r(e))=s(r(f))\\
    \tag{by transitivity and Propositions~\ref{prop:r:props}(ii) and~\ref{prop:s:props}(ii)}
    \Rightarrow~ & \KATF\proves e=f\\
    \tag*{(soundness of \KATF axioms w.r.t. \REL)~\qedhere}
    \Rightarrow~ & \REL\models e=f
  \end{align*}
\end{proof}
The above proof follows the same strategy as the one for Theorem~\ref{thm:lang}.
Like there, the right-to-left implication of the second equivalence in the statement is an instance of~\cite[Theorem~2]{dkpp:fossacs19:kah} (generalised to guarded string languages), and we use reductions only for the left-to-right part of this equivalence.

\subsection{Solution to the exercise from the introduction}
\label{ssec:ex}

Recall the exercise~\eqref{eq:ex} from the introduction.
Since this example does not involve tests, we work with \KA rather than \KAT, we identify words and guarded strings, and we simplify definitions according to Remark~\ref{rem:singleatom}: there is only one atom.
We first give a handcrafted solution, before illustrating how the previous construction works on this example.

\subsubsection{Handcrafted solution}
For a number $i\geq 2$, let us write $i$ for the expression $a^i$.
We have to prove the following inequation using \KA and axiom~\eqref{ax:F}:
\begin{equation*}
  3^* \leq 3^*\top 2^* + 2^*a\top 3^*
\end{equation*}
As a first hint, observe that $\KA\proves i^*\leq j^*$ when $i$ is a multiple of $j$.\\
As a second hint, let us prove:
\begin{equation*}
  6^* \leq 3^*\top 2^*
\end{equation*}
Indeed, we have $6^*\leq 6^*\top 6^*$ by axiom~\eqref{ax:F}, and both $6^*\leq 3^*$ and $6^*\leq 2^*$ since $6$ is a multiple of both $3$ and $2$.\\
Similarly, we can also prove:
\begin{equation*}
  6^*3 \leq 2^*a\top 3^*
\end{equation*}
Indeed, we have $6^*3\leq 6^*3\top 6^*3$ by axiom~\eqref{ax:F}, and then $6^*3\leq 2^*2a \leq 2^*a$ and $6^*3\leq 3^*3\leq 3^*$ by basic \KA reasoning.

Finally observe that $[3^*]\subseteq [6^*+6^*3]$: every multiple of $3$ is also either even or odd. This suffices to conclude by \KA completeness:
\begin{align*}
  3^* \leq 6^*+6^*3 \leq 3^*\top 2^* + 2^*a\top 3^*
\end{align*}

\subsubsection{Computed solution}
Now let us see how this solution can be obtained via our generic construction.
We purposely skip the reduction $r$ for axiom~\eqref{ax:T}, which we want to avoid.
Set $e\eqdef 3^*\top 2^* + 2^*a\top 3^*$.
We will get $\KA_F\proves 3^* \leq s(e) = e$ using \KA completeness for the first step (after checking that $[3^*]\subseteq E[e] = [s(e)]$), and Proposition~\ref{prop:s:props}(ii) for the second step.

In order to compute $s(e)$, consider the following automaton for $[e]$:
\begin{align*}
  \begin{tikzpicture}[scale=1.3]
    \node[state](0) at (0,2) {0};
    \node[state](1) at (0.4,2.8) {1};
    \node[state](2) at (-0.4,2.8) {2};
    \node[state](3) at (2,2) {3};
    \node[state](4) at (2,2.8) {4};
    \inst 0; \fnst 3;
    \draw[arc] (0) to node[right]{$a$} (1);
    \draw[arc] (1) to node[above]{$a$} (2);
    \draw[arc] (2) to node[left]{$a$} (0);
    \edge 0 3 \top;
    \draw[arc, bend right] (4) to node[left]{$a$} (3);
    \draw[arc, bend right] (3) to node[right]{$a$} (4);
    \node[state](5) at (0,0.5) {5};
    \node[state](6) at (1,0.5) {6};
    \node[state](7) at (2,0.5) {7};
    \node[state](8) at (2.4,1.3) {8};
    \node[state](9) at (1.6,1.3) {9};
    \inst 5; \fnst 7;
    \draw[arc, bend right] (5) to node[below]{$a$} (6);
    \draw[arc, bend right] (6) to node[above]{$a$} (5);
    \edgeb 6 7 \top;
    \draw[arc] (7) to node[right]{$a$} (8);
    \draw[arc] (8) to node[above]{$a$} (9);
    \draw[arc] (9) to node[left]{$a$} (7);
  \end{tikzpicture}
\end{align*}
The associated monoid is huge: it has $2^{10^2}$ elements; however, only the elements in the image of the homomorphism are relevant, and we shall see that it suffices to look at six of them.

Let $A\eqdef\Delta(a)$ and $T\eqdef\Delta(\top)$ be the transition relations for $a$ and $\top$.
Looking at $A$ and $T$ as 01-matrices, $T$ only contains two non-zero entries, at positions $(0,3)$ and $(6,7)$, and $A$ can be presented as a block-diagonal matrix:
\begin{align*}
  A\eqdef \left(
  \begin{array}{cccc}
    M&&&\\
     &N&&\\
     &&N&\\
     &&&M\\
  \end{array}
  \right)
  \qquad\text{where}\qquad
  M\eqdef \left(
  \begin{array}{ccc}
    0&1&0\\
    0&0&1\\
    1&0&0\\
  \end{array}
  \right)
  \qquad\text{and}\qquad
  N\eqdef \left(
  \begin{array}{cc}
    0&1\\
    1&0\\
  \end{array}
  \right)
\end{align*}
We have $M^3=1$ and $N^2=1$, so that $A^6=1$ and there are only six distinct matrices of the shape $A^i$.
The homomorphism $h=\hat\Delta$ maps words over $\set{a,\top}$ to 10x10 matrices.
Words of the shape $a^i$ are mapped to $A^i$,
words with at least two occurrences of $\top$ are mapped to the zero matrix, and words of the shape $a^i\top a^j$ are mapped to $A^iTA^j$.
Therefore, there are at most 6+1+36 elements in the image of $h$.

Let us now compute the predicate $P$ on these elements (here this predicate takes a single argument since there is only one atom). We write $i\equiv j[k]$ for $i$ equals $j$ modulo $k$):
\begin{align*}
  \begin{cases}
    P(A^i) = \text{false}\\
    P(0) = \text{false}\\
    P(A^iTA^j) = (i\equiv 0 [3] \land j\equiv 0 [2]) \lor (i\equiv 1 [2] \land j\equiv 0 [3])
  \end{cases}
\end{align*}
Now observe that $(XT)^*X=X+XTX$ when $X=A^i$, and $(XT)^*X=X$ in the other cases ($X=0$ or $X=A^iTA^j$). By Fact~\ref{fact:P'}, we have $P'(A^i)=P(A^i+A^iTA^i)$, and $P'(X)=P(X)$ in the other cases.
Further simplifying $P'(A^i)$, we get:
\begin{align*}
  P'(A^i) &= P(A^i+A^iTA^i)\\
          &= \text{false} \lor (i\equiv 0 [3] \land i\equiv 0 [2]) \lor (i\equiv 1 [2] \land i\equiv 0 [3])\\
          &= i\equiv 0 [3] \land (i\equiv 0 [2] \lor i\equiv 1 [2]) \\
          &= i\equiv 0 [3]
\end{align*}
In other words, with respect to the predicate $P$, there are only two elements in the image of $h$ that are added to obtain $P'$: $A^0$ and $A^3$.

We finally observe that $\ih(A^0)=[6^*]$ and $\ih(A^3)=[6^*3]$, so that
\begin{align*}
  [s(e)]=\ih(P')=\ih(P)\cup \ih(A^0)\cup \ih(A^3)=[e+6^*+6^*3]
\end{align*}

\subsection{A PSpace algorithm}
\label{ssec:alg}

Like for Theorem~\ref{thm:lang}, the proof of Theorem~\ref{thm:rel} leads to an algorithm for deciding the equational theory of \KATF. Indeed, we have $\KATF\proves e=f$ iff $[s(r(e))]=[s(r(f))]$, so that it suffices to be able to compare the latter guarded string languages.
This is possible since the functions $r$ and $s$ are computable.
However, while the function $r$ is linear, the function $s$ is costly: extracting a regular expression from a monoid (or similarly from a finite automaton) is exponential in general.

To obtain a \pspace algorithm, we avoid computing $s$ and work directly with recognisers.
Indeed, given an expression $e$ of size $n$, our constructions define a finite recogniser as in Figure~\ref{fig:rec}. This gives us a monoid for $E[e]$ whose elements are binary relations over $O(n)$ states\footnote{Assuming a regular-expression-to-automata function producing non-deterministic finite automata with linearly many states, as is usually the case~\cite{thompson68,Antimirov96}.}. In other words, elements are square 01-matrices of dimension $O(n)\times O(n)$. Those elements can be stored in quadratic space, and the various operations of the recogniser can be computed in polynomial time:
\begin{itemize}
\item the monoid product is nothing but matrix multiplication;
\item calling the homomorphism $h$ on a pair $(\alpha,a)$ requires a matrix multiplication and a reflexive-transitive closure;
\item testing whether a pair $(R,\alpha)$ is accepted requires three multiplications and two reflexive-transitive closures. (Note that the formula we use for $P$ in Figure~\ref{fig:rec} comes from Fact~\ref{fact:P'}.)
\end{itemize}
Putting everything together, we obtain the algorithm in Figure~\ref{fig:alg}.
This algorithm is non-deterministic: it progressively guesses a potential counter-example---a guarded string---and checks whether it is indeed a counter-example using recognisers for closed languages of guarded strings as deterministic guarded string automata.
This algorithm requires quadratic space: it stores only the two 01-matrices $x$ and $y$, whose respective dimensions are linear in the sizes of $r(e)$ and $r(f)$, and thus $e$ and $f$.

It may seem surprising that this algorithm has an endless loop and never returns \emph{true}.
Still, we can turn it into a (deterministic, terminating) \pspace algorithm by Savitch' theorem~\cite{Savitch}.
Intuitively, we can explore all non-deterministic choices and halt returning \emph{true} when all configurations (i.e., pairs $\tuple{x,y}$ of 01-matrices) have been visited and no counter-example was found.


\begin{figure}[t]
  \centering
\begin{codeNT}
// inputs an expression $e$; outputs a recogniser for $E[e]$
$\tuple{X,I,\Delta,F}$ := non-deterministic finite automaton for $[e]$
$M$ := $\tuple{\pow(X^2),\cdot,1}$
$h(\alpha,a)$ := $\Delta(\alpha)^*\cdot\Delta(a)$
$P(R,\alpha)$ := $\left(R\cdot \Delta(\alpha)^*\cdot\Delta(\top)\right)^*\cdot R\cdot \Delta(\alpha)^* ~\cap~ I{\times}F \neq \emptyset$
return $\tuple{M,h,P}$
\end{codeNT}
  \caption{Recogniser for a language of the form $E[e]$.}
    \label{fig:rec}
\end{figure}

\begin{figure}[t]
  \centering
\begin{codeNT}
// inputs two expressions $e,f$; outputs false $ $ iff $E[r(e)]\neq E[r(f)]$
$\tuple{M,h,P}$ := recogniser for $E[r(e)]$
$\tuple{N,g,Q}$ := recogniser for $E[r(f)]$
$x$ := $1_{M}$
$y$ := $1_{N}$
while true do
  guess $\alpha\in\Atom$
  if $P(x,\alpha)\neq Q(y,\alpha)$ then return false
  guess $a\in\Sigmat$
  $x$ := $x \cdot_M h(\alpha,a)$
  $y$ := $y \cdot_N g(\alpha,a)$
done
\end{codeNT}
  \caption{Non-deterministic \pspace algorithm for the equational theory of \KATF.}
  \label{fig:alg}
\end{figure}

The equational theory of \KATF contains that of \KA, which amounts to language equivalence of regular expressions, which is \pspace-hard~\cite[Lemma~2.3]{MS72}\cite[Proposition~2.4]{HUNT1976}.
Therefore we deduce:
\begin{thm}
  \label{thm:katf:pspace}
  The equational theory of \KATF is {\rm\pspace}-complete.
\end{thm}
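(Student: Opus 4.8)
The plan is to establish the two directions of \pspace-completeness separately, namely membership in \pspace and \pspace-hardness, with almost all of the work already done by the preceding development.

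For \pspace-hardness, I would simply invoke the fact that the equational theory of \KATF \emph{contains} that of \KA. Since \KA-provability coincides with language equivalence of plain regular expressions (our standing assumption on \KA), and language equivalence of regular expressions is itself \pspace-hard by the classical results of Meyer--Stockmeyer and Hunt, the hardness is inherited. Concretely, given plain regular expressions $e,f$ over $\Sigma$ (with a single atom, via Remark~\ref{rem:singleatom}, so guarded strings collapse to words), we have $\KATF\proves e=f$ iff $\KA\proves e=f$ iff $\plang e=\plang f$; a \pspace-hard problem thus reduces in polynomial (in fact linear) time to \KATF-provability.

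For membership in \pspace, I would rely on the algorithm of Figure~\ref{fig:alg} together with the correctness guarantee furnished by Theorem~\ref{thm:rel}: we have $\KATF\proves e=f$ iff $[s(r(e))]=[s(r(f))]$, and by Proposition~\ref{prop:s:props}(i) and Proposition~\ref{prop:r:props}(i) the latter languages are exactly $E(\CT[e])=\CF[e]$ and $\CF[f]$. The key observation is that we need not \emph{compute} the expression $s(r(e))$ explicitly---which would be exponential, since extracting a regular expression from a monoid or automaton blows up---but can instead work directly with the finite recogniser of Figure~\ref{fig:rec}. That recogniser represents elements of its monoid as square $01$-matrices of dimension $O(n)\times O(n)$, where $n$ is the size of the input, using a linear-size nondeterministic automaton for $[e]$; each such matrix occupies quadratic space, and the monoid product, the homomorphism $h(\alpha,a)=\Delta(\alpha)^*\cdot\Delta(a)$, and the acceptance test $P$ (whose form comes from Fact~\ref{fact:P'}) are all computable in polynomial time via matrix multiplication and reflexive-transitive closure. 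The nondeterministic algorithm of Figure~\ref{fig:alg} then guesses a candidate counterexample guarded string letter by letter, maintaining only the two current matrices $x,y$, and detects a discrepancy whenever $P(x,\alpha)\neq Q(y,\alpha)$; since it stores only two quadratic-size matrices, it runs in $\mathsf{NPSPACE}$.

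The main point requiring care---and the step I would expect to be the genuine obstacle---is turning this nondeterministic, non-terminating procedure into a genuine deterministic \pspace decision procedure. The algorithm as written has an endless loop and returns \emph{false} only upon finding a counterexample; it never returns \emph{true}. To repair this I would appeal to Savitch's theorem: since $\mathsf{NPSPACE}=\mathsf{PSPACE}$, it suffices to have a nondeterministic polynomial-space recogniser for the \emph{complement} (inequivalence) problem, which Figure~\ref{fig:alg} provides. Alternatively, and more concretely, one can note that the configuration space consists of pairs $\tuple{x,y}$ of $01$-matrices of polynomial size, hence is of exponential cardinality; a counterexample exists iff one is reachable within that many steps, so a bounded search, or an explicit traversal halting with \emph{true} once all reachable configurations have been visited without discrepancy, yields a terminating deterministic algorithm still running in polynomial space. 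Combining the \pspace upper bound with the \pspace-hardness gives \pspace-completeness.
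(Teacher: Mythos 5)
Your proposal is correct and follows essentially the same route as the paper: \pspace-hardness is inherited from language equivalence of plain regular expressions via the containment of \KA's equational theory, and membership comes from running the recogniser-based nondeterministic algorithm of Figure~\ref{fig:alg} on quadratic-size $01$-matrices, avoiding the exponential extraction of $s(r(e))$, and then invoking Savitch's theorem to obtain a terminating deterministic \pspace procedure. The only difference is presentational: you spell out slightly more explicitly why \KATF is conservative over \KA on top-free expressions, which the paper leaves implicit.
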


\section{Relations with a greatest element}
\label{sec:relp}

A \emph{generalised $\St$-algebra of relations} is an $S$-subalgebra $A$ of an algebra of relations such that $A$ has a greatest element, seen as an $\St$-algebra by using this greatest element for the constant $\top$. We write \RELP for the class of all generalised $\St$-algebras of relations.

Intuitively, \RELP consists of models of binary relations where $\top$ is not necessarily the full relation, only a greatest element.
As an example, consider relations $R$ over the natural numbers such that $i\leq j$ whenever $\rin R i j$. Those form an $S$-algebra with greatest element the order relation $\leq$ itself, which is not the full relation.

In the literature, \RELP is sometimes preferred over \REL because it is closed under taking subalgebras and products, and actually forms a quasivariety~\cite{AM11}. (In contrast, it is not clear whether \REL is closed under products: the two obvious ways of embedding a pair of relations into a new relation fail to preserve either union or top---\REL as defined here is not closed under taking subalgebras either, but defining it in such a way would not change the results from the present paper.)

The equational theory of \RELP differs from that of \REL. For instance, the previous example of ordered relations shows that $\RELP\not\models x\leq x\cdot\top\cdot x$. Indeed, for $x=\set{\tuple{0,1}}$, $x\cdot\top\cdot x$ is empty since $\top$ does not relate $1$ to $0$.

We show below that the equational theory of \RELP actually coincides with that of \GSL, and can thus be axiomatised by \KATT.

\begin{prop}
  \label{prop:embed}
  Every member of \GSL embeds into a member of \RELP.
\end{prop}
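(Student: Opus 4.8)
The goal is to show that every guarded-string language model embeds into a generalised algebra of relations where $\top$ is a greatest element but not necessarily the full relation. The natural approach is a Cayley-style representation, adapting Pratt's trick that was mentioned in the introduction for the case of plain Kleene algebra, and extending it to account for both the top element and the atom labels.

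\medskip

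\noindent\emph{The plan.} Fix a guarded-string language algebra over some alphabet $X$, i.e.\ the powerset of $\GS_X$ with the operations from Section~\ref{sec:lang}, together with a valuation $\sigma\colon\Sigma\to\pow(\GS_X)$, inducing $\hat\sigma$ on expressions. I would take as the carrier set of the relational model the set $\GS_X$ of guarded strings itself, and represent each guarded-string language $L$ by the relation
\begin{align*}
  \rho(L)\eqdef\set{\tuple{u,u\diamond v}\mid u\in\GS_X,\ v\in L,\ u\diamond v\text{ is defined}}.
\end{align*}
This is the guarded-string analogue of the right-regular representation: composing $u$ on the right by an element of $L$. The partition function $p\colon\GS_X\to\Atom$ needed to interpret atoms is the one sending a guarded string to its final atom, which matches exactly the way $1$ and the constants $\alpha$ act under coalesced product. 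I would first check that $\rho$ is an $S$-algebra homomorphism: it preserves $+$ (union), $0$ (empty relation), $1$ (identity on $\GS_X$, using that $u\diamond\alpha=u$ precisely when $\alpha$ is the last atom of $u$), each constant $\alpha$, composition (coalesced product is associative, so $\rho(L)\cdot\rho(K)=\rho(L\cdot K)$), and Kleene star. Injectivity of $\rho$ is immediate since $L$ can be recovered from $\rho(L)$ by reading off second components from a guarded string of length $0$.

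\medskip

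\noindent\emph{Handling $\top$.} The crucial point, and the reason we land in \RELP rather than \REL, is the top element. Under $\rho$, the full language $\top=\GS_X$ maps to $\rho(\GS_X)=\set{\tuple{u,u\diamond v}\mid v\in\GS_X}$, which is \emph{not} the full relation $\GS_X\times\GS_X$: the pair $\tuple{u,w}$ belongs to it only when the last atom of $u$ equals the first atom of $w$, i.e.\ when $w$ extends $u$ by coalesced product. So I would let the $S$-subalgebra $A$ be the image of $\rho$ (the $S$-algebra generated by the relations $\rho(\hat\sigma(e))$, which is a subalgebra since $\rho$ is a homomorphism), and take $\rho(\GS_X)$ as the interpretation of $\top$. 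I must then verify that $\rho(\GS_X)$ is a \emph{greatest element} of $A$: for every language $L$, we have $\rho(L)\subseteq\rho(\GS_X)$ because $L\subseteq\GS_X$ and $\rho$ is monotone (it preserves union). This makes $A$ a generalised $\St$-algebra of relations, and $\rho$ an $\St$-algebra homomorphism by construction, completing the embedding.

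\medskip

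\noindent\emph{Main obstacle.} The routine algebra—preservation of the regular operations and injectivity—follows the classical Cayley/Pratt pattern and the associativity of coalesced product, so I do not expect difficulty there. The step requiring genuine care is verifying that $\rho(\GS_X)$ is genuinely the greatest element of the chosen subalgebra and that the subalgebra is closed appropriately: one must be sure that taking the image of $\rho$ (rather than all relations on $\GS_X$) is what yields a greatest, not full, relation, and that this greatest element really is $\rho(\top)$ so that $\rho$ respects the constant $\top$. The subtlety is precisely the one flagged in Remark~\ref{rem:bug} and in the footnote defining \RELP: we cannot hope for $\top$ to be the full relation, so the whole construction is engineered so that the partiality of coalesced product turns the full language into a strictly smaller—yet still greatest—relation. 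Getting the book-keeping on final/initial atoms exactly right in the definition of $\rho$ is the one place where an error would break either homomorphism or the greatest-element property.
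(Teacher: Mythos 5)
Your proposal is correct and follows essentially the same route as the paper: the same Cayley/Pratt-style map $L\mapsto\set{\tuple{u,u\diamond v}\mid u\in\GS_X,\ v\in L}$, the same interpretation of $\top$ as the image of the full language, and the same injectivity argument via length-zero guarded strings. The only (immaterial) difference is the choice of ambient subalgebra—you take the image of the embedding, while the paper takes the larger subalgebra $M(X)$ of all relations $R$ with $\rin Ruv$ implying $u$ is a prefix of $v$—and both choices make $\rho(\GS_X)$ the greatest element.
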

\begin{proof}
  We adapt the technique used by Pratt for Kleene algebras (without top)~\cite[third page]{pratt80:cayley} and later reused by Kozen and Smith for Kleene algebras with tests~\cite[Lemma~5]{kozens96:kat:completeness:decidability}.
  For a set $X$, let $M(X)$ be the set of relations $R$ on $\GS_X$ such that for all guarded strings $u,v$, $u$ is a prefix of $v$ whenever $\rin R u v$. The $S$-operations on relations restrict to $M(X)$, so that $M(X)$ is an $S$-algebra, and setting $\top\eqdef\set{\tuple {u, u\diamond v}\mid u,v \in \GS_X}$ turns it into a member of \RELP.
  We embed the member $\pow(\GS_X)$ of \GSL into $M(X)$ as follows:
  \begin{align*}
    \iota\colon &\pow(\GS_X) \to M(X)\\
           &L\mapsto\set{\tuple{u,u\diamond v}\mid u\in \GS_X,~v\in L}
  \end{align*}
  The function $\iota$ is easily shown to be an $\St$-algebra homomorphism, and it is injective (since, e.g., $L=\set{\alpha u\mid \tuple{\alpha,\alpha u}\in\iota(L)}$).
\end{proof}
Note that it is crucial that we consider \RELP rather than \REL here: the above construction would not give an $\St$-algebra homomorphism if we were not restricting to relations of a certain shape: $\top$ would not be preserved.

\begin{cor}
  \label{cor:relp}
  For all expressions, we have
  \begin{align*}
    \GSL\models e=f
    \iff
    \RELP\models e=f
    \iff
    \KATT\proves e=f
  \end{align*}
\end{cor}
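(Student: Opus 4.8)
The goal is to prove Corollary~\ref{cor:relp}, a chain of three equivalences relating \GSL, \RELP, and \KATT. The plan is to derive it by combining the already-established Theorem~\ref{thm:lang} with the embedding of Proposition~\ref{prop:embed}, so that the only genuine work is stitching together soundness and completeness in the right direction.

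First I would recall what is already in hand. Theorem~\ref{thm:lang} gives $\GSL\models e=f \iff \KATT\proves e=f$, so the first and third members of the chain are already known to coincide. It therefore suffices to insert \RELP into this loop, i.e.\ to show the two implications $\RELP\models e=f \Rightarrow \GSL\models e=f$ and $\KATT\proves e=f \Rightarrow \RELP\models e=f$, closing the cycle. I would present the reasoning as a cycle of implications:
\begin{align*}
  \KATT\proves e=f
  &\Rightarrow \RELP\models e=f\\
  &\Rightarrow \GSL\models e=f\\
  &\Rightarrow \KATT\proves e=f
\end{align*}
where the last implication is exactly Theorem~\ref{thm:lang}.

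For the implication $\KATT\proves e=f \Rightarrow \RELP\models e=f$, the argument is soundness of the \KATT axioms over \RELP. The regular \KA axioms and the test axioms~\eqref{ax:atoms} hold in any $S$-subalgebra of relations by the usual verification, and crucially axiom~\eqref{ax:T}, namely $x\leq\top$, holds precisely because in a member of \RELP the constant $\top$ is interpreted as a greatest element of the subalgebra; note that we do not need axiom~\eqref{ax:F} here, which is the whole point, since \eqref{ax:F} is known to fail in \RELP. For the implication $\RELP\models e=f \Rightarrow \GSL\models e=f$, I would use Proposition~\ref{prop:embed}: given any member $A$ of \GSL and valuation witnessing $\hat\sigma_A(e)\neq\hat\sigma_A(f)$, the embedding $\iota\colon A\to M(X)$ is an injective $\St$-algebra homomorphism into a member of \RELP, so $\iota(\hat\sigma_A(e))=\hat\sigma_{M(X)}(\iota\circ\sigma)(e)$ by homomorphism, and injectivity of $\iota$ transfers any inequality in $A$ back from $M(X)$; contrapositively, if an equation holds throughout \RELP it holds in the \GSL-member $A$ under every valuation.

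I do not expect any single step to be a serious obstacle, since all the heavy lifting was done in Theorem~\ref{thm:lang} and Proposition~\ref{prop:embed}; the main point requiring care is the homomorphism bookkeeping in the embedding direction, namely checking that $\iota$ being an $\St$-homomorphism (so that it commutes with $\hat\sigma$ and preserves $\top$) together with its injectivity genuinely transports validity from \RELP down to the specific \GSL-member. The only mildly delicate conceptual issue is making sure soundness is invoked against the correct class: \eqref{ax:T} must be checked against an arbitrary greatest element rather than the full relation, which is exactly why \RELP rather than \REL is the right target, and this is precisely the distinction flagged in the remark following Proposition~\ref{prop:embed}.
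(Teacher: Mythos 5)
Your proof is correct and follows essentially the same route as the paper: a cycle of implications using soundness of the \KATT axioms over \RELP, the embedding of Proposition~\ref{prop:embed} to pass from \RELP-validity to \GSL-validity, and Theorem~\ref{thm:lang} to close the loop. The extra bookkeeping you spell out (injectivity and $\St$-homomorphism preservation for $\iota$, and checking axiom~\eqref{ax:T} against a greatest element rather than the full relation) is accurate and matches what the paper leaves implicit.
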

\begin{proof}
  That $\RELP\models e=f$ entails $\GSL\models e=f$ is a direct consequence of Proposition~\ref{prop:embed}. That $\KATT\proves e=f$ entails $\RELP\models e=f$ follows from the soundness of $\KATT$ axioms w.r.t. \RELP. We conclude by Theorem~\ref{thm:lang}.
\end{proof}

Similarly to \RELP, we can define a class \GSLP of $\St$-algebras which is closed under taking subalgebras and where $\top$ is not necessarily the full language. However, unlike with \RELP and \REL, the equational theory of \GSLP coincides with that of \GSL (and \RELP). Indeed the axioms of \KATT remain sound for \GSLP.

\section{Conclusion}
\label{sec:ccl}

We have proved completeness of two axiomatic systems about regular expressions with tests and top, \KATT and \KATF, with respect to guarded string language models and relational models, respectively.
We have established that the corresponding equational theories are \pspace-complete, and that they can be reconciled by allowing relational models where top is only a maximal element, not necessarily the full relation.

For \KATF, we have proved a graph-theoretical characterisation of the equational theory of binary relations, we have established a relationship between this graph-theoretical characterisation and a notion of closed guarded string language, and we have used an extension of the theory of finite monoid recognition for guarded string languages.

\paragraph*{Related work}

Zhang et al. gave a completeness result for \KATT, in terms of guarded string languages~\cite[Theorem~9]{ZhangAG22}. They observed that this axiomatisation is incomplete for \REL, that it does not suffice to properly express \emph{incorrectness triples}, and they left the existence of a complete axiomatisation for relational models open.
Our Theorem~\ref{thm:rel} gives a positive answer to this question.

For the theory \KATT, the main completeness results of Zhang et al.~\cite[Theorems~7 and~9]{ZhangAG22} are wrong: the model of guarded strings they designed equates too many expressions (namely, $\Sigma^*$ and $\top$---see Remark~\ref{rem:bug}).
Our Theorem~\ref{thm:lang} uses a slightly different language model and yields a linear reduction from \KATT to \KAT, so that, e.g., \cite[Theorem~10]{ZhangAG22} about the complexity of \KATT remains true.

Zhang et al. also gave a completeness result w.r.t.\ generalised relational models~\cite[Theorem~8]{ZhangAG22}. Their proof is problematic because it relies on their Theorem~7, but the key idea remains valid: adapting Pratt's trick to embed language models into relational ones. We use the very same technique to obtain Corollary~\ref{cor:relp}.

\medskip

We recently proved a completeness result for \KATF~\cite[Section~7]{prw:hypotheses:journal}, w.r.t.\ a notion of closed language defined differently than in the present work. There the emphasis is on modularity, complexity aspects are not considered, and KAT with a top element is an example among others. The closed language model defined there is most probably equivalent to the one we use in the present paper (using arguments likes the one developed in~\cite[Appendix~C]{prw:hypotheses:journal} for plain KAT, which we would like to generalise in the future).
The two completeness proofs are rather different. The present one is more direct, uses guarded strings and finite monoids, and yields a \pspace algorithm. In contrast, the one in~\cite{prw:hypotheses:journal} avoids guarded strings but requires more general results about Kleene algebra with hypotheses, and does not give any reasonable algorithm. Our characterisations of the equational theory of \REL (Theorem~\ref{thm:rel:graphs}, Proposition~\ref{prop:rel}) also lie out of the scope of~\cite{prw:hypotheses:journal}.

\paragraph*{Future work}

A Hoare triple $\set\alpha e \set{\beta}$ for partial correctness can be encoded in KAT as an equation
$\alpha\cdot e\cdot\lnot\beta = 0$~\cite{kozen00:kat:hoare}. Since hypotheses of the more general shape $e=0$ can be incorporated into the equational theory of KAT~\cite{cohen94:ka:hypotheses,hardink02:kat:hypotheses}, one can automate reasoning about partial correctness~\cite{pous:itp13:ra}.

Zhang et al.~\cite{ZhangAG22} have shown how to encode an incorrectness triple $[\alpha]e[\beta]$ as an inequation $\beta \leq \top\cdot \alpha\cdot e$. A natural question is whether such hypotheses can also be eliminated in \KATF, in order to automate reasoning about incorrectness triples.
The modular tools we developed in~\cite{prw:ramics21:mkah,prw:hypotheses:journal} could prove useful, provided we find a way to extract efficient algorithms from the resulting reductions.

\paragraph*{Acknowledgements}

We would like to thank Paul Brunet, Amina Doumane, and Jurriaan Rot for the discussions that eventually led to this work, and the CONCUR'22 reviewers for all their comments.
We also thank Denis Kuperberg for showing us how the roots of a regular language could be computed via finite monoids, which is the key idea that led to the construction in Section~\ref{ssec:rel:compl:clang}.

\bibliographystyle{alphaurl}
\bibliography{pous,main}

\newcommand{\etalchar}[1]{$^{#1}$}
\begin{thebibliography}{AFG{\etalchar{+}}14}

\bibitem[AB95]{AB95}
Hajnal Andréka and Dmitry~A. Bredikhin.
\newblock The equational theory of union-free algebras of relations.
\newblock {\em Algebra Universalis}, 33(4):516--532, 1995.
\newblock \href {https://doi.org/10.1007/BF01225472}
  {\path{doi:10.1007/BF01225472}}.

\bibitem[AFG{\etalchar{+}}14]{AndersonFGJKSW14:netkat}
Carolyn~Jane Anderson, Nate Foster, Arjun Guha, Jean{-}Baptiste Jeannin, Dexter
  Kozen, Cole Schlesinger, and David Walker.
\newblock Netkat: semantic foundations for networks.
\newblock In Suresh Jagannathan and Peter Sewell, editors, {\em POPL}, pages
  113--126. ACM, 2014.
\newblock \href {https://doi.org/10.1145/2535838.2535862}
  {\path{doi:10.1145/2535838.2535862}}.

\bibitem[AM11]{AM11}
Hajnal Andr{\'e}ka and Szabolcs Mikul{\'a}s.
\newblock Axiomatizability of positive algebras of binary relations.
\newblock {\em Algebra Universalis}, 66(1):7--34, 2011.
\newblock \href {https://doi.org/10.1007/s00012-011-0142-3}
  {\path{doi:10.1007/s00012-011-0142-3}}.

\bibitem[AMN11]{AMN11}
Hajnal Andr{\'e}ka, Szabolcs Mikul{\'a}s, and Istv{\'a}n N{\'e}meti.
\newblock The equational theory of {K}leene lattices.
\newblock {\em Theoretical Computer Science}, 412(52):7099--7108, 2011.
\newblock \href {https://doi.org/10.1016/j.tcs.2011.09.024}
  {\path{doi:10.1016/j.tcs.2011.09.024}}.

\bibitem[Ant96]{Antimirov96}
Valentin~M. Antimirov.
\newblock Partial derivatives of regular expressions and finite automaton
  constructions.
\newblock {\em Theoretical Computer Science}, 155(2):291--319, 1996.
\newblock \href {https://doi.org/10.1016/0304-3975(95)00182-4}
  {\path{doi:10.1016/0304-3975(95)00182-4}}.

\bibitem[B{\'E}S95]{BES95}
Stephen~L. Bloom, Zolt{\'a}n {\'E}sik, and Gheorghe Stefanescu.
\newblock Notes on equational theories of relations.
\newblock {\em {A}lgebra {U}niversalis}, 33(1):98--126, 1995.
\newblock \href {https://doi.org/10.1007/BF01190768}
  {\path{doi:10.1007/BF01190768}}.

\bibitem[Bof90]{Boffa90}
Maurice Boffa.
\newblock Une remarque sur les syst{\`e}mes complets d'identit{\'e}s
  rationnelles.
\newblock {\em Informatique Théorique et Applications}, 24:419--428, 1990.
\newblock URL:
  \url{http://archive.numdam.org/article/ITA_1990__24_4_419_0.pdf}.

\bibitem[Bof95]{Boffa95}
Maurice Boffa.
\newblock Une condition impliquant toutes les identit{\'e}s rationnelles.
\newblock {\em Informatique Théorique et Applications}, 29(6):515--518, 1995.
\newblock URL: \url{http://www.numdam.org/article/ITA_1995__29_6_515_0.pdf}.

\bibitem[BP15]{bp:lics15:paka}
Paul Brunet and Damien Pous.
\newblock Petri automata for {K}leene allegories.
\newblock In {\em LICS}, pages 68--79. ACM, 2015.
\newblock \href {https://doi.org/10.1109/LICS.2015.17}
  {\path{doi:10.1109/LICS.2015.17}}.

\bibitem[BP16]{bp:jlamp16:kac}
Paul Brunet and Damien Pous.
\newblock Algorithms for {K}leene algebra with converse.
\newblock {\em Journal of Logical and Algebraic Methods in Programming},
  85(4):574--594, 2016.
\newblock \href {https://doi.org/10.1016/j.jlamp.2015.07.005}
  {\path{doi:10.1016/j.jlamp.2015.07.005}}.

\bibitem[CKS96]{cohenks96:kat:complexity}
Ernie Cohen, Dexter Kozen, and Frederick Smith.
\newblock The complexity of {K}leene algebra with tests.
\newblock Technical Report TR96-1598, CS Dpt., Cornell University, 1996.
\newblock URL: \url{http://www.cs.cornell.edu/~kozen/papers/ckat.pdf}.

\bibitem[Coh94]{cohen94:ka:hypotheses}
Ernie Cohen.
\newblock Hypotheses in {K}leene algebra.
\newblock Technical report, Bellcore, Morristown, N.J., 1994.
\newblock URL:
  \url{http://www.researchgate.net/publication/2648968_Hypotheses_in_Kleene_Algebra}.

\bibitem[Con71]{conway71}
John~Horton Conway.
\newblock {\em Regular Algebra and Finite Machines}.
\newblock Chapman and Hall mathematics series. Chapman and Hall, 1971.
\newblock URL: \url{https://books.google.nl/books?id=1KAXc5TpEV8C}.

\bibitem[DDP18]{ddp:lpar18:lefthanded}
Anupam Das, Amina Doumane, and Damien Pous.
\newblock Left-handed completeness for {K}leene algebra, via cyclic proofs.
\newblock In {\em LPAR}, volume~57 of {\em EPiC Series in Computing}, pages
  271--289. Easychair, 2018.
\newblock \href {https://doi.org/10.29007/hzq3} {\path{doi:10.29007/hzq3}}.

\bibitem[DKPP19]{dkpp:fossacs19:kah}
Amina Doumane, Denis Kuperberg, Pierre Pradic, and Damien Pous.
\newblock Kleene algebra with hypotheses.
\newblock In {\em FoSSaCS}, volume 11425 of {\em LNCS}, pages 207--223.
  Springer, 2019.
\newblock \href {https://doi.org/10.1007/978-3-030-17127-8_12}
  {\path{doi:10.1007/978-3-030-17127-8_12}}.

\bibitem[DP18]{dp:concur18:kl}
Amina Doumane and Damien Pous.
\newblock Completeness for identity-free {K}leene lattices.
\newblock In {\em CONCUR}, volume 118 of {\em LIPIcs}, pages 18:1--18:17.
  Schloss Dagstuhl, 2018.
\newblock \href {https://doi.org/10.4230/LIPIcs.CONCUR.2018.18}
  {\path{doi:10.4230/LIPIcs.CONCUR.2018.18}}.

\bibitem[{\'E}B95]{EB95}
Zolt{\'a}n {\'E}sik and Laszlo Bern{\'a}tsky.
\newblock Equational properties of {K}leene algebras of relations with
  conversion.
\newblock {\em Theoretical Computer Science}, 137(2):237--251, 1995.
\newblock \href {https://doi.org/10.1016/0304-3975(94)00041-G}
  {\path{doi:10.1016/0304-3975(94)00041-G}}.

\bibitem[Eil74]{eilenberg1974automata}
Samuel Eilenberg.
\newblock {\em Automata, Languages, and Machines}.
\newblock Automata, Languages, and Machines. Academic Press, 1974.
\newblock URL: \url{https://books.google.fr/books?id=VaFyxwEACAAJ}.

\bibitem[FS90]{FreydScedrov90}
Peter Freyd and Andre Scedrov.
\newblock {\em Categories, Allegories}.
\newblock North Holland, 1990.

\bibitem[HK02]{hardink02:kat:hypotheses}
Christopher Hardin and Dexter Kozen.
\newblock On the elimination of hypotheses in {K}leene algebra with tests.
\newblock Technical Report TR2002-1879, CS Dpt., Cornell University, October
  2002.
\newblock URL: \url{http://hdl.handle.net/1813/5855}.

\bibitem[HRS76]{HUNT1976}
Harry~B. Hunt, Daniel~J. Rosenkrantz, and Thomas~G. Szymanski.
\newblock On the equivalence, containment, and covering problems for the
  regular and context-free languages.
\newblock {\em Journal of Computer and System Sciences}, 12(2):222--268, 1976.
\newblock \href {https://doi.org/10.1016/S0022-0000(76)80038-4}
  {\path{doi:10.1016/S0022-0000(76)80038-4}}.

\bibitem[KBS{\etalchar{+}}20]{KappeB0WZ20}
Tobias Kapp{\'{e}}, Paul Brunet, Alexandra Silva, Jana Wagemaker, and Fabio
  Zanasi.
\newblock Concurrent {K}leene algebra with observations: from hypotheses to
  completeness.
\newblock In {\em FoSSaCS}, volume 12077 of {\em LNCS}, pages 381--400.
  Springer, 2020.
\newblock \href {https://doi.org/10.1007/978-3-030-45231-5\_20}
  {\path{doi:10.1007/978-3-030-45231-5\_20}}.

\bibitem[Kle56]{Kle56}
Stephen~Cole Kleene.
\newblock Representation of events in nerve nets and finite automata.
\newblock In {\em Automata Studies}, pages 3--41. Princeton University Press,
  1956.
\newblock URL:
  \url{http://www.rand.org/pubs/research_memoranda/2008/RM704.pdf}.

\bibitem[KN12]{KraussN12}
Alexander Krauss and Tobias Nipkow.
\newblock Proof pearl: Regular expression equivalence and relation algebra.
\newblock {\em Journal of Automated Reasoning}, 49(1):95--106, 2012.
\newblock \href {https://doi.org/10.1007/s10817-011-9223-4}
  {\path{doi:10.1007/s10817-011-9223-4}}.

\bibitem[Koz91]{Kozen91}
Dexter Kozen.
\newblock A completeness theorem for {K}leene {A}lgebras and the algebra of
  regular events.
\newblock In {\em LICS}, pages 214--225. IEEE Computer Society, 1991.
\newblock \href {https://doi.org/10.1109/LICS.1991.151646}
  {\path{doi:10.1109/LICS.1991.151646}}.

\bibitem[Koz97]{kozen:97:kat}
Dexter Kozen.
\newblock Kleene algebra with tests.
\newblock {\em ACM Transactions on Programming Languages and Systems},
  19(3):427--443, May 1997.
\newblock \href {https://doi.org/10.1145/256167.256195}
  {\path{doi:10.1145/256167.256195}}.

\bibitem[Koz00]{kozen00:kat:hoare}
Dexter Kozen.
\newblock On {H}oare logic and {K}leene algebra with tests.
\newblock {\em ACM Transactions on Computational Logic}, 1(1):60--76, 2000.
\newblock \href {https://doi.org/10.1145/343369.343378}
  {\path{doi:10.1145/343369.343378}}.

\bibitem[Koz17]{K17a}
Dexter Kozen.
\newblock On the coalgebraic theory of {K}leene algebra with tests.
\newblock In Can Ba{\c{s}}kent, Lawrence~S. Moss, and Ramaswamy Ramanujam,
  editors, {\em Rohit Parikh on Logic, Language and Society}, volume~11 of {\em
  Outstanding Contributions to Logic}, pages 279--298. Springer, March 2017.

\bibitem[KP00]{kozenp00:kat:compiler:opts}
Dexter Kozen and Maria-Christina Patron.
\newblock Certification of compiler optimizations using {K}leene algebra with
  tests.
\newblock In {\em CL2000}, volume 1861 of {\em LNAI}, pages 568--582. Springer,
  2000.
\newblock \href {https://doi.org/10.1007/3-540-44957-4_38}
  {\path{doi:10.1007/3-540-44957-4_38}}.

\bibitem[Kro91]{Krob91a}
Daniel Krob.
\newblock Complete systems of {B}-rational identities.
\newblock {\em Theoretical Computer Science}, 89(2):207--343, 1991.
\newblock \href {https://doi.org/10.1016/0304-3975(91)90395-I}
  {\path{doi:10.1016/0304-3975(91)90395-I}}.

\bibitem[KS96]{kozens96:kat:completeness:decidability}
Dexter Kozen and Frederick Smith.
\newblock Kleene algebra with tests: Completeness and decidability.
\newblock In {\em CSL}, volume 1258 of {\em LNCS}, pages 244--259. Springer,
  September 1996.
\newblock \href {https://doi.org/10.1007/3-540-63172-0_43}
  {\path{doi:10.1007/3-540-63172-0_43}}.

\bibitem[KS12]{KozenS12}
Dexter Kozen and Alexandra Silva.
\newblock Left-handed completeness.
\newblock In {\em RAMiCS}, volume 7560 of {\em LNCS}, pages 162--178. Springer,
  2012.
\newblock \href {https://doi.org/10.1007/978-3-642-33314-9\_11}
  {\path{doi:10.1007/978-3-642-33314-9\_11}}.

\bibitem[Mam17]{Mamouras17}
Konstantinos Mamouras.
\newblock Equational theories of abnormal termination based on {K}leene
  algebra.
\newblock In {\em FoSSaCS}, volume 10203 of {\em LNCS}, pages 88--105.
  Springer, 2017.
\newblock \href {https://doi.org/10.1007/978-3-662-54458-7\_6}
  {\path{doi:10.1007/978-3-662-54458-7\_6}}.

\bibitem[MS72]{MS72}
Albert~R. Meyer and Larry~J. Stockmeyer.
\newblock The equivalence problem for regular expressions with squaring
  requires exponential space.
\newblock In {\em SWAT}, pages 125--129. IEEE, 1972.
\newblock \href {https://doi.org/10.1109/SWAT.1972.29}
  {\path{doi:10.1109/SWAT.1972.29}}.

\bibitem[Nak17]{Nakamura17}
Yoshiki Nakamura.
\newblock Partial derivatives on graphs for {K}leene allegories.
\newblock In {\em LICS}, pages 1--12. IEEE, 2017.
\newblock \href {https://doi.org/10.1109/LICS.2017.8005132}
  {\path{doi:10.1109/LICS.2017.8005132}}.

\bibitem[O'H20]{OHearn20}
Peter~W. O'Hearn.
\newblock Incorrectness logic.
\newblock {\em Proc.\ {ACM} Program. Lang.}, 4({POPL}):10:1--10:32, 2020.
\newblock \href {https://doi.org/10.1145/3371078} {\path{doi:10.1145/3371078}}.

\bibitem[Pou13]{pous:itp13:ra}
Damien Pous.
\newblock {K}leene {A}lgebra with {T}ests and {C}oq tools for while programs.
\newblock In {\em ITP}, volume 7998 of {\em LNCS}, pages 180--196. Springer,
  2013.
\newblock \href {https://doi.org/10.1007/978-3-642-39634-2_15}
  {\path{doi:10.1007/978-3-642-39634-2_15}}.

\bibitem[Pou18]{pous:stacs18}
Damien Pous.
\newblock On the positive calculus of relations with transitive closure.
\newblock In {\em STACS}, volume~96 of {\em LIPIcs}, pages 3:1--3:16. Schloss
  Dagstuhl, 2018.
\newblock \href {https://doi.org/10.4230/LIPIcs.STACS.2018.3}
  {\path{doi:10.4230/LIPIcs.STACS.2018.3}}.

\bibitem[Pra80]{pratt80:cayley}
Vaughan~R. Pratt.
\newblock Dynamic algebras and the nature of induction.
\newblock In {\em ACM Symposium on Theory of Computing}, STOC '80, page
  22–28, New York, NY, USA, 1980. Association for Computing Machinery.
\newblock \href {https://doi.org/10.1145/800141.804649}
  {\path{doi:10.1145/800141.804649}}.

\bibitem[PRW21]{prw:ramics21:mkah}
Damien Pous, Jurriaan Rot, and Jana Wagemaker.
\newblock On tools for completeness of {K}leene algebra with hypotheses.
\newblock In {\em RAMiCS}, volume 13027 of {\em LNCS}, pages 378--395.
  Springer, 2021.
\newblock \href {https://doi.org/10.1007/978-3-030-88701-8\_23}
  {\path{doi:10.1007/978-3-030-88701-8\_23}}.

\bibitem[PRW24]{prw:hypotheses:journal}
Damien Pous, Jurriaan Rot, and Jana Wagemaker.
\newblock On tools for completeness of kleene algebra with hypotheses.
\newblock {\em Logical Methods in Computer Science}, 20(2), 2024.
\newblock \href {https://doi.org/10.46298/LMCS-20(2:8)2024}
  {\path{doi:10.46298/LMCS-20(2:8)2024}}.

\bibitem[PW22]{pw:concur22:katop}
Damien Pous and Jana Wagemaker.
\newblock Completeness theorems for {K}leene algebra with top.
\newblock In {\em CONCUR 2022}, volume 243 of {\em LIPIcs}. Schloss Dagstuhl,
  2022.
\newblock \href {https://doi.org/10.4230/LIPIcs.CONCUR.2022.26}
  {\path{doi:10.4230/LIPIcs.CONCUR.2022.26}}.

\bibitem[Red64]{redko1964algebra}
V.N. Redko.
\newblock On the algebra of commutative events.
\newblock {\em Ukrainian Math. Zh.}, 16:185--195, 1964.

\bibitem[Sak09]{sakarovitch_2009}
Jacques Sakarovitch.
\newblock {\em Elements of Automata Theory}.
\newblock Cambridge University Press, 2009.
\newblock \href {https://doi.org/10.1017/CBO9781139195218}
  {\path{doi:10.1017/CBO9781139195218}}.

\bibitem[Sav70]{Savitch}
Walter~J. Savitch.
\newblock Relationships between nondeterministic and deterministic tape
  complexities.
\newblock {\em Journal of Computer and System Sciences}, 4(2):177--192, 1970.
\newblock \href {https://doi.org/10.1016/S0022-0000(70)80006-X}
  {\path{doi:10.1016/S0022-0000(70)80006-X}}.

\bibitem[Tho68]{thompson68}
Ken Thompson.
\newblock Regular expression search algorithm.
\newblock {\em Communications of the ACM}, 11:419--422, 1968.
\newblock URL:
  \url{http://www.fing.edu.uy/inco/cursos/intropln/material/p419-thompson.pdf}.

\bibitem[ZdAG22]{ZhangAG22}
Cheng Zhang, Arthur~Azevedo de~Amorim, and Marco Gaboardi.
\newblock On incorrectness logic and {K}leene algebra with top and tests.
\newblock {\em Proc.\ {ACM} Program. Lang.}, 6({POPL}):1--30, 2022.
\newblock \href {https://doi.org/10.1145/3498690} {\path{doi:10.1145/3498690}}.

\end{thebibliography}

\clearpage
\appendix
\section{Proof of Theorem~\ref{thm:rel:graphs}}
\label{app:graphs}

We give here a proof of Theorem~\ref{thm:rel:graphs}.
Variants of this theorem appeared for Kleene allegories without top in~\cite[Theorem~6]{bp:lics15:paka}, and for Kleene allegories with top in~\cite[Theorem~16]{pous:stacs18}.

First we observe that valuations into relational models are very close to (potentially infinite) graphs in the sense of Definition~\ref{def:graph}: it suffices to adjoin to them an input and an output.
\begin{defi}[Graph of a valuation]
  \label{def:graph:val}
  Consider a member of \REL: relations on some set $X$ with a function $p\colon X\to\Atom$.
  Let $\sigma\colon\Sigma\to\rel X$ be a valuation of $\Sigma$ into relations on $X$.
  For all elements $i,j\in X$, we define the graph $\tuple{\sigma,i,j}\eqdef\tuple{X,F,p,i,j}$ where
  $F\eqdef\set{\tuple{x,a,y}\mid a\in\Sigma,~\tuple{x,y}\in \sigma(a)}$.
\end{defi}

The first key lemma characterises evaluation of expressions not using $0,+,\cdot^*$ in a relational model, in terms of graph homomorphisms.
In our case, expressions not using $0,+,\cdot^*$ can be represented by guarded strings.
Such a lemma appeared first in \cite[Lemma~3]{AB95} for a signature including intersection and converse, but not top nor tests. Under its original formulation, its extension to cover top is trivial once we realise that the graph of $\top$ should simply be a graph without edges and exactly two vertices (the input and the output).
\begin{lem}
  \label{lem:bredikhin}
  Let $\sigma\colon\Sigma\to\rel X$ be a valuation of $\Sigma$ into a member of \REL.
  For all guarded strings $u$, we have
  \begin{align*}
    \tuple{i,j}\in\hat\sigma(u) \iff \tuple{\sigma,i,j}\lhd g(u)
  \end{align*}
\end{lem}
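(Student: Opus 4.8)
The statement to prove is Lemma~\ref{lem:bredikhin}: for a valuation $\sigma\colon\Sigma\to\rel X$ into a member of \REL and a guarded string $u$, we have $\tuple{i,j}\in\hat\sigma(u)$ iff $\tuple{\sigma,i,j}\lhd g(u)$. The plan is to proceed by induction on the structure of $u$ (equivalently, on its length), treating $u$ as the expression obtained by reading its atoms and letters as constants composed with $\cdot$. The base cases are $u=\alpha$ (a single atom) and the two-vertex top fragment; the inductive step splits $u$ at its first letter or at a $\top$.

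First I would set up the correspondence between the two sides concretely. A homomorphism $\phi\colon g(u)\to \tuple{\sigma,i,j}$ must send the input vertex $0$ of $g(u)$ to $i$ and the output vertex $n$ to $j$, must respect node-labelling (so vertex $k$ of $g(u)$, labelled by the $k$th atom $\alpha_k$ of $u$, maps to some $x_k\in X$ with $p(x_k)=\alpha_k$), and must respect edges (so an $a$-labelled edge from $k$ to $k+1$ forces $\tuple{x_k,x_{k+1}}\in\sigma(a)$). Thus a homomorphism is exactly a sequence $i=x_0,x_1,\dots,x_n=j$ of elements of $X$ witnessing $\tuple{i,j}\in\hat\sigma(u)$: the atom constraints $p(x_k)=\alpha_k$ correspond to the sub-identity relations $\alpha_k$ in the composition, and the edge constraints correspond to the letter relations $\sigma(a)$. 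The key subtlety is $\top$: since the graph of $\top$ has two disconnected vertices (input and output) with no edge between them, a homomorphism imposes no constraint linking the vertices straddling a $\top$, exactly matching the fact that $\hat\sigma(\top)=X\times X$ relates any two elements.

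For the induction I would handle the cases explicitly. When $u=\alpha$, both sides say $i=j$ and $p(i)=\alpha$. When $u=u'\diamond(\alpha a\beta)$ so that $u$ ends in a letter $a$, I would use that $\hat\sigma$ is a homomorphism, so $\hat\sigma(u)=\hat\sigma(u')\cdot\sigma(a)$ composed with the appropriate sub-identities, and split the required vertex sequence / homomorphism at the last internal vertex. When the decomposition is at a $\top$, say $u=u'\top u''$, I would use that $\hat\sigma(u)=\hat\sigma(u')\cdot(X\times X)\cdot\hat\sigma(u'')$, so $\tuple{i,j}\in\hat\sigma(u)$ iff there exist $x,y$ with $\tuple{i,x}\in\hat\sigma(u')$ and $\tuple{y,j}\in\hat\sigma(u'')$ (with no constraint between $x$ and $y$), which matches gluing two homomorphisms on the two connected components of $g(u)$ separated by the $\top$-gap. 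Each direction is then routine given the induction hypothesis applied to $u'$ and $u''$.

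The main obstacle I anticipate is purely bookkeeping rather than conceptual: getting the node-labelling bijection between positions of the guarded string and vertices of $g(u)$ exactly right, and making sure the $\top$ case correctly reflects that a homomorphism need not be injective and places no edge constraint across the gap. One must be careful that the coalesced-product decomposition of $u$ lines up with the relational composition (the shared atom at a coalescing point corresponds to a single vertex labelled by that atom, enforcing $p(x_k)=\alpha_k$ once, not twice). Once the atom-labelling is tracked as a node-labelling condition preserved by $\phi$, the argument is a direct structural induction with no genuine difficulty beyond this alignment.
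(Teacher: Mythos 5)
Your proposal is correct and follows essentially the same route as the paper: structural induction on the guarded string, with base cases for a single atom and a single letter (where the graph of $\top$ has no edge and hence imposes no constraint beyond the node labels), and an inductive step that splits the guarded string at an atom and matches relational composition with the decomposition of a homomorphism out of a sequential composition of graphs. Your preliminary unfolding of what a homomorphism $g(u)\to\tuple{\sigma,i,j}$ concretely is (a vertex sequence $i=x_0,\dots,x_n=j$ with $p(x_k)=\alpha_k$ and edge constraints only at $\Sigma$-letters) is a nice way to make the induction routine, and is consistent with the paper's appeal to the composition lemma of Andréka and Bredikhin for the gluing step.
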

\begin{proof}
  By induction on $u$.
  \begin{itemize}
  \item if $u=\alpha$ is an atom, then both sides reduce to the condition $i=j \land p(i)=\alpha$;
  \item if $u=\alpha a \beta$ has length one
    \begin{itemize}
    \item if $a=\top$, then both sides reduce to the condition $p(i)=\alpha \land p(j)=\beta$;
    \item if $a\in\Sigma$, then both sides reduce to the conjunction of the previous condition and $\tuple{i,j}\in\sigma(a)$;
    \end{itemize}
  \item if $u=v\alpha w$ for two smaller guarded strings $v\alpha$ and $\alpha w$ then we have
    \begin{align*}
      &\tuple{i,j}\in\hat\sigma(v\alpha w)\\
        \tag{by definition}
      \Leftrightarrow~& \exists k,~\tuple{i,k}\in\hat\sigma(v)~\land~p(k)=\alpha~\land~\tuple{k,j}\in\hat\sigma(\alpha w)\\
      \Leftrightarrow~& \exists k,~\tuple{i,k}\in\hat\sigma(v\alpha)~\land~\tuple{k,j}\in\hat\sigma(\alpha w)\\
        \tag{by induction hypothesis on $v\alpha$ and $\alpha w$}
      \Leftrightarrow~& \exists k,~\tuple{\sigma,i,k}\lhd g(v\alpha)~\land~\tuple{\sigma,k,j}\lhd g(\alpha w)\\
      \Leftrightarrow~& \tuple{\sigma,i,j}\lhd g(v\alpha w)
    \end{align*}
    (The last equivalence comes from a simple analysis of the homomorphisms whose source is a sequential composition of two graphs---see, e.g., \cite[Lemma~2(ii)]{AB95}.)\qedhere
  \end{itemize}
\end{proof}

The second key lemma characterises the evaluation of an arbitrary expression in terms of (the evaluations of) the guarded strings in the language of that expression.
Variants of such a lemma often appear in the literature for \emph{star-continuous} models, rather than just relational ones (e.g., \cite[Lemma~4]{kozens96:kat:completeness:decidability}).
\begin{lem}
  \label{lem:expr:words}
  Let $\sigma\colon\Sigma\to\rel X$ be a valuation of $\Sigma$ into a member of \REL.
  For all expressions $e$, we have
  \begin{align*}
    \hat\sigma(e)=\bigcup_{u\in [e]}\hat\sigma(u)
  \end{align*}
\end{lem}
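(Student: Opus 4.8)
The plan is to prove the identity by structural induction on $e$, exploiting that $\hat\sigma$ is an $\St$-algebra homomorphism (so it preserves $+,\cdot,\cdot^*$ and all the constants) while $[\cdot]$ is an $S$-algebra homomorphism (so $[e+f]=[e]\cup[f]$, $[e\cdot f]=[e]\cdot[f]$, $[e^*]=[e]^*$). Both sides of the equation therefore decompose compatibly along the operations, and the work is concentrated in the constant, composition, and star cases.

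For the base cases I would compute directly. For $0$ the language $[0]$ is empty and both sides are $\emptyset$. For $1$ we have $[1]=\set{\alpha\mid\alpha\in\Atom}$, and $\bigcup_{\alpha}\hat\sigma(\alpha)=\bigcup_{\alpha}\set{\tuple{x,x}\mid p(x)=\alpha}$ recovers the identity relation precisely because $p$ assigns exactly one atom to each element. For an atom $\alpha$ the language is the singleton $\set{\alpha}$ and there is nothing to do. For a letter $a\in\Sigma$ we have $[a]=\set{\alpha a\beta\mid\alpha,\beta\in\Atom}$, and summing $\hat\sigma(\alpha)\cdot\sigma(a)\cdot\hat\sigma(\beta)$ over all $\alpha,\beta$ restores $\sigma(a)$; the case $a=\top$ is identical and yields the full relation. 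The case $e=e_1+e_2$ is then immediate from the induction hypotheses and distributivity of union over $[e_1]$ and $[e_2]$.

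The composition and star cases rest on one auxiliary observation that I would isolate first: for guarded strings $u,v$, relational composition agrees with the coalesced product, namely $\hat\sigma(u)\cdot\hat\sigma(v)=\hat\sigma(u\diamond v)$ when $u\diamond v$ is defined, and $\hat\sigma(u)\cdot\hat\sigma(v)=\emptyset$ otherwise. The positive part follows from idempotence of the atom relations, $\hat\sigma(\alpha)\cdot\hat\sigma(\alpha)=\hat\sigma(\alpha)$ (each $\hat\sigma(\alpha)$ being a sub-identity), which lets the two shared atoms at the junction collapse into one; the negative part follows from $\hat\sigma(\alpha)\cdot\hat\sigma(\beta)=\emptyset$ whenever $\alpha\neq\beta$. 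Granting this, the case $e=e_1\cdot e_2$ is handled by applying the induction hypotheses, distributing relational composition over the (possibly infinite) unions, invoking the auxiliary observation termwise, and recognising the surviving terms as exactly the guarded strings of $[e_1]\cdot[e_2]=[e_1\cdot e_2]$.

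For the star I would use that in a relational model $R^*=\bigcup_{n\in\NN}R^n$ and that composition is set-continuous (it distributes over arbitrary unions). Thus $\hat\sigma(e_1^*)=\bigcup_{n}\hat\sigma(e_1)^n$, and expanding each power with the induction hypothesis and the iterated auxiliary observation rewrites $\hat\sigma(e_1)^n$ as $\bigcup_{w\in[e_1]^n}\hat\sigma(w)$; taking the union over $n$ and using $[e_1^*]=\bigcup_n[e_1]^n$ concludes. The only step requiring genuine care is the auxiliary observation and, specifically, its interaction with atoms (the empty-composition and collapse phenomena above); every other step is bookkeeping made routine by the set-continuity of relational composition and star, which is exactly the star-continuity that makes this kind of ``fundamental theorem'' go through.
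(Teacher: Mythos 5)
Your proof is correct and follows essentially the same route as the paper, which disposes of this lemma in one line as ``an easy induction on $e$, using distributivity of $\cdot$ over arbitrary unions in \REL''. You have merely made explicit the details the paper leaves implicit, in particular the auxiliary observation that $\hat\sigma(u)\cdot\hat\sigma(v)$ equals $\hat\sigma(u\diamond v)$ when the coalesced product is defined and is empty otherwise, which is indeed the point where the atom relations (sub-identities, idempotent, pairwise disjoint) do the work.
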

\begin{proof}
  By an easy induction on $e$, using distributivity of $\cdot$ over arbitrary unions in \REL.
\end{proof}

Equipped with those two lemmas, we obtain the announced theorem.
\begin{thm}
  \label{thm:rel:clang}
  For all expressions $e,f$, we have:
  \begin{align*}
    \REL\models e\leq f \iff \forall u\in[e],~\exists v\in[f],~g(u)\lhd g(v)
  \end{align*}
\end{thm}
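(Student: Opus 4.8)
The plan is to reduce the whole statement to the two lemmas just established, which together translate membership in $\hat\sigma(e)$ into the existence of a graph homomorphism. First I would unfold the semantic side: $\REL\models e\leq f$ means that for every member of \REL, every valuation $\sigma\colon\Sigma\to\rel X$, and every pair $\tuple{i,j}$, we have $\tuple{i,j}\in\hat\sigma(e)$ implies $\tuple{i,j}\in\hat\sigma(f)$. By Lemma~\ref{lem:expr:words}, $\tuple{i,j}\in\hat\sigma(e)$ holds iff $\tuple{i,j}\in\hat\sigma(u)$ for some $u\in[e]$, and by Lemma~\ref{lem:bredikhin} the latter is equivalent to $\tuple{\sigma,i,j}\lhd g(u)$. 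So the inequation becomes: for every $\sigma,i,j$ and every $u\in[e]$ with $\tuple{\sigma,i,j}\lhd g(u)$, there is $v\in[f]$ with $\tuple{\sigma,i,j}\lhd g(v)$.

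For the backward (soundness) direction I would assume the right-hand homomorphism condition and fix $\sigma,i,j,u$ as above. The hypothesis supplies $v\in[f]$ with $g(u)\lhd g(v)$; since $\lhd$ is a preorder, from $\tuple{\sigma,i,j}\lhd g(u)$ and $g(u)\lhd g(v)$ I conclude $\tuple{\sigma,i,j}\lhd g(v)$ by composing homomorphisms, and then read back $\tuple{i,j}\in\hat\sigma(v)\subseteq\hat\sigma(f)$ using the two lemmas. This yields $\REL\models e\leq f$.

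For the forward (completeness) direction the key idea is to test the inequation on a generic, syntactically built model. Given $u\in[e]$, I would turn the graph $g(u)$ itself into a member of \REL: take its vertex set as the carrier $X$, its node-labelling as the map $p\colon X\to\Atom$, and let $\sigma(a)$ be the set of $a$-labelled edges. By construction the induced graph $\tuple{\sigma,\iota,o}$ of Definition~\ref{def:graph:val} coincides with $g(u)$, where $\iota,o$ are the input and output of $g(u)$, so the identity witnesses $\tuple{\sigma,\iota,o}\lhd g(u)$; Lemma~\ref{lem:bredikhin} then gives $\tuple{\iota,o}\in\hat\sigma(u)\subseteq\hat\sigma(e)$. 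Applying the assumption $\REL\models e\leq f$ to this model and pair, I obtain $\tuple{\iota,o}\in\hat\sigma(f)$, whence by the two lemmas some $v\in[f]$ satisfies $\tuple{\sigma,\iota,o}\lhd g(v)$, that is $g(u)\lhd g(v)$, exactly as required.

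The main obstacle is getting this generic-model step right: one must check that the structure built from $g(u)$ is a genuine member of \REL---in particular that $p$ is a total function, which holds because every vertex of a guarded-string graph carries exactly one atom---and that the graph $\tuple{\sigma,\iota,o}$ it induces is \emph{equal} to $g(u)$, so that the identity is a legitimate homomorphism. Once this is verified, the remainder is routine bookkeeping with Lemmas~\ref{lem:bredikhin} and~\ref{lem:expr:words} together with transitivity of $\lhd$.
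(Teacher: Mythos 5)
Your proposal is correct and follows essentially the same route as the paper: the backward direction is the same chain of equivalences through Lemmas~\ref{lem:expr:words} and~\ref{lem:bredikhin} plus transitivity of $\lhd$, and your ``generic model built from $g(u)$'' in the forward direction is exactly the paper's construction of relations on $[0;n]$ with $p$ reading off the atoms of $u$ and $\sigma(a)$ containing $\tuple{i,i+1}$ when the $i$th letter is $a$. No gaps.
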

\begin{proof}
  For the forward implication, assume $\REL\models e\leq f$ and let $u\in [e]$.
  Let $n$ be the length of $u$ and consider relations on $[0;n]$, a member of \REL with the function $p$ mapping $i\leq n$ to the $i$th atom in $u$. Define $\sigma\colon\Sigma\to\rel{[0;n]}$ by $\tuple{i,j}\in\sigma(a)$ if the $i$-th letter of $u$ is $a$ and $j=i+1$. The graph $g(u)$ is nothing but $\tuple{\sigma,0,n}$, so that
  we have $\tuple{0,n}\in\hat\sigma(u)$ by Lemma~\ref{lem:bredikhin}, using the identity graph homomorphism. Thus we consecutively get $\tuple{0,n}\in\hat\sigma(e)$ by Lemma~\ref{lem:expr:words},  $\tuple{0,n}\in\hat\sigma(f)$ by assumption, and $\tuple{0,n}\in\hat\sigma(v)$ for some $v\in [f]$ by
  Lemma~\ref{lem:expr:words} again.
  Lemma~\ref{lem:bredikhin} finally gives $g(u)=\tuple{\sigma,0,n}\lhd g(v)$, as required.

  For the backward implication, assume the right-hand side and let $\sigma\colon\Sigma\to\rel X$ be a valuation into a member of \REL.
  For all $i,j\in X$, we have
  \begin{align*}
    &\tuple{i,j}\in\hat\sigma(e)\\
    \tag{by Lemma~\ref{lem:expr:words}}
    \Leftrightarrow~&\tuple{i,j}\in\hat\sigma(u)\text{ for some }u\in[e]\\
    \tag{by Lemma~\ref{lem:bredikhin}}
    \Leftrightarrow~& \tuple{\sigma,i,j}\lhd g(u)\text{ for some }u\in[e]\\
    \tag{by assumption}
    \Rightarrow~& \tuple{\sigma,i,j}\lhd g(u)\text{ for some }u,v\text{ s.t. }v\in[f]\text{ and }g(u)\lhd g(v)\\
    \tag{by transitivity of $\lhd$}
    \Rightarrow~& \tuple{\sigma,i,j}\lhd g(v)\text{ for some }v\in[f]\\
    \tag{by Lemma~\ref{lem:bredikhin}}
    \Leftrightarrow~&\tuple{i,j}\in\hat\sigma(v)\text{ for some }v\in[f]\\
    \tag{by Lemma~\ref{lem:expr:words}}
    \Leftrightarrow~& \tuple{i,j}\in\hat\sigma(f)
  \end{align*}
  Whence $\hat\sigma(e)\subseteq\hat\sigma(f)$, and thus $\REL\models e\leq f$ as required.
\end{proof}

\end{document}